\newcolumntype{C}{>{\centering\arraybackslash}X} 
\def\namedlabel#1#2{\begingroup
    #2%
    \def\@currentlabel{#2}%
    \phantomsection\label{#1}\endgroup
}
\renewenvironment{abstract}{
\hfill\begin{minipage}{0.95\textwidth}
\rule{\textwidth}{1pt}}
{\par\noindent\rule{\textwidth}{1pt}\end{minipage}}
\renewenvironment{proof}{{\bfseries Proof.}}{\qed}
\renewcommand\@maketitle{
\hfill
\begin{minipage}{0.95\textwidth}
\vskip 2em
\let\footnote\thanks 
{\LARGE \@title \par }
\vskip 1.5em
{\large \@author \par}
\end{minipage}
\vskip 1em \par
}
\newtheorem{theorem}{Theorem}
\newtheorem{theoremP}{Theorem}
\newtheorem{lemma}{Lemma}
\newtheorem{assumption}{Assumption}
\newtheorem{assumptionP}{Assumption}
\title{\textbf{\huge{An efficient methodology to estimate the parameters of a two-dimensional chirp signal model}}}
\author[$\dagger$]{Rhythm Grover}
\author[$\dagger$,$\ddagger$]{Debasis Kundu}
\author[$\dagger$]{Amit Mitra}
\affil[$\dagger$]{Department of Mathematics and Statistics, Indian Institute of Technology Kanpur \\
Kanpur - 208016, India}
\affil[$\ddagger$]{Corresponding author. Email: kundu@iitk.ac.in}
\date{}
\begin{document}
\maketitle

\begin{abstract}
Abstract: In various capacities of statistical signal processing two-dimensional (2-D) chirp models have been considered significantly, particularly in image processing$-$ to model gray-scale and texture images, magnetic resonance imaging, optical imaging etc. In this paper we address the problem of estimation of the unknown parameters of a 2-D chirp model under the assumption that the errors are independently and identically distributed (i.i.d.). The key attribute of the proposed estimation procedure is that it is computationally more efficient than the least squares estimation method. Moreover, the proposed estimators are observed to have the same asymptotic properties as the least squares estimators, thus providing computational effectiveness without any compromise on the efficiency of the estimators. We extend the propounded estimation method to provide a sequential procedure to estimate the unknown parameters of a 2-D chirp model with multiple components 
and under the assumption of i.i.d.\ errors we study the large sample properties of these sequential estimators. Simulation studies and a synthetic data analysis show that the proposed estimators perform satisfactorily.

\end{abstract}
\section{Introduction}\label{sec:Intro}
A two-dimensional (2-D) chirp model has the following mathematical expression:
\begin{equation}\begin{split}\label{multiple_comp_model}
y(m,n)  = \sum_{k=1}^{p} \{A_k^0 \cos(\alpha_k^0 m + \beta_k^0 m^2 + \gamma_k^0 n + \delta_k^0 n^2) + B_k^0 \sin(\alpha_k^0 m + \beta_k^0 m^2 + \gamma_k^0 n + \delta_k^0 n^2)\} + X(m,n);\\
m = 1, \ldots, M; n = 1, \ldots, N.
\end{split}\end{equation}
Here, $y(m,n)$ is the observed signal data, and the parameters $A_k^0$s, $B_k^0$s are the amplitudes, $\alpha_k^0$s, $\gamma_k^0$s are the frequencies and $\beta_k^0$s, $\delta_k^0$s are the frequency rates. The random component $X(m,n)$ accounts for the noise component of the observed signal. In this paper, we assume that $X(m,n)$ is an independently and identically distributed (i.i.d.) random field. \\ \par

It can be seen that the model admits a decomposition of two components$-$ the deterministic component and the random component. The deterministic component represents a gray-scale texture and the random component makes the model more realistic for practical realisation. For illustration, we simulate data with a fixed set of model parameters. Figure \ref{fig:true_signal} represents the gray-scale texture corresponding to the simulated data without the noise component and Figure \ref{fig:noisy_signal} represents the contaminated texture image corresponding to the simulated data with the noise component. This clearly suggests that the 2-D chirp signal models can be used effectively in modelling and analysing black and white texture images. 
\begin{figure}[H]
\centering
\begin{minipage}{0.5\textwidth}
\includegraphics[scale = 0.5]{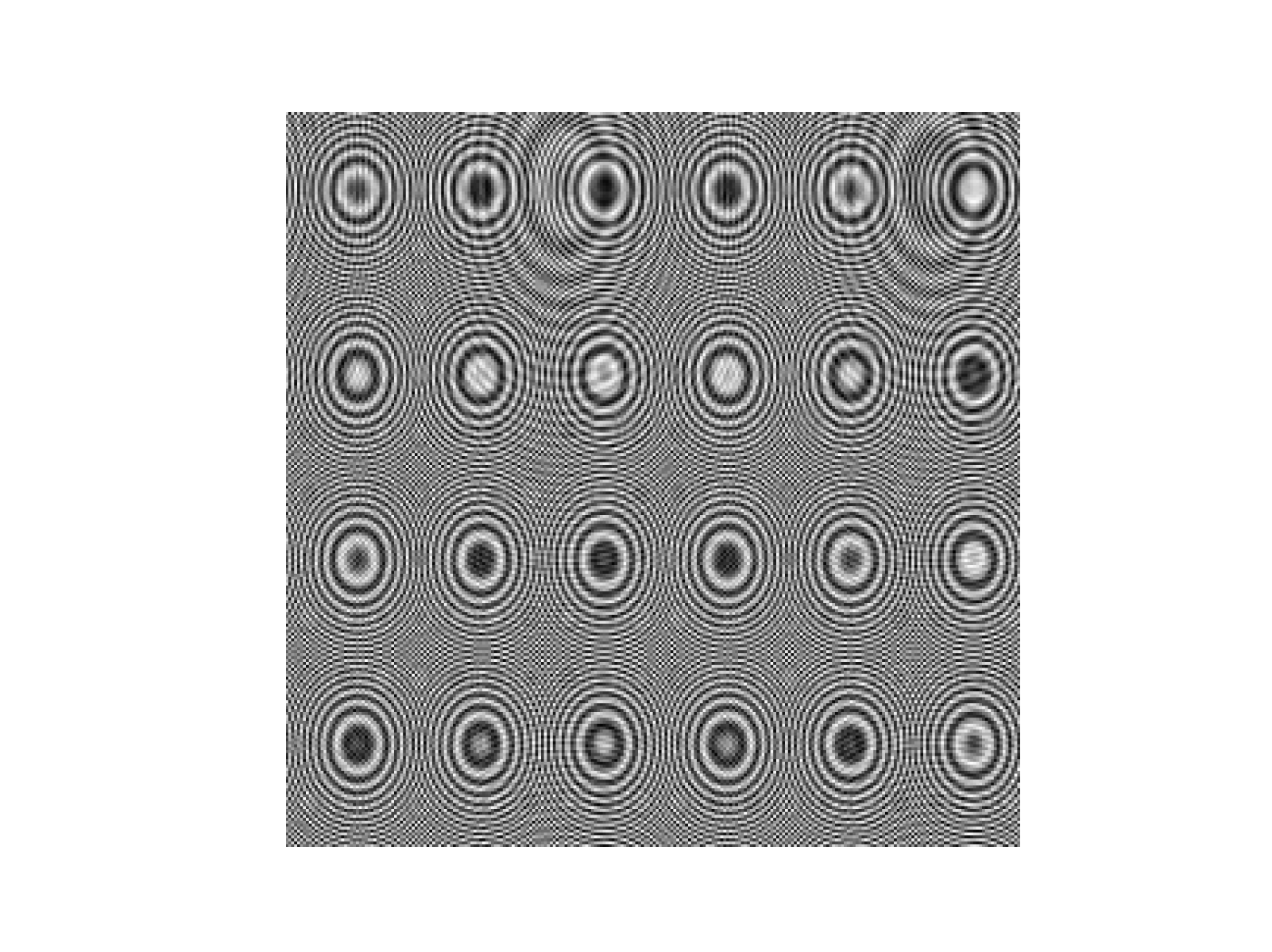}
\caption{Original texture.}
\label{fig:true_signal}
\end{minipage}%
\begin{minipage}{0.5\textwidth}
\includegraphics[scale= 0.5]{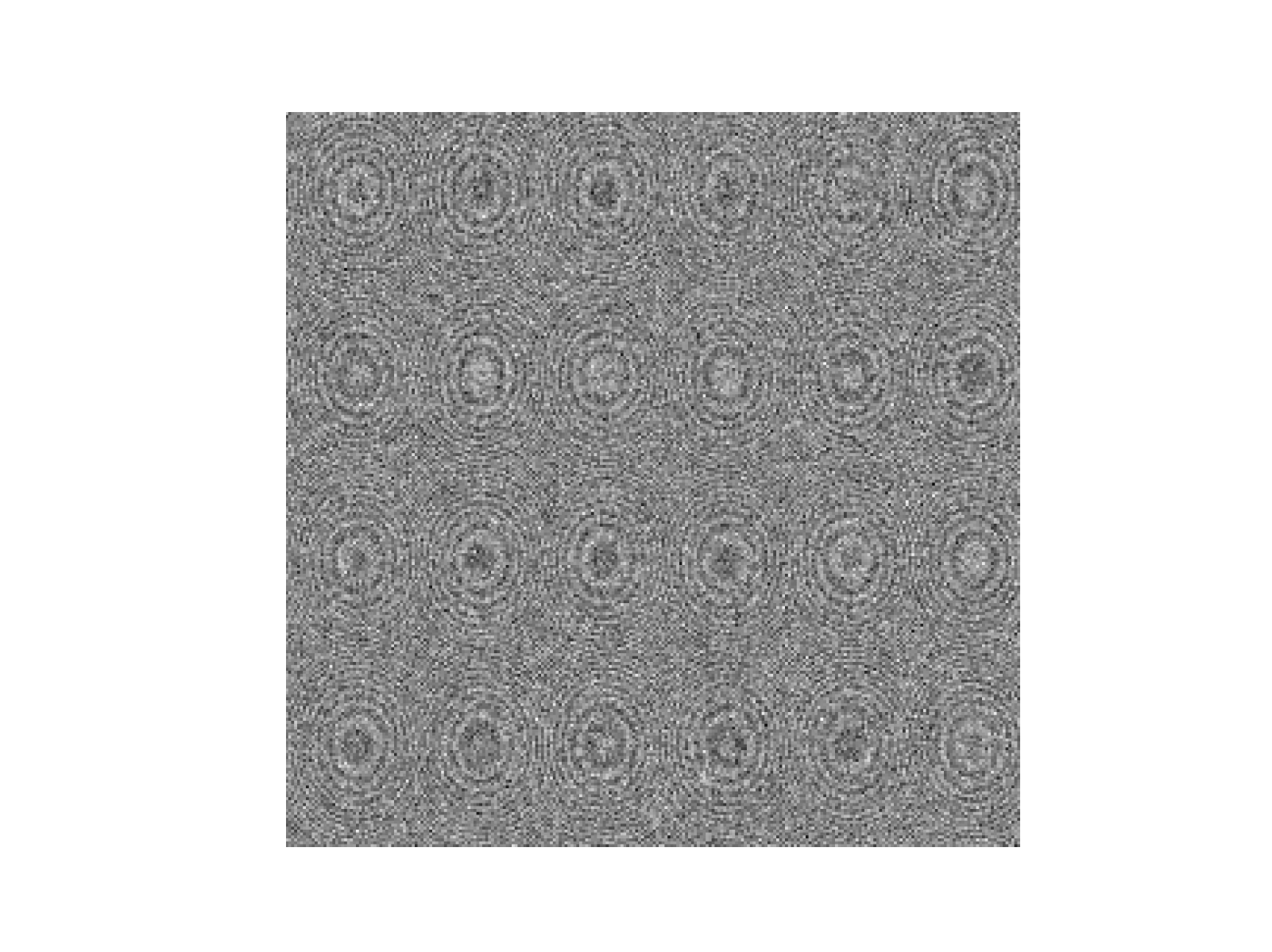}
\caption{Noisy texture.}
\label{fig:noisy_signal}
\end{minipage}
\end{figure}
Apart from the applications in image analysis, these signals are commonly observed in mobile telecommunications, surveillance systems, in radars and sonars etc. For more details on the applications, one may see the works of Francos and Friedlander \cite{1998}, \cite{1999},  Simeunovi$\acute{c}$ and Djurovi$\acute{c}$ \cite{2016} and Zhang et al.\ \cite{2008} and the references cited therein. \\ \par 

Parameter estimation of a 2-D chirp signal is an important statistical signal processing problem. Recently Zhang et al.\ \cite{2008}, Lahiri et al.\ \cite{2013_2} and Grover et al.\ \cite{2018_2} proposed some estimation methods of note. For instance, Zhang et al.\ \cite{2008} proposed an algorithm based on the product cubic phase function for the estimation of the frequency rates of the 2-D chirp signals under low signal to noise ratio and the assumption of stationary errors. They conducted simulations to verify the performance of the proposed estimation algorithm, however there was no study of the theoretical properties of the proposed estimators. Lahiri et al.\ \cite{2015} suggested the least squares estimation method. They observed that the least squares estimators (LSEs) of the unknown parameters of this model  are strongly consistent and asymptotically normally distributed under the assumption of stationary additive errors. The rates of convergence of the amplitude estimates were observed to be $M^{-1/2}N^{-1/2}$, of the frequencies estimates, they are $M^{-3/2}N^{-1/2}$ and $M^{-1/2}N^{-3/2}$ and of the frequency rate estimates, they are $M^{-5/2}N^{-1/2}$ and $M^{-1/2}N^{-5/2}$. Grover et al.\ \cite{2018_2} proposed the approximate least squares estimators (ALSEs), obtained by maximising a periodogram-type function and under the same stationary error assumptions, they observed that ALSEs are strongly consistent and asymptotically equivalent to the LSEs.\\ \par

A chirp signal is a particular case of the polynomial phase signal when the phase is a quadratic polynomial. Although work on parameter estimation of the aforementioned 2-D chirp model is rather limited, several authors have considered the more generalised version of this model$-$the 2-D polynomial phase signal model. For references, see Djurovi\'c et al.\ \cite{2010}, Djurovi\'c \cite{2017_2}, Francos and Friedlander \cite{1998,1999}, Friedlander and Francos \cite{1996}, Lahiri and Kundu \cite{2017}, Simeunovi\'c et al.\ \cite{2014},  Simeunovi$\acute{c}$ and Djurovi$\acute{c}$ \cite{2016} and Djurovi\'c and Simeunovi\'c \cite{2018_3}. \\ \par 

In this paper, we address the problem of parameter estimation of a one-component 2-D chirp model as well as the more general multiple-component 2-D chirp model. We put forward two methods for this purpose. The key characteristic of the proposed estimation method is that it reduces the foregoing 2-D chirp model into two 1-D chirp models. Thus, instead of fitting a 2-D chirp model, we are required to fit two 1-D chirp models to the given data matrix. For the fitting, we use a simple modification of the least squares estimation method. The proposed algorithm is numerically more efficient than the usual least squares estimation method proposed by Lahiri et al. \cite{2015}. For instance, for a one-component 2-D chirp model, to estimate the parameters using these algorithms, we need to solve two 2-D optimisation problems as opposed to a 4-D optimisation problem in the case of finding the LSEs. This also leads to curtailment of the number of grid points required to find the initial values of the non-linear parameters as the 4-D grid search required in case of the computation of the usual LSEs or ALSEs, reduces to two 2-D grid searches. Therefore, instead of searching along a grid mesh consisting of $M^3N^3$ points, we need to search among only $M^3 + N^3$ points, which is much more feasible to execute computationally. In essence, the contributions of this paper are three-fold:
\begin{enumerate}
\item [1.] We put forward a computationally efficient algorithm for the estimation of the unknown parameters of 2-D chirp signal models as a practical alternative to the usual least squares estimation method.
\item [2.] We examine the asymptotic properties of the proposed estimators under the assumption of i.i.d.\ errors and observe that the proposed estimators are strongly consistent and asymptotically normally distributed. In fact, they are observed to be asymptotically equivalent to the corresponding LSEs. When the errors are assumed to be Gaussian, the asymptotic variance-covariance matrix of the proposed estimators coincides with asymptotic Cram\'er-Rao lower bound. 
\item [3.] We conduct simulation experiments and analyse a synthetic texture (see Figure \ref{fig:noisy_signal}) to assess the effectiveness of the proposed estimators. 
\end{enumerate}

The rest of the paper is organised as follows. In the next section, we provide some preliminary results required to study the asymptotic properties of the proposed estimators. In Section \ref{sec:One_component_model}, we consider a one-component 2-D chirp model and state the model assumptions, some notations and present the proposed algorithms along with the asymptotic properties of the proposed estimators. In Section \ref{sec:Multiple_component_model}, we extend the algorithm and develop a sequential procedure to estimate the parameters of a multiple-component 2-D chirp model. We also study the asymptotic properties of the proposed sequential estimators in this section. We perform numerical experiments for different model parameters in Section \ref{sec:Simulation_studies} and analyse a synthetic data for illustration in Section \ref{sec:data_analysis}. Finally, we conclude the paper in Section \ref{sec:Conclusion} and we provide the proofs of all the theoretical claims in the appendices. 

\section{Preliminary Results}\label{sec:preliminary _results}
In this section, we provide the asymptotic results obtained for the usual LSEs of the unknown parameters of a 1-D chirp model by Lahiri et al. \cite{2015}. These results are later exploited to prove the asymptotic normality of the proposed estimators. 
\subsection{One-component 1-D Chirp Model}\label{sec:one_comp_1D}
Consider a 1-D chirp model with the following mathematical expression:
\begin{equation}\label{one_comp_1D_model}
y(t) = A^0 \cos(\alpha^0 t + \beta^0 t^2) + B^0 \sin(\alpha^0 t + \beta^0 t^2) + X(t).
\end{equation}
Here $y(t)$ is the observed data at time points $t = 1, \ldots, n$, $A^0$, $B^0$ are the amplitudes and $\alpha^0$ is the frequency and $\beta^0$ is the frequency rate parameter. $\{X(t)\}_{t=1}^{n}$ is the sequence of error random variables. \\

\noindent The LSEs of $\alpha^0$ and $\beta^0$ can be obtained by minimising the following reduced error sum of squares:
\begin{equation*}
R_{n}(\alpha, \beta) = Q_n(\hat{A}, \hat{B}, \alpha, \beta) = \textit{\textbf{Y}}^{\top}(\textbf{I} - \textbf{P}_{\textbf{Z}_n}(\alpha, \beta))\textit{\textbf{Y}}
\end{equation*}
where, $$Q_n(A, B, \alpha, \beta) = (\textit{\textbf{Y}} - \textbf{Z}_n(\alpha, \beta)\boldsymbol{\phi})^{\top}(\textit{\textbf{Y}} - \textbf{Z}_n(\alpha, \beta)\boldsymbol{\phi}),$$ is the error sum of squares, $\textbf{P}_{\textbf{Z}_n}(\alpha, \beta) = \textbf{Z}_n(\alpha, \beta)(\textbf{Z}_n(\alpha, \beta)^{\top}\textbf{Z}_n(\alpha, \beta))^{-1}\textbf{Z}_n(\alpha, \beta)^{\top}$ is the projection matrix on the column space of the matrix $\textbf{Z}_n(\alpha, \beta)$,
\begin{equation}\label{Z_definition}
\textbf{Z}_n(\alpha, \beta) = \begin{bmatrix}
\cos(\alpha + \beta) & \sin(\alpha + \beta) \\
\vdots & \vdots \\
\cos(n\alpha + n^2\beta) & \sin(n\alpha + n^2\beta) \\
\end{bmatrix},
\end{equation} $\textit{\textbf{Y}} = \begin{bmatrix}
y(1) \ldots y(n)
\end{bmatrix}^{\top}$ is the observed data vector and $\boldsymbol{\phi}  = \begin{bmatrix} A & B \end{bmatrix}^{T}$ is the the vector of linear parameters.  \\ \par
\noindent Following are the assumptions, we make on the error component and the parameters of model \eqref{one_comp_1D_model}:
\begin{assumptionP}\label{assump:P1} X(t) is a sequence of i.i.d.\ random variables with mean zero, variance $\sigma^2$ and finite fourth order moment.
\end{assumptionP}

\begin{assumptionP}\label{assump:P2} $(A^0, B^0, \alpha^0, \beta^0)$ is an interior point of the parameter space $\boldsymbol{\Theta} = (-K, K) \times (-K, K) \times (0, \pi) \times (0, \pi),$ where $K$ is a positive real number and ${A^0}^2 + {B^0}^2 > 0.$
\end{assumptionP}

\begin{theoremP}\label{theorem:preliminary_result_1}
Let us denote $\textit{\textbf{R}}'_{n}(\alpha, \beta)$ as the first derivative vector and $\textit{\textbf{R}}''_{n}(\alpha, \beta)$ as the second derivative matrix of the function $R_{n}(\alpha, \beta)$. Then, under the assumptions \ref{assump:P1} and \ref{assump:P2}, we have:
\begin{equation}\begin{split}\label{prelim_LSE_first_derivative_one_comp}
-\textit{\textbf{R}}'_{n}(\alpha^0, \beta^0)\boldsymbol{\Delta} \rightarrow \boldsymbol{\mathcal{N}}_2(\textbf{0}, 2\sigma^2 \boldsymbol{\Sigma}^{-1}),
\end{split}\end{equation} 
\begin{equation}\begin{split}\label{prelim_LSE_second_derivative_one_comp}
\boldsymbol{\Delta}\textit{\textbf{R}}''_{n}(\alpha^0, \beta^0)\boldsymbol{\Delta} \rightarrow \boldsymbol{\Sigma}^{-1}.
\end{split}\end{equation} 
Here, $\boldsymbol{\Delta} = \textnormal{diag}(\frac{1}{n\sqrt{n}}, \frac{1}{n^2\sqrt{n}})$, 
\begin{equation}\label{Sigma_definition}
\boldsymbol{\Sigma} = \frac{2}{{A^0}^2 + {B^0}^2} \begin{bmatrix}
  96 & -90 \\
 -90 & 90 \\ 
\end{bmatrix} \textnormal{ and }
\end{equation}
 \begin{equation}\label{Sigma_inverse_definition}
  \boldsymbol{\Sigma}^{-1} = \begin{bmatrix}
  \frac{{A^0}^2 + {B^0}^2}{12} & \frac{{A^0}^2 + {B^0}^2}{12} \\
  \frac{{A^0}^2 + {B^0}^2}{12} & \frac{4({A^0}^2 + {B^0}^2)}{45} \\ 
\end{bmatrix}.
\end{equation}
The notation $\boldsymbol{\mathcal{N}}_2(\boldsymbol{\mu}, \boldsymbol{\mathcal{V}})$ means bivariate normally distributed with mean vector $\boldsymbol{\mu}_{2 \times 1}$ and variance-covariance matrix $\boldsymbol{\mathcal{V}}_{2 \times 2}$. 
\end{theoremP}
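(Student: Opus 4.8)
The plan is to differentiate the reduced objective $R_n(\alpha,\beta)=\mathbf{Y}^{\top}(\mathbf{I}-\mathbf{P}_{\mathbf{Z}_n}(\alpha,\beta))\mathbf{Y}$ directly and to read off both conclusions from a single Gauss--Newton structure. Write $\mathbf{Z}^0=\mathbf{Z}_n(\alpha^0,\beta^0)$, $\mathbf{P}^0=\mathbf{P}_{\mathbf{Z}_n}(\alpha^0,\beta^0)$, $\boldsymbol{\phi}^0=[A^0\;B^0]^{\top}$ and $\mathbf{X}=[X(1),\dots,X(n)]^{\top}$, so that under the true model $\mathbf{Y}=\mathbf{Z}^0\boldsymbol{\phi}^0+\mathbf{X}$. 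The device I would lean on is the pair of vectors $\mathbf{U}_\theta=(\mathbf{I}-\mathbf{P}^0)(\partial_\theta\mathbf{Z}^0)\boldsymbol{\phi}^0$, $\theta\in\{\alpha,\beta\}$, collected (for each $n$) into $\mathbf{G}=[\mathbf{U}_\alpha\;\mathbf{U}_\beta]$. I would first establish the two structural identities
\begin{equation*}
\mathbf{R}'_n(\alpha^0,\beta^0)=-2\,\mathbf{G}^{\top}\mathbf{X}+\mathbf{r}_n,\qquad \mathbf{R}''_n(\alpha^0,\beta^0)=2\,\mathbf{G}^{\top}\mathbf{G}+\mathbf{S}_n,
\end{equation*}
where $\mathbf{r}_n,\mathbf{S}_n$ are remainders quadratic in $\mathbf{X}$ that are negligible after $\boldsymbol{\Delta}$-scaling: the gradient at truth is thus a weighted sum of the i.i.d.\ errors, while the Hessian is, to leading order, the deterministic Gram matrix $2\mathbf{G}^{\top}\mathbf{G}$. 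Throughout I would use the power-sum limits $\tfrac{1}{n^{k+1}}\sum_{t=1}^n t^k\to\tfrac1{k+1}$ together with the quadratic-phase oscillatory estimates $\tfrac{1}{n^{k+1}}\sum_{t=1}^n t^k\cos(2\alpha^0 t+2\beta^0 t^2)\to0$ and its sine analogue, valid for $(\alpha^0,\beta^0)\in(0,\pi)^2$; these let me replace $\cos^2\psi_t$, $\sin^2\psi_t$ by $\tfrac12$ and $\cos\psi_t\sin\psi_t$ by $0$ inside normalised power sums, with $\psi_t=\alpha^0 t+\beta^0 t^2$.

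The first step is to prove the two identities via the derivative-of-projection formula $\partial_\theta\mathbf{P}_{\mathbf{Z}_n}=(\mathbf{I}-\mathbf{P}_{\mathbf{Z}_n})(\partial_\theta\mathbf{Z}_n)(\mathbf{Z}_n^{\top}\mathbf{Z}_n)^{-1}\mathbf{Z}_n^{\top}+(\text{transpose})$. Applied to $\mathbf{Y}^{\top}\mathbf{P}_{\mathbf{Z}_n}\mathbf{Y}$ and evaluated at truth, the cancellation $(\mathbf{I}-\mathbf{P}^0)\mathbf{Z}^0\boldsymbol{\phi}^0=\mathbf{0}$ kills the signal in the residual, so $(\mathbf{I}-\mathbf{P}^0)\mathbf{Y}=(\mathbf{I}-\mathbf{P}^0)\mathbf{X}$ and the leading part of the gradient becomes $-2\mathbf{X}^{\top}(\mathbf{I}-\mathbf{P}^0)(\partial_\theta\mathbf{Z}^0)\boldsymbol{\phi}^0=-2\mathbf{X}^{\top}\mathbf{U}_\theta$, the discarded piece being quadratic in $\mathbf{X}$. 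Differentiating the deterministic part $f(\alpha,\beta)=(\mathbf{Z}^0\boldsymbol{\phi}^0)^{\top}(\mathbf{I}-\mathbf{P}(\alpha,\beta))(\mathbf{Z}^0\boldsymbol{\phi}^0)$ twice, and again using $(\mathbf{I}-\mathbf{P}^0)\mathbf{Z}^0\boldsymbol{\phi}^0=\mathbf{0}$ to annihilate every term carrying an undifferentiated residual, leaves the curvature $2\mathbf{U}_\theta^{\top}\mathbf{U}_{\theta'}$. A short computation using $(\mathbf{Z}^{0\top}\mathbf{Z}^0)^{-1}\sim\tfrac2n\mathbf{I}_2$ and the trigonometric reductions then identifies the $t$-th entry of $\mathbf{U}_\alpha$ as $(t-\tfrac n2)(B^0\cos\psi_t-A^0\sin\psi_t)$ and of $\mathbf{U}_\beta$ as $(t^2-\tfrac{n^2}{3})(B^0\cos\psi_t-A^0\sin\psi_t)$, where the centering $t-\tfrac n2$ and $t^2-\tfrac{n^2}{3}$ is precisely the effect of the projection $\mathbf{I}-\mathbf{P}^0$.

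With these in hand, the second step is to take limits. The sandwiched Gram matrix $\boldsymbol{\Delta}\mathbf{G}^{\top}\mathbf{G}\boldsymbol{\Delta}$ has entries that are normalised power sums: $n^{-3}\sum_t(t-\tfrac n2)^2(B^0\cos\psi_t-A^0\sin\psi_t)^2\to\tfrac{{A^0}^2+{B^0}^2}{2}\cdot\tfrac1{12}$, and analogously for the $(\beta,\beta)$ and cross entries using $n^{-5}\sum_t(t^2-\tfrac{n^2}{3})^2\to\tfrac4{45}$ and $n^{-4}\sum_t(t-\tfrac n2)(t^2-\tfrac{n^2}{3})\to\tfrac1{12}$, so that $\boldsymbol{\Delta}\mathbf{G}^{\top}\mathbf{G}\boldsymbol{\Delta}\to\tfrac12\boldsymbol{\Sigma}^{-1}$. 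Multiplying by $2$ gives \eqref{prelim_LSE_second_derivative_one_comp}, after checking $\boldsymbol{\Delta}\mathbf{S}_n\boldsymbol{\Delta}=o_p(1)$. For \eqref{prelim_LSE_first_derivative_one_comp} I would write $-\mathbf{R}'_n(\alpha^0,\beta^0)\boldsymbol{\Delta}=2\boldsymbol{\Delta}\mathbf{G}^{\top}\mathbf{X}+o_p(1)$ and invoke the Lindeberg--Feller central limit theorem for weighted sums of the i.i.d.\ errors (Assumption \ref{assump:P1} supplies mean zero, finite variance, and the negligibility of individual normalised weights required for the Lindeberg condition); the limiting covariance is $4\sigma^2\,\boldsymbol{\Delta}\mathbf{G}^{\top}\mathbf{G}\boldsymbol{\Delta}\to2\sigma^2\boldsymbol{\Sigma}^{-1}$, exactly the claimed matrix.

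The main obstacle is analytic rather than algebraic and sits in the estimates that legitimise the trigonometric reductions: the quadratic-phase exponential-sum limits $n^{-(k+1)}\sum_t t^k e^{\,i(2\alpha^0 t+2\beta^0 t^2)}\to0$ are genuinely number-theoretic and must be shown to hold uniformly enough to survive multiplication by the $(\mathbf{Z}_n^{\top}\mathbf{Z}_n)^{-1}$ factor and the expansion of the projection, since every covariance and curvature constant rests on them. The second, more routine, difficulty is to confirm that the quadratic-in-$\mathbf{X}$ remainders $\mathbf{r}_n,\mathbf{S}_n$ are negligible at the rates dictated by $\boldsymbol{\Delta}$; bounding the variances of these quadratic error forms is where the finite fourth moment in Assumption \ref{assump:P1} enters. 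Once these two points are secured, the central limit theorem for the gradient and the deterministic convergence of the Hessian follow at once.
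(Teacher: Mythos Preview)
Your proposal is correct. The paper itself does not give an independent proof of this theorem: its entire argument is the one-line citation ``This proof follows from Theorem~2 of Lahiri et al.\ \cite{2015}.'' What you have written is, in effect, a self-contained reconstruction of that cited result, and the details you outline check out. In particular, the identification of $\mathbf{U}_\alpha$ and $\mathbf{U}_\beta$ with the centered weights $(t-\tfrac{n}{2})(B^0\cos\psi_t-A^0\sin\psi_t)$ and $(t^2-\tfrac{n^2}{3})(B^0\cos\psi_t-A^0\sin\psi_t)$ is exactly right, the power-sum limits $\tfrac{1}{12}$, $\tfrac{4}{45}$, $\tfrac{1}{12}$ reproduce the entries of $\boldsymbol{\Sigma}^{-1}$ after multiplication by $2\cdot\tfrac{{A^0}^2+{B^0}^2}{2}$, and the covariance calculation $4\sigma^2\boldsymbol{\Delta}\mathbf{G}^{\top}\mathbf{G}\boldsymbol{\Delta}\to 2\sigma^2\boldsymbol{\Sigma}^{-1}$ is on the nose. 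You also correctly flag the two places where real work is needed: the quadratic-phase exponential-sum estimates (these are precisely the number-theoretic lemmas that Lahiri et al.\ develop and that the present paper repeatedly invokes as ``Lemma~1 of Lahiri et al.\ \cite{2015}''), and the fourth-moment bound for the quadratic-in-$\mathbf{X}$ remainders. Since the paper offers no argument beyond the citation, your direct Gauss--Newton decomposition is strictly more informative than the paper's own treatment.
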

\begin{proof}
This proof follows from Theorem 2 of Lahiri et al. \cite{2015}. \\
\end{proof}

\subsection{Multiple-component 1-D Chirp Model}\label{sec:multiple_comp_1D}
Now we consider a 1-D chirp model with multiple components, mathematically expressed as follows:
\begin{equation*}
y(t) = \sum_{k=1}^{p} \{A_k^0 \cos(\alpha_k^0 t + \beta_k^0 t^2) + B_k^0 \sin(\alpha_k^0 t + \beta_k^0 t^2)\} + X(t); \ t = 1, \ldots, n.
\end{equation*}
Here, $A_k^0$s, $B_k^0$s are the amplitudes, $\alpha_k^0$s are the frequencies and $\beta_k^0$ are the frequency rates, the parameters that characterise the observed signal $y(t)$ and $X(t)$ is the random noise component. \\

Lahiri et al.\ \cite{2015} suggested a sequential procedure to estimate the unknown parameters of the above model. We discuss in brief, the proposed sequential procedure and then state some of the asymptotic results they established, germane to our work.
\begin{description}
\item \namedlabel{itm:step1} {\textbf{Step 1:}}\ The first step of the sequential method is to estimate the non-linear parameters of the first component of the model, $\alpha_1^0$ and $\beta_1^0$, say $\hat{\alpha}_1$ and $\hat{\beta}_1$ by minimising the following reduced error sum of squares:
\begin{equation*}
R_{1,n}(\alpha, \beta) = \textit{\textbf{Y}}^{\top}(\textbf{I} - \textbf{P}_{\textbf{Z}_n}(\alpha, \beta))\textit{\textbf{Y}}
\end{equation*}
with respect to $\alpha$ and $\beta$ simultaneously.
\item \namedlabel{itm:step2} {\textbf{Step 2:}}\  Then the first component linear parameter estimates, $\hat{A}_1$ and $\hat{B}_1$ are obtained using the separable linear regression of Richards \cite{1961} as follows:
\begin{equation*}
\begin{bmatrix}
\hat{A}_1 \\ \hat{B}_1
\end{bmatrix} = [\textbf{Z}_n(\hat{\alpha}_1, \hat{\beta}_1)^{\top}\textbf{Z}_n(\hat{\alpha}_1, \hat{\beta}_1)]^{-1}\textbf{Z}_n(\hat{\alpha}_1, \hat{\beta}_1)^{\top}\textit{\textbf{Y}}.
\end{equation*}
\item \namedlabel{itm:step3} {\textbf{Step 3:}}\  Once we have the estimates of the first component parameters, we take out its effect from the original signal and obtain a new data vector as follows:
\begin{equation*}
\textit{\textbf{Y}}_1 = \textit{\textbf{Y}} - \textbf{Z}_n(\hat{\alpha}_1, \hat{\beta}_1)\begin{bmatrix}
\hat{A}_1 \\ \hat{B}_1
\end{bmatrix}.
\end{equation*}
\item \namedlabel{itm:step4} {\textbf{Step 4:}}\  Then the estimates of the second component parameters are obtained by using the new data vector and following the same procedure and the process is repeated $p$ times.
\end{description} 
\noindent Under the Assumption \ref{assump:P1} on the error random variables and the following assumption on the parameters:
\begin{assumptionP}\label{assump:P3} $(A_k^0, B_k^0, \alpha_k^0, \beta_k^0)$ is an interior point of $\boldsymbol{\Theta}$, for all $k = 1, \ldots, p$ and the frequencies and the frequency rates are such that $(\alpha_i^0, \beta_i^0) \neq (\alpha_j^0, \beta_j^0)$ $\forall i \neq j$. 
\end{assumptionP}
\begin{assumptionP}\label{assump:P4} $A_k^0$s and $B_k^0$s satisfy the following relationship:
\begin{equation*}
K^2 > {A_{1}^{0}}^2 + {B_{1}^{0}}^2 > {A_{2}^{0}}^2 + {B_{2}^{0}}^2 > \ldots > {A_{p}^{0}}^2 + {B_{p}^{0}}^2 > 0,
\end{equation*}
\end{assumptionP}
\noindent we have the following results.
\begin{theoremP}\label{theorem:preliminary_result_3}
Let us denote $\textit{\textbf{R}}'_{k,n}(\alpha, \beta)$ as the first derivative vector and $\textit{\textbf{R}}''_{k,n}(\alpha, \beta)$ as the second derivative matrix of the function $R_{k,n}(\alpha, \beta)$, $k = 1, \ldots, p$. Then, under the assumptions \ref{assump:P1}, \ref{assump:P3} and \ref{assump:P4}:
\begin{equation}\begin{split}\label{prelim_LSE_first_derivative_multiple_comp_1}
-\frac{1}{\sqrt{n}}\textit{\textbf{R}}'_{k,n}(\alpha^0, \beta^0)\boldsymbol{\Delta} \rightarrow 0,
\end{split}\end{equation}
\begin{equation}\begin{split}\label{prelim_LSE_first_derivative_multiple_comp_2}
-\textit{\textbf{R}}'_{k,n}(\alpha^0, \beta^0)\boldsymbol{\Delta} \rightarrow \boldsymbol{\mathcal{N}}_2(\textbf{0}, 2\sigma^2 \boldsymbol{\Sigma}_k^{-1}),
\end{split}\end{equation} 
\begin{equation}\begin{split}\label{prelim_LSE_second_derivative_multiple_comp}
\boldsymbol{\Delta}\textit{\textbf{R}}''_{k,n}(\alpha^0, \beta^0)\boldsymbol{\Delta} \rightarrow \boldsymbol{\Sigma}_k^{-1}.
\end{split}\end{equation} 
Here, $\boldsymbol{\Delta}$ is as defined in Theorem \ref{theorem:preliminary_result_1},
\begin{equation}\label{Sigma_k_definition}
\boldsymbol{\Sigma}_k = \frac{2}{{A_k^0}^2 + {B_k^0}^2} \begin{bmatrix}
  96 & -90 \\
 -90 & 90
\end{bmatrix} \textnormal{ and }
\end{equation} 
\begin{equation}\label{Sigma_k_inverse_definition}
 \boldsymbol{\Sigma}_k^{-1} = \begin{bmatrix}
  \frac{{A_k^0}^2 + {B_k^0}^2}{12} & \frac{{A_k^0}^2 + {B_k^0}^2}{12} \\
  \frac{{A_k^0}^2 + {B_k^0}^2}{12} & \frac{4({A_k^0}^2 + {B_k^0}^2)}{45}
\end{bmatrix}.
\end{equation} 
\end{theoremP}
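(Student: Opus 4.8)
The plan is to reduce the $k$-th step of the sequential procedure to the one-component situation of Theorem \ref{theorem:preliminary_result_1}, so that the limits in \eqref{prelim_LSE_first_derivative_multiple_comp_2} and \eqref{prelim_LSE_second_derivative_multiple_comp} emerge with $\boldsymbol{\Sigma}_k$ in place of $\boldsymbol{\Sigma}$. I would argue by induction on $k$. At the $k$-th step the relevant criterion is $R_{k,n}(\alpha,\beta) = \mathbf{Y}_{k-1}^{\top}(\mathbf{I}-\mathbf{P}_{\mathbf{Z}_n}(\alpha,\beta))\mathbf{Y}_{k-1}$, where $\mathbf{Y}_{k-1}$ is the residual obtained after subtracting the first $k-1$ fitted components. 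The first task is therefore to control the error incurred by using the estimated pairs $(\hat\alpha_j,\hat\beta_j)$, $j<k$, rather than the true ones: invoking the strong consistency and the convergence rates of the earlier sequential estimates (which rely on the amplitude ordering of Assumption \ref{assump:P4} and are established in Lahiri et al.\ \cite{2015}), together with a Taylor expansion of $\mathbf{Z}_n$ in its arguments, I would show that replacing $\mathbf{Y}_{k-1}$ by the ``oracle'' residual $\widetilde{\mathbf{Y}}_{k-1} = \sum_{j=k}^{p}(\text{$j$-th true component}) + \mathbf{X}$ perturbs the scaled derivatives $\boldsymbol{\Delta}\,\mathbf{R}'_{k,n}$ and $\boldsymbol{\Delta}\,\mathbf{R}''_{k,n}\boldsymbol{\Delta}$ only by terms vanishing in probability.

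With $\widetilde{\mathbf{Y}}_{k-1}$ in hand, I would evaluate the first and second derivatives of $R_{k,n}$ at $(\alpha_k^0,\beta_k^0)$ by differentiating the projection $\mathbf{P}_{\mathbf{Z}_n}(\alpha,\beta)$. Each derivative splits into a part bilinear in the noise $\mathbf{X}$, a part coupling the $k$-th component to itself, and cross-terms coupling the $k$-th component to the components $j>k$. The crux is a set of asymptotic-orthogonality estimates for chirps with distinct frequency$-$frequency-rate pairs: for $(\alpha_i^0,\beta_i^0)\neq(\alpha_j^0,\beta_j^0)$ and each relevant power $r$,
\begin{equation*}
\frac{1}{n^{r+1}}\sum_{t=1}^{n} t^{r}\,e^{\mathrm{i}(\alpha_i^0 t+\beta_i^0 t^2)}e^{-\mathrm{i}(\alpha_j^0 t+\beta_j^0 t^2)} \longrightarrow 0,
\end{equation*}
whereas the corresponding diagonal sums ($i=j$) converge to nonzero constants. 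Assumption \ref{assump:P3} is exactly what guarantees the off-diagonal sums are negligible, so every cross-term drops out after scaling by $\boldsymbol{\Delta}$, and the surviving self-coupling and noise contributions reproduce precisely the single-component expressions of Theorem \ref{theorem:preliminary_result_1} with amplitudes $A_k^0,B_k^0$. This yields \eqref{prelim_LSE_second_derivative_multiple_comp} and the normal limit \eqref{prelim_LSE_first_derivative_multiple_comp_2}, the latter following from a central limit theorem for the weighted noise sums under Assumption \ref{assump:P1}. The order statement \eqref{prelim_LSE_first_derivative_multiple_comp_1} is then immediate, since $-\mathbf{R}'_{k,n}(\alpha_k^0,\beta_k^0)\boldsymbol{\Delta}$ is $O_p(1)$ and is annihilated by the extra factor $1/\sqrt{n}$.

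I expect the main obstacle to be the two-layered bookkeeping of orders. First, the substitution-error argument must respect the two distinct rates encoded in $\boldsymbol{\Delta}=\mathrm{diag}(n^{-3/2},n^{-5/2})$: the frequency-rate direction is scaled more aggressively, so the remainder from the estimated earlier components must be shown small against the sharper $n^{-5/2}$ weight, which calls on the full strength of the convergence rate for $\hat\beta_j$. Second, the asymptotic-orthogonality estimates must hold for the powers of $t$ (up to $t^4$) that appear in $\mathbf{R}''_{k,n}$; handling the quadratic phase $\beta^0 t^2$ in these exponential sums$-$rather than pure sinusoids$-$is the technically delicate point, and I would treat it by approximating the sums by Fresnel-type integrals and applying stationary-phase/van der Corput bounds, or equivalently by citing the corresponding sum estimates already developed in Lahiri et al.\ \cite{2015}.
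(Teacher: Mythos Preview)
Your overall strategy is sound and is essentially the route taken in Lahiri et al.\ \cite{2015}, which the paper simply cites: reduce the $k$-th sequential step to the one-component situation by (i) replacing the estimated residual $\mathbf{Y}_{k-1}$ by the oracle residual via consistency and rates of the earlier steps, and (ii) killing the cross-terms between distinct chirp components using asymptotic-orthogonality of the regressors, so that only the $k$-th self-coupling and noise terms survive and Theorem \ref{theorem:preliminary_result_1} applies with $(A_k^0,B_k^0)$ in place of $(A^0,B^0)$.

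Two points deserve more care. First, your derivation of \eqref{prelim_LSE_first_derivative_multiple_comp_1} as ``immediate'' from \eqref{prelim_LSE_first_derivative_multiple_comp_2} only yields convergence in probability: $O_p(1)/\sqrt{n}=o_p(1)$. The paper, however, uses \eqref{prelim_LSE_first_derivative_multiple_comp_1} to obtain almost sure statements (see the proof of Lemma \ref{lemma:convergence_LSE_alpha1_beta1}), and accordingly attributes it to a \emph{separate} result, Lemma 4 of \cite{2015}, rather than to the CLT of Theorem 2. You therefore need a direct a.s.\ argument for the scaled gradient, e.g.\ via a strong law for the weighted noise sums together with the deterministic bounds on the signal part, not merely tightness inherited from the normal limit. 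Second, be cautious about the exponential-sum estimates for quadratic phases: the paper explicitly notes that the equivalence of the sequential and usual LSEs in \cite{2015} rests on ``a famous number theory conjecture.'' Your proposed van der Corput/stationary-phase route may not deliver the required $o(1)$ decay uniformly over the parameter space $(0,\pi)^2$ without additional Diophantine hypotheses; the safe course is to invoke the specific sum estimates (and their stated hypotheses) from \cite{2015} rather than to claim them as unconditional lemmas.
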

\begin{proof}
The proof of \eqref{prelim_LSE_first_derivative_multiple_comp_1} follows along the same lines as proof of Lemma 4 of Lahiri et al. \cite{2015} and that of  \eqref{prelim_LSE_first_derivative_multiple_comp_2} and \eqref{prelim_LSE_second_derivative_multiple_comp} follows from Theorem 2 of Lahiri et al. \cite{2015}. Note that Lahiri et al. \cite{2015} showed that the sequential LSEs have the same asymptotic distribution as the usual LSEs based on a famous number theory conjecture (see the reference). \\
\end{proof}

\section{One-Component 2-D Chirp Model}\label{sec:One_component_model}
In this section, we provide the methodology to obtain the proposed estimators for the parameters of a one-component 2-D chirp model, mathematically expressed as follows:
\begin{equation}\begin{split}\label{one_comp_model}
y(m,n)  =A^0 \cos(\alpha^0 m + \beta^0 m^2 + \gamma^0 n + \delta^0 n^2) + B^0 \sin(\alpha^0 m + \beta^0 m^2 + \gamma^0 n + \delta^0 n^2) + X(m,n);\\
m = 1, \ldots, M; n = 1, \ldots, N.
\end{split}\end{equation}
Here $y(m,n)$ is the observed data vector and the parameters $A^0$, $B^0$ are the amplitudes, $\alpha^0$, $\gamma^0$ are the frequencies and $\beta^0$, $\delta^0$ are the frequency rates of the signal model. As mentioned in the introduction, $X(m,n)$ accounts for the noise present in the signal.\\

\noindent We will use the following notations: $\boldsymbol{\theta} = (A, B, \alpha, \beta, \gamma, \delta)$ is the parameter vector, \\
$\boldsymbol{\theta}^0 = (A^0, B^0, \alpha^0, \beta^0, \gamma^0, \delta^0)$ is the true parameter vector and $\boldsymbol{\Theta}$ = $(-K,K) \times (-K, K) \times (0,\pi) \times (0,\pi)\times (0,\pi) \times (0,\pi)$ is the parameter space.

\subsection{Proposed Methodology}\label{sec:Estimation_one_comp_LSEs_method}
Let us consider the above-stated 2-D chirp signal model with one-component. Suppose we fix $n = n_0$, then \eqref{one_comp_model} can be rewritten as follows:
\begin{equation}\begin{split}\label{model_n=n0_one_comp}
y(m,n_0) & = A^0 \cos(\alpha^0 m + \beta^0 m^2 + \gamma^0 n_0 + \delta^0 n_0^2) + B^0 \sin(\alpha^0 m + \beta^0 m^2 + \gamma^0 n_0 + \delta^0 n_0^2) + X(m,n_0)\\
& = A^0(n_0) \cos(\alpha^0 m + \beta^0 m^2) + B^0(n_0) \sin(\alpha^0 m + \beta^0 m^2) + X(m,n_0);\ \  m = 1, \cdots, M,
\end{split}\end{equation}
which represents a 1-D chirp model with $A^0(n_0)$, $B^0(n_0)$ as the amplitudes, $\alpha^0$ as the frequency parameter and $\beta^0$ as the frequency rate parameter. Here, 
\begin{equation*}\begin{split}
& A^0(n_0) = \ \ A^0 \cos(\gamma^0 n_0 + \delta^0 n_0^2) + B^0 \sin(\gamma^0 n_0 + \delta^0 n_0^2), \textmd{ and} \\
& B^0(n_0) = -A^0 \sin(\gamma^0 n_0 + \delta^0 n_0^2) + B^0 \cos(\gamma^0 n_0 + \delta^0 n_0^2).
\end{split}\end{equation*}
Thus for each fixed $n_0$ $\in$ $\{1, \ldots, N\}$, we have a 1-D chirp model with the same frequency and frequency rate parameters, though different amplitudes. This 1-D model corresponds to a column of the 2-D data matrix.\\

Our aim is to estimate the non-linear parameters $\alpha^0$ and $\beta^0$ from the columns of the data matrix and one of the most reasonable estimators for this purpose are the least squares estimators. Therefore, the estimators of $\alpha^0$ and $\beta^0$ can be obtained by minimising the following function:
\begin{equation*}\begin{split}
R_M(\alpha, \beta, n_0) & 
 = {\textit{\textbf{Y}}^{\top}_{n_0}}(\textbf{I} - \textbf{P}_{\textbf{Z}_M}(\alpha, \beta))\textit{\textbf{Y}}_{n_0}
\end{split}\end{equation*}
for each $n_0$. Here, $\textit{\textbf{Y}}_{n_0} = \begin{bmatrix}
y[1,n_0] & \ldots & y[M,n_0]
\end{bmatrix}^{\top}$ is the $n_0$\textit{th} column of of the original data matrix, $\textbf{P}_{\textbf{Z}_M}(\alpha, \beta) = \textbf{Z}_M(\alpha, \beta)(\textbf{Z}_M(\alpha, \beta)^{\top}\textbf{Z}_M(\alpha, \beta))^{-1}\textbf{Z}_M(\alpha, \beta)^{\top}$ is the projection matrix on the column space of the matrix $\textbf{Z}_M(\alpha, \beta)$ and the matrix $\textbf{Z}_M(\alpha, \beta)$ can be obtained by replacing $n$ by $M$ in \eqref{Z_definition}. This process involves minimising $N$ 2-D functions corresponding to the N columns of the matrix. Thus, for computational efficiency, we propose to minimise the following function instead:
\begin{equation}\begin{split}\label{reduced_ess_alpha_beta}
R^{(1)}_{MN}(\alpha, \beta)  & = \sum_{n_0 = 1}^{N}R_M(\alpha, \beta, n_0) = \sum_{n_0 = 1}^{N} \textit{\textbf{Y}}^{\top}_{n_0}(\textbf{I} -  \textbf{P}_{\textbf{Z}_M}(\alpha, \beta))\textit{\textbf{Y}}_{n_0}
\end{split}\end{equation}
with respect to $\alpha$ and $\beta$ simultaneously and obtain $\hat{\alpha}$ and $\hat{\beta}$ which reduces the estimation process to solving only one 2-D optimisation problem. Note that since the errors are assumed to be i.i.d.\ replacing these $N$ functions by their sum is justifiable.

Similarly, we can obtain the estimates, $\hat{\gamma}$ and $\hat{\delta}$, of $\gamma^0$ and $\delta^0$, by minimising the following criterion function:
 \begin{equation}\begin{split}\label{reduced_ess_gamma_delta}
R^{(2)}_{MN}(\gamma, \delta)  & = \sum_{m_0 = 1}^{M}R_N(\gamma, \delta, m_0) = \sum_{m_0 = 1}^{M} \textit{\textbf{Y}}^{\top}_{m_0}(\textbf{I} -  \textbf{P}_{\textbf{Z}_N}(\gamma, \delta))\textit{\textbf{Y}}_{m_0}
\end{split}\end{equation}
with respect to $\gamma$ and $\delta$ simultaneously. The data vector $\textit{\textbf{Y}}_{m_0} = \begin{bmatrix}
y[m_0,1] & \ldots & y[m_0,N] 
\end{bmatrix}^{\top}$, is the $m_0$\textit{th} row of the data matrix, $m_0 = 1, \ldots, M$, $\textbf{P}_{\textbf{Z}_N}(\gamma, \delta)$ is the projection matrix on the column space of the matrix $\textbf{Z}_N(\gamma, \delta)$ and the matrix $\textbf{Z}_{N}(\gamma, \delta)$ can be obtained by replacing $n$ by $N$ and $\alpha$ and $\beta$ by $\gamma$ and $\delta$ respectively in the matrix $\textbf{Z}_n(\alpha, \beta)$, defined in \eqref{Z_definition}.\\ 

Once we have the estimates of the non-linear parameters, we estimate the linear parameters by the usual least squares regression technique as proposed by Lahiri et al. \cite{2015}:
\begin{equation*}
\begin{bmatrix}
\hat{A} \\ \hat{B} \end{bmatrix} = [\textit{\textbf{W}}(\hat{\alpha}, \hat{\beta}, \hat{\gamma}, \hat{\delta})^{T} \textit{\textbf{W}}(\hat{\alpha}, \hat{\beta}, \hat{\gamma}, \hat{\delta})]^{-1}\textit{\textbf{W}}(\hat{\alpha}, \hat{\beta}, \hat{\gamma}, \hat{\delta})^{T}\textit{\textbf{Y}}.
\end{equation*}
Here, $\textit{\textbf{Y}}_{M N \times 1} =  \left[\begin{array}{ccccccc}y(1, 1) & \ldots & y(M, 1) & \ldots &  y(1, N) & \ldots & y(M, N)\end{array}\right]^{T}$ is the observed data vector, and
\begin{equation}\label{W_matrix}
\textit{\textbf{W}}(\alpha, \beta, \gamma, \delta)_{M N \times 2} = \left[\begin{array}{cc}\cos(\alpha + \beta + \gamma + \delta) & \sin(\alpha + \beta + \gamma + \delta) \\ \cos(2\alpha + 4\beta + \gamma + \delta) & \sin(2\alpha + 4\beta + \gamma + \delta) \\ \vdots & \vdots \\ \cos(M \alpha + M^2 \beta + \gamma + \delta) &  \sin(M \alpha + M^2 \beta + \gamma + \delta) \\ \vdots & \vdots \\  \cos(\alpha + \beta + N \gamma + N^2 \delta) & \sin(\alpha + \beta + N \gamma + N^2 \delta) \\ \cos(2\alpha + 4\beta + N \gamma + N^2 \delta) & \sin(2\alpha + 4\beta + N \gamma + N^2 \delta)\\\vdots & \vdots \\ \cos(M \alpha + M^2 \beta + N \gamma + N^2 \delta) & \sin(M \alpha + M^2 \beta + N \gamma + N^2 \delta)\end{array}\right].
\end{equation}

\noindent We make the following assumptions on the error component and the model parameters before we examine the asymptotic properties of the proposed estimators:
\begin{assumption}\label{assump:1} X(m,n) is a double array sequence of i.i.d.\ random variables with mean zero, variance $\sigma^2$ and finite fourth order moment.
\end{assumption}

\begin{assumption}\label{assump:2} The true parameter vector $\boldsymbol{\theta}^0$ is an interior point of the parametric space $\boldsymbol{\Theta}_1$, and ${A^0}^2 + {B^0}^2 > 0$.
\end{assumption}

\subsection{Consistency}\label{sec:Estimation_one_comp_LSEs_consistency}
The results obtained on the consistency of the proposed estimators are presented in the following theorems:
\begin{theorem}\label{theorem:consistency_alpha_beta_one_comp_LSE}
Under assumptions \ref{assump:1} and \ref{assump:2}, $\hat{\alpha}$ and  $\hat{\beta}$ are strongly consistent estimators of $\alpha^0$ and $\beta^0$ respectively, that is,
\begin{equation*}\begin{split}
\hat{\alpha} \xrightarrow{a.s.} \alpha^0 \textmd{ as } M \rightarrow \infty.\\
\hat{\beta} \xrightarrow{a.s.} \beta^0 \textmd{ as } M \rightarrow \infty.
\end{split}\end{equation*} 
\end{theorem}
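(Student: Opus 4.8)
The plan is to recast the minimisation of $R^{(1)}_{MN}$ as the maximisation of the profile function $f_{MN}(\alpha,\beta) = \frac{1}{MN}\sum_{n_0=1}^{N}\mathbf{Y}_{n_0}^{\top}\mathbf{P}_{\mathbf{Z}_M}(\alpha,\beta)\mathbf{Y}_{n_0}$, since $\mathbf{Y}_{n_0}^{\top}(\mathbf{I}-\mathbf{P}_{\mathbf{Z}_M})\mathbf{Y}_{n_0} = \mathbf{Y}_{n_0}^{\top}\mathbf{Y}_{n_0} - \mathbf{Y}_{n_0}^{\top}\mathbf{P}_{\mathbf{Z}_M}\mathbf{Y}_{n_0}$ and the first term is free of $(\alpha,\beta)$. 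I would then argue by contradiction, in the spirit of Wu's classical consistency argument: fix a sample path on which the almost-sure limits below hold, and suppose along it that $(\hat\alpha,\hat\beta)$ fails to converge to $(\alpha^0,\beta^0)$. Then there is a $c>0$ and a subsequence along which $(\hat\alpha,\hat\beta)$ stays in $S_c = \{(\alpha,\beta)\in(0,\pi)^2 : |\alpha-\alpha^0|\geq c \text{ or } |\beta-\beta^0|\geq c\}$. Because $(\hat\alpha,\hat\beta)$ maximises $f_{MN}$ we have $f_{MN}(\hat\alpha,\hat\beta)\geq f_{MN}(\alpha^0,\beta^0)$, so it suffices to establish $\liminf_{M\to\infty}\inf_{(\alpha,\beta)\in S_c}\big[f_{MN}(\alpha^0,\beta^0)-f_{MN}(\alpha,\beta)\big] > 0$ almost surely, which contradicts the maximiser inequality and forces convergence on a probability-one set.

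Next I would decompose each column as $\mathbf{Y}_{n_0} = \mathbf{Z}_M(\alpha^0,\beta^0)\boldsymbol{\phi}^0(n_0) + \mathbf{X}_{n_0}$ with $\boldsymbol{\phi}^0(n_0) = [A^0(n_0),\, B^0(n_0)]^{\top}$, and record the crucial rotation identity $A^0(n_0)^2 + B^0(n_0)^2 = {A^0}^2 + {B^0}^2 > 0$ for every $n_0$. This guarantees that every column carries the same nonzero signal energy, so the summed criterion is non-degenerate whether $N$ is fixed or grows. Expanding $\mathbf{Y}_{n_0}^{\top}\mathbf{P}_{\mathbf{Z}_M}\mathbf{Y}_{n_0}$ into signal–signal, signal–noise and noise–noise pieces, I would show that the two noise pieces are negligible uniformly in $(\alpha,\beta)$: the noise–noise piece $\frac{1}{MN}\sum_{n_0}\mathbf{X}_{n_0}^{\top}\mathbf{P}_{\mathbf{Z}_M}\mathbf{X}_{n_0}$ is of order $O(1/M)$ because $\mathbf{P}_{\mathbf{Z}_M}$ has rank two, and the signal–noise cross piece vanishes a.s.\ from the elementary limits $\frac{1}{M}\sum_{m}X(m,n_0)\cos(\alpha m+\beta m^2)\to 0$ and its sine analogue, which hold under Assumption \ref{assump:1} (mean zero, finite fourth moment).

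The heart of the argument is the remaining signal term. Using the standard chirp orthogonality limits, namely $\frac{1}{M}\mathbf{Z}_M(\alpha,\beta)^{\top}\mathbf{Z}_M(\alpha,\beta)\to \tfrac12\mathbf{I}_2$ and $\frac{1}{M}\sum_m \cos(\alpha^0 m+\beta^0 m^2)\cos(\alpha m+\beta m^2)\to 0$ together with its sine and mixed counterparts whenever $(\alpha,\beta)\neq(\alpha^0,\beta^0)$, I would show $f_{MN}(\alpha^0,\beta^0)\to \tfrac12({A^0}^2+{B^0}^2)$ while $f_{MN}(\alpha,\beta)$ stays strictly below this value on $S_c$. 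Geometrically this is just the statement that the projection of $\mathbf{Z}_M(\alpha^0,\beta^0)\boldsymbol{\phi}^0(n_0)$ onto the column space of $\mathbf{Z}_M(\alpha,\beta)$ loses a fixed fraction of its energy once $(\alpha,\beta)$ is bounded away from the truth. Summing over $n_0$ and combining with the negligible noise pieces yields the strictly positive lower bound on $S_c$ and hence the contradiction.

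The step I expect to be the main obstacle is upgrading the pointwise limits to the uniform statement over $S_c$ demanded by the contradiction argument, since the maximiser is random and need not sit at any fixed parameter value. Establishing $\sup_{(\alpha,\beta)}\big|\frac{1}{M}\sum_m X(m,n_0)\cos(\alpha m+\beta m^2)\big|\to 0$ a.s.\ (and likewise for the sine weights and for the quadratic form controlling the noise–noise piece) calls for an equicontinuity/covering argument that exploits the Lipschitz dependence of the trigonometric weights on $(\alpha,\beta)$ over the compact closure of $(0,\pi)^2$ together with the finite fourth moment; the uniform decay of the chirp cross-sums away from $(\alpha^0,\beta^0)$ rests on number-theoretic estimates for quadratic (Weyl) exponential sums. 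Some care is also needed near the boundary of the parameter space, where these oscillatory limits degenerate, but this region is excluded by the interior-point Assumption \ref{assump:2}.
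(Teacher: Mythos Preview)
Your approach is correct in outline, but it takes a genuinely different route from the paper's. You work from first principles: recast the criterion as a maximisation, decompose $\mathbf{Y}_{n_0}^{\top}\mathbf{P}_{\mathbf{Z}_M}\mathbf{Y}_{n_0}$ into signal--signal, signal--noise and noise--noise pieces, and control each uniformly over $S_c$ using chirp orthogonality and uniform laws of large numbers. The paper instead exploits the additive structure $R^{(1)}_{MN}(\alpha,\beta)=\sum_{n_0}R_M(\alpha,\beta,n_0)$ to reduce everything to the already-established 1-D chirp consistency result. Concretely, it notes that $R_M(\alpha,\beta,n_0)=Q_M(\hat A(n_0),\hat B(n_0),\alpha,\beta)$, replaces the profiled amplitudes at $(\alpha^0,\beta^0)$ by the true $(A^0(n_0),B^0(n_0))$ (which only increases the subtracted term), enlarges $S_c$ to the full four-dimensional set $M_c^{n_0}=\{\boldsymbol\theta(n_0):|A(n_0)-A^0(n_0)|\ge c\text{ or }\ldots\text{ or }|\beta-\beta^0|\ge c\}$, and then pushes the $\liminf\inf$ inside the sum to obtain
\[
\frac{1}{N}\sum_{n_0=1}^{N}\liminf_{M}\ \inf_{\boldsymbol\theta(n_0)\in M_c^{n_0}}\frac{1}{M}\bigl[Q_M(\boldsymbol\theta(n_0))-Q_M(\boldsymbol\theta^0(n_0))\bigr]>0,
\]
each summand being strictly positive by the 1-D result of Kundu and Nandi (2008). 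What the paper's route buys is brevity: the uniform-in-$(\alpha,\beta)$ control of the noise terms and the Weyl-sum estimates that you correctly flag as the main obstacle are not redone but simply inherited from the cited 1-D theorem. What your route buys is transparency---in particular you surface the rotation identity $A^0(n_0)^2+B^0(n_0)^2={A^0}^2+{B^0}^2$, which is exactly why the column-wise reduction is uniform in $n_0$, a point the paper leaves implicit.
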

\begin{proof}
See \nameref{appendix:A}.\\
\end{proof}
\begin{theorem}\label{theorem:consistency_gamma_delta_one_comp_LSE}
Under assumptions \ref{assump:1} and \ref{assump:2}, $\hat{\gamma}$ and  $\hat{\delta}$ are strongly consistent estimators of $\gamma^0$ and $\delta^0$ respectively, that is,
\begin{equation*}\begin{split}
\hat{\gamma} \xrightarrow{a.s.} \gamma^0 \textmd{ as } N \rightarrow \infty.\\
\hat{\delta} \xrightarrow{a.s.} \delta^0 \textmd{ as } N \rightarrow \infty.
\end{split}\end{equation*} 
\end{theorem}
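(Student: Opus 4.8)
The plan is to exploit the exact symmetry between the two coordinate directions of the model and thereby reduce the assertion to the already-established Theorem \ref{theorem:consistency_alpha_beta_one_comp_LSE}. Fixing $m = m_0$ in \eqref{one_comp_model} collapses it to
\begin{equation*}
y(m_0, n) = A^0(m_0)\cos(\gamma^0 n + \delta^0 n^2) + B^0(m_0)\sin(\gamma^0 n + \delta^0 n^2) + X(m_0, n), \quad n = 1, \ldots, N,
\end{equation*}
a one-component $1$-D chirp in $n$ whose frequency and frequency-rate parameters are precisely $\gamma^0$ and $\delta^0$, with rotated amplitudes $A^0(m_0) = A^0\cos(\alpha^0 m_0 + \beta^0 m_0^2) + B^0 \sin(\alpha^0 m_0 + \beta^0 m_0^2)$ and $B^0(m_0) = -A^0\sin(\alpha^0 m_0 + \beta^0 m_0^2) + B^0 \cos(\alpha^0 m_0 + \beta^0 m_0^2)$. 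Since $A^0(m_0)^2 + B^0(m_0)^2 = {A^0}^2 + {B^0}^2 > 0$ for every $m_0$ by Assumption \ref{assump:2}, each reduced row-problem inherits the non-degeneracy required by the $1$-D theory. Consequently the criterion $R^{(2)}_{MN}(\gamma, \delta)$ in \eqref{reduced_ess_gamma_delta} is obtained from $R^{(1)}_{MN}(\alpha, \beta)$ in \eqref{reduced_ess_alpha_beta} by interchanging rows with columns, $M$ with $N$, and $(\alpha, \beta)$ with $(\gamma, \delta)$; so I would carry over the derivation of Theorem \ref{theorem:consistency_alpha_beta_one_comp_LSE} verbatim, now with $N \to \infty$ as the governing limit.

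Concretely, since $(\hat\gamma, \hat\delta)$ minimizes $R^{(2)}_{MN}$, it equivalently maximizes $f_{MN}(\gamma, \delta) = \frac{1}{MN}\sum_{m_0=1}^{M}\textit{\textbf{Y}}^{\top}_{m_0}\textbf{P}_{\textbf{Z}_N}(\gamma, \delta)\textit{\textbf{Y}}_{m_0}$. The central step I would establish is that, for every $\varepsilon > 0$,
\begin{equation*}
\limsup_{N\to\infty}\ \sup_{(\gamma,\delta)\in S_\varepsilon}\ \big[\, f_{MN}(\gamma,\delta) - f_{MN}(\gamma^0,\delta^0)\,\big] < 0 \quad \text{a.s.},
\end{equation*}
where $S_\varepsilon = \{(\gamma,\delta) \in (0,\pi)^2 : |\gamma - \gamma^0| \geq \varepsilon \text{ or } |\delta - \delta^0| \geq \varepsilon\}$. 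This inequality forces $(\hat\gamma, \hat\delta) \notin S_\varepsilon$ for all large $N$, which is exactly the strong consistency claimed.

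The technical content reduces to two ingredients. Expanding each quadratic form $\textit{\textbf{Y}}^{\top}_{m_0}\textbf{P}_{\textbf{Z}_N}(\gamma, \delta)\textit{\textbf{Y}}_{m_0}$ produces, first, pure-signal averages governed by quadratic-phase (Weyl) sums of the form $\frac{1}{N}\sum_{n=1}^N \cos\!\big((\gamma^0 - \gamma)n + (\delta^0 - \delta)n^2\big)$, which tend to $0$ as $N \to \infty$ unless $(\gamma,\delta) = (\gamma^0,\delta^0)$, so that after averaging the preserved energy ${A^0}^2 + {B^0}^2$ over the $M$ rows the limiting objective is uniquely maximized at the truth; and, second, signal–noise and noise–noise averages which, by the zero-mean i.i.d.\ assumption with finite fourth moment (Assumption \ref{assump:1}), converge almost surely to their vanishing expectations through a standard strong-law/maximal-inequality argument, with the additional averaging over $M$ rows only reinforcing the convergence.

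The main obstacle, precisely as in Theorem \ref{theorem:consistency_alpha_beta_one_comp_LSE}, is securing the \emph{uniformity} of these limits over the region $S_\varepsilon$ and controlling the slow decay of the quadratic-phase sums in a neighbourhood of $(\gamma^0,\delta^0)$; this is handled by the Weyl-type bounds on the exponential sums $\sum_{n=1}^N e^{\mathrm{i}(\gamma n + \delta n^2)}$ already underlying the preliminary results, which guarantee that $(\gamma^0,\delta^0)$ is the unique maximizer of the limiting criterion. Because every such estimate is the mirror image of the corresponding one in the proof of Theorem \ref{theorem:consistency_alpha_beta_one_comp_LSE}, I would present the argument as identical to that proof upon interchanging the roles of $m$ and $n$ (equivalently of $M$ and $N$), flagging only the places where these two indices exchange roles.
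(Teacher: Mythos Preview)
Your proposal is correct and follows essentially the same approach as the paper: the paper's own proof of Theorem~\ref{theorem:consistency_gamma_delta_one_comp_LSE} consists of the single remark that it ``follows along the same lines as the proof of Theorem~\ref{theorem:consistency_alpha_beta_one_comp_LSE}'', i.e.\ by the row/column symmetry you spell out. Your elaboration of the non-degeneracy $A^0(m_0)^2+B^0(m_0)^2={A^0}^2+{B^0}^2>0$ and of the uniform Weyl-sum and noise-term control is more explicit than the paper (which defers these details to the cited 1-D results), but the underlying argument is the same.
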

\begin{proof}
This proof follows along the same lines as the proof of Theorem \ref{theorem:consistency_alpha_beta_one_comp_LSE}.\\
\end{proof}
\subsection{Asymptotic distribution.}\label{sec:Estimation_one_comp_LSEs_distribution}
The following theorems provide the asymptotic distributions of the proposed estimators:
\begin{theorem}\label{theorem:asymp_dist_one_comp_LSE_alpha_beta}
If the assumptions, \ref{assump:1} and \ref{assump:2} are satisfied, then 
\begin{equation*}\begin{split}
\begin{bmatrix}
(\hat{\alpha} - \alpha^0) & , & (\hat{\beta} - \beta^0)
\end{bmatrix}\textit{\textbf{D}}_1^{-1} \xrightarrow{d} \boldsymbol{\mathcal{N}}_2(\textbf{0}, 2\sigma^2 \boldsymbol{\Sigma}) \textmd{ as } M \rightarrow \infty.\\
\end{split}\end{equation*}
Here, $ \mathbf{D}_1 = \textnormal{diag}(M^{\frac{-3}{2}}N^{\frac{-1}{2}}, M^{\frac{-5}{2}}N^{\frac{-1}{2}})$ and $\boldsymbol{\Sigma}$ is as defined in \eqref{Sigma_definition}.
\end{theorem}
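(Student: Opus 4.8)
The plan is to prove the asymptotic normality by the standard gradient/Hessian route for least squares estimators. Write $\hat{\boldsymbol{\eta}} = (\hat\alpha,\hat\beta)$ and $\boldsymbol{\eta}^0 = (\alpha^0,\beta^0)$, and denote by $\textit{\textbf{R}}^{(1)\prime}_{MN}$ and $\textit{\textbf{R}}^{(1)\prime\prime}_{MN}$ the gradient vector and Hessian matrix of the criterion $R^{(1)}_{MN}$ in \eqref{reduced_ess_alpha_beta}. Since $\hat{\boldsymbol{\eta}}$ minimises $R^{(1)}_{MN}$ and, by Theorem \ref{theorem:consistency_alpha_beta_one_comp_LSE}, eventually lies in the interior of the parameter space, the first-order condition $\textit{\textbf{R}}^{(1)\prime}_{MN}(\hat{\boldsymbol{\eta}}) = \textbf{0}$ holds, and a multivariate Taylor expansion about $\boldsymbol{\eta}^0$ gives
\[ \textbf{0} = \textit{\textbf{R}}^{(1)\prime}_{MN}(\boldsymbol{\eta}^0) + (\hat{\boldsymbol{\eta}} - \boldsymbol{\eta}^0)\,\textit{\textbf{R}}^{(1)\prime\prime}_{MN}(\bar{\boldsymbol{\eta}}), \]
where $\bar{\boldsymbol{\eta}}$ lies on the segment joining $\hat{\boldsymbol{\eta}}$ and $\boldsymbol{\eta}^0$. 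Inserting $\textbf{D}_1\textbf{D}_1^{-1}$ and rearranging yields
\[ (\hat{\boldsymbol{\eta}} - \boldsymbol{\eta}^0)\textbf{D}_1^{-1} = \left[-\textit{\textbf{R}}^{(1)\prime}_{MN}(\boldsymbol{\eta}^0)\textbf{D}_1\right]\left[\textbf{D}_1\textit{\textbf{R}}^{(1)\prime\prime}_{MN}(\bar{\boldsymbol{\eta}})\textbf{D}_1\right]^{-1}. \]
It therefore suffices to establish (i) the limiting normality of the scaled score $-\textit{\textbf{R}}^{(1)\prime}_{MN}(\boldsymbol{\eta}^0)\textbf{D}_1$ and (ii) the convergence of the scaled Hessian to an invertible limit, after which the result follows by Slutsky's theorem.

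For (i), the decisive structural fact is that $R^{(1)}_{MN}(\alpha,\beta) = \sum_{n_0=1}^N R_M(\alpha,\beta,n_0)$ splits into a sum over columns, and since $X(m,n)$ is i.i.d.\ the column scores $\textit{\textbf{R}}'_M(\boldsymbol{\eta}^0,n_0)$, $n_0 = 1,\dots,N$, are independent. Writing $\boldsymbol{\Delta} = \textnormal{diag}(M^{-3/2},M^{-5/2})$ so that $\textbf{D}_1 = N^{-1/2}\boldsymbol{\Delta}$, we obtain $-\textit{\textbf{R}}^{(1)\prime}_{MN}(\boldsymbol{\eta}^0)\textbf{D}_1 = N^{-1/2}\sum_{n_0=1}^N\bigl(-\textit{\textbf{R}}'_M(\boldsymbol{\eta}^0,n_0)\boldsymbol{\Delta}\bigr)$. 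The $n_0$th column is a one-component 1-D chirp with amplitudes $A^0(n_0),B^0(n_0)$, and crucially ${A^0(n_0)}^2 + {B^0(n_0)}^2 = {A^0}^2 + {B^0}^2$ since the map $(A^0,B^0)\mapsto(A^0(n_0),B^0(n_0))$ is a rotation; hence the matrix $\boldsymbol{\Sigma}$ of Theorem \ref{theorem:preliminary_result_1} is identical across columns. By that theorem each summand is asymptotically $\boldsymbol{\mathcal{N}}_2(\textbf{0}, 2\sigma^2\boldsymbol{\Sigma}^{-1})$; I would verify that its mean is $o(1)$ and its covariance converges to $2\sigma^2\boldsymbol{\Sigma}^{-1}$, and then apply a triangular-array (Lindeberg–Feller) central limit theorem to the normalised sum to get $-\textit{\textbf{R}}^{(1)\prime}_{MN}(\boldsymbol{\eta}^0)\textbf{D}_1 \xrightarrow{d} \boldsymbol{\mathcal{N}}_2(\textbf{0}, 2\sigma^2\boldsymbol{\Sigma}^{-1})$.

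For (ii), the same decomposition gives $\textbf{D}_1\textit{\textbf{R}}^{(1)\prime\prime}_{MN}(\boldsymbol{\eta}^0)\textbf{D}_1 = N^{-1}\sum_{n_0=1}^N \boldsymbol{\Delta}\textit{\textbf{R}}''_M(\boldsymbol{\eta}^0,n_0)\boldsymbol{\Delta}$, an average of terms each converging to the common limit $\boldsymbol{\Sigma}^{-1}$ by Theorem \ref{theorem:preliminary_result_1}, so the average converges to $\boldsymbol{\Sigma}^{-1}$ as well. To replace $\boldsymbol{\eta}^0$ by the intermediate point $\bar{\boldsymbol{\eta}}$, I would use the strong consistency from Theorem \ref{theorem:consistency_alpha_beta_one_comp_LSE} (so $\bar{\boldsymbol{\eta}}\to\boldsymbol{\eta}^0$) together with a uniform-convergence argument showing that $\textbf{D}_1\textit{\textbf{R}}^{(1)\prime\prime}_{MN}(\boldsymbol{\eta})\textbf{D}_1$ has the same limit for $\boldsymbol{\eta}$ in a shrinking neighbourhood of $\boldsymbol{\eta}^0$, typically by bounding the appropriately scaled third derivatives. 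Combining (i) and (ii) via Slutsky gives $(\hat{\boldsymbol{\eta}}-\boldsymbol{\eta}^0)\textbf{D}_1^{-1}\xrightarrow{d}\boldsymbol{\mathcal{N}}_2(\textbf{0},2\sigma^2\boldsymbol{\Sigma}^{-1})\boldsymbol{\Sigma}$, and since $\boldsymbol{\Sigma}$ is symmetric the covariance simplifies to $\boldsymbol{\Sigma}^{\top}(2\sigma^2\boldsymbol{\Sigma}^{-1})\boldsymbol{\Sigma} = 2\sigma^2\boldsymbol{\Sigma}$, the asserted limit.

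The main obstacle is part (i): because the per-column statement in Theorem \ref{theorem:preliminary_result_1} is a limit in $M$ while the outer sum runs over $N$ terms, the score is a genuine double-indexed (triangular) array and one cannot simply invoke a classical i.i.d.\ CLT. The delicate points are verifying the Lindeberg condition uniformly as $M,N\to\infty$ jointly — for which the finite fourth moment in Assumption \ref{assump:1} is essential — and controlling the finite-$M$ bias of each column score so that its accumulation over $N$ columns, scaled by $N^{-1/2}$, stays negligible. Step (ii) is comparatively routine once the per-column result and consistency are in place.
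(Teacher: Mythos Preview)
Your proposal is correct and follows essentially the same route as the paper: Taylor-expand the score of $R^{(1)}_{MN}$ at $\boldsymbol{\eta}^0$, use the column decomposition $R^{(1)}_{MN}=\sum_{n_0}R_M(\cdot,n_0)$ together with the 1-D preliminary result (Theorem~\ref{theorem:preliminary_result_1}) to obtain the limiting law of the scaled score and the limit of the scaled Hessian, replace $\bar{\boldsymbol{\eta}}$ by $\boldsymbol{\eta}^0$ via consistency, and conclude by Slutsky. Your rotation observation ${A^0(n_0)}^2+{B^0(n_0)}^2={A^0}^2+{B^0}^2$ and your explicit invocation of a triangular-array CLT are refinements the paper leaves implicit; otherwise the arguments coincide.
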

 \begin{proof}
See \nameref{appendix:A}.\\
\end{proof}

\begin{theorem}\label{theorem:asymp_dist_one_comp_LSE_gamma_delta}
If the assumptions, \ref{assump:1} and \ref{assump:2} are satisfied, then 
\begin{equation*}\begin{split}
\begin{bmatrix}
(\hat{\gamma} - \gamma^0) & , & (\hat{\delta} - \delta^0)
\end{bmatrix}\textit{\textbf{D}}_2^{-1} \xrightarrow{d} \boldsymbol{\mathcal{N}}_2(\textbf{0}, 2\sigma^2 \boldsymbol{\Sigma}) \textmd{ as } N \rightarrow \infty.\\
\end{split}\end{equation*}
Here, $ \mathbf{D}_2 = \textnormal{diag}( M^{\frac{-1}{2}}N^{\frac{-3}{2}}, M^{\frac{-1}{2}}N^{\frac{-5}{2}})$ and $\boldsymbol{\Sigma}$ is as defined in \eqref{Sigma_definition}.
\end{theorem}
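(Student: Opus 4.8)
The plan is to mirror the proof of Theorem \ref{theorem:asymp_dist_one_comp_LSE_alpha_beta}, since the criterion $R^{(2)}_{MN}(\gamma,\delta)$ in \eqref{reduced_ess_gamma_delta} is obtained from $R^{(1)}_{MN}(\alpha,\beta)$ in \eqref{reduced_ess_alpha_beta} purely by interchanging the roles of the rows and columns of the data matrix (equivalently, by swapping $m\leftrightarrow n$ and $M\leftrightarrow N$). Under this symmetry the argument transfers essentially verbatim; I record the mechanism below so that the scaling $\textbf{D}_2=\textnormal{diag}(M^{-1/2}N^{-3/2},M^{-1/2}N^{-5/2})$ and the limiting covariance $2\sigma^2\boldsymbol{\Sigma}$ are seen to emerge correctly.

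First I would set up the linearisation. Let $\textit{\textbf{R}}^{(2)\prime}_{MN}$ and $\textit{\textbf{R}}^{(2)\prime\prime}_{MN}$ denote the gradient row vector and Hessian of $R^{(2)}_{MN}$. Since $(\hat\gamma,\hat\delta)$ minimises $R^{(2)}_{MN}$, the first-order condition gives $\textit{\textbf{R}}^{(2)\prime}_{MN}(\hat\gamma,\hat\delta)=\textbf{0}$, and a multivariate mean-value expansion of the gradient about $(\gamma^0,\delta^0)$ yields
\begin{equation*}
-\textit{\textbf{R}}^{(2)\prime}_{MN}(\gamma^0,\delta^0)\,\textbf{D}_2
=\big[(\hat\gamma-\gamma^0),\ (\hat\delta-\delta^0)\big]\,\textbf{D}_2^{-1}
\big(\textbf{D}_2\,\textit{\textbf{R}}^{(2)\prime\prime}_{MN}(\bar\gamma,\bar\delta)\,\textbf{D}_2\big),
\end{equation*}
where $(\bar\gamma,\bar\delta)$ lies on the segment joining $(\hat\gamma,\hat\delta)$ and $(\gamma^0,\delta^0)$. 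Solving for the scaled error vector, the result will follow from Slutsky's theorem once I establish (i) a central limit theorem for the scaled score, $-\textit{\textbf{R}}^{(2)\prime}_{MN}(\gamma^0,\delta^0)\textbf{D}_2\xrightarrow{d}\boldsymbol{\mathcal{N}}_2(\textbf{0},2\sigma^2\boldsymbol{\Sigma}^{-1})$, and (ii) the convergence of the scaled Hessian, $\textbf{D}_2\,\textit{\textbf{R}}^{(2)\prime\prime}_{MN}(\bar\gamma,\bar\delta)\,\textbf{D}_2\xrightarrow{a.s.}\boldsymbol{\Sigma}^{-1}$. These combine to produce the limiting covariance $\boldsymbol{\Sigma}^{\top}(2\sigma^2\boldsymbol{\Sigma}^{-1})\boldsymbol{\Sigma}=2\sigma^2\boldsymbol{\Sigma}$, exactly as claimed.

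Both ingredients are handled through the row structure of \eqref{reduced_ess_gamma_delta}. Fixing $m=m_0$ turns the model into a one-component 1-D chirp in $n$ with frequency $\gamma^0$, frequency rate $\delta^0$, and amplitudes $A^0(m_0),B^0(m_0)$ satisfying ${A^0(m_0)}^2+{B^0(m_0)}^2={A^0}^2+{B^0}^2$ for every $m_0$; hence each row is an independent 1-D chirp of the \emph{same} signal strength, and therefore yields the same matrix $\boldsymbol{\Sigma}$ of Theorem \ref{theorem:preliminary_result_1}. Writing $\boldsymbol{\Delta}$ as in Theorem \ref{theorem:preliminary_result_1} with $n=N$, so that $\textbf{D}_2=M^{-1/2}\boldsymbol{\Delta}$, the scaled score becomes the normalised sum $M^{-1/2}\sum_{m_0=1}^{M}\big(-\textit{\textbf{R}}'_N(\gamma^0,\delta^0;\textit{\textbf{Y}}_{m_0})\boldsymbol{\Delta}\big)$ of $M$ independent, asymptotically mean-zero per-row scores, each with limiting covariance $2\sigma^2\boldsymbol{\Sigma}^{-1}$ by \eqref{prelim_LSE_first_derivative_one_comp}. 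For (ii), $\textbf{D}_2\textit{\textbf{R}}^{(2)\prime\prime}_{MN}\textbf{D}_2$ is the $M^{-1}$ average of the $M$ per-row scaled Hessians $\boldsymbol{\Delta}\,\textit{\textbf{R}}''_N(\cdot)\,\boldsymbol{\Delta}$, each converging to $\boldsymbol{\Sigma}^{-1}$ by \eqref{prelim_LSE_second_derivative_one_comp}; the intermediate point $(\bar\gamma,\bar\delta)$ is replaced by $(\gamma^0,\delta^0)$ using the strong consistency of Theorem \ref{theorem:consistency_gamma_delta_one_comp_LSE} together with the uniform continuity of the scaled Hessian near the true value.

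The main obstacle will be the gradient central limit theorem, step (i): I must justify the joint passage $M,N\to\infty$ rather than an iterated limit, and verify a Lindeberg/Lyapunov condition for the resulting double-array weighted sum of i.i.d.\ errors, whose weights are products of bounded chirp trigonometric terms. Concretely, one checks that normalising the sum of $M$ independent row-scores by $M^{-1/2}$ reproduces exactly the covariance $2\sigma^2\boldsymbol{\Sigma}^{-1}$, which hinges on the row-wise invariance ${A^0(m_0)}^2+{B^0(m_0)}^2={A^0}^2+{B^0}^2$ noted above. Once this is in place, the remaining steps are routine bookkeeping parallel to Theorem \ref{theorem:asymp_dist_one_comp_LSE_alpha_beta}.
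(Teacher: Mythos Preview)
Your proposal is correct and matches the paper's approach: the paper simply states that the proof ``follows along the same lines as the proof of Theorem~\ref{theorem:asymp_dist_one_comp_LSE_alpha_beta}'', and what you have written is precisely a careful unpacking of that row/column symmetry argument, with the per-row identity ${A^0(m_0)}^2+{B^0(m_0)}^2={A^0}^2+{B^0}^2$ being the key observation that makes $\boldsymbol{\Sigma}$ the same. Your identification of the scaled-score limit as $\boldsymbol{\mathcal{N}}_2(\textbf{0},2\sigma^2\boldsymbol{\Sigma}^{-1})$ (rather than $2\sigma^2\boldsymbol{\Sigma}$) is in fact the consistent choice, since combining it with the Hessian limit $\boldsymbol{\Sigma}^{-1}$ yields the stated final covariance $2\sigma^2\boldsymbol{\Sigma}$.
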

\begin{proof}
This proof follows along the same lines as the proof of Theorem \ref{theorem:asymp_dist_one_comp_LSE_alpha_beta}.\\
\end{proof}

\noindent The asymptotic distributions of $(\hat{\alpha}, \hat{\beta})$ and $(\hat{\gamma}, \hat{\delta})$ are observed to be the same as those of the corresponding LSEs. Thus, we get the same efficiency as that of the LSEs without going through the exhaustive process of actually computing the LSEs.  

\section{Multiple-Component 2-D Chirp model}\label{sec:Multiple_component_model}
In this section, we consider the multipl-component 2-D chirp model with $p$ number of components, with the mathematical expression of the model as given in \eqref{multiple_comp_model}. Although estimation of $p$ is an important problem, in this paper we deal with the estimation of the other important parameters characterising the observed signal, the amplitudes, the frequencies and the frequency rates, assuming $p$ to be known. We propose a sequential procedure to estimate these parameters. The main idea supporting the proposed sequential procedure is same as that behind the ones proposed by Prasad et al. \cite{2008_2}  for a sinusoidal model and Lahiri et al. \cite{2015} and Grover et al. \cite{2018} for a chirp model$-$ the orthogonality of different regressor vectors. Along with the computationally efficiency, the sequential method provides estimators with the same rates of convergence as the LSEs.\\ 

\subsection{Proposed Sequential Algorithm}\label{sec:Estimation_multiple_comp_LSEs_method}
\noindent The following algorithm is a simple extension of the method proposed to obtain the estimators for a one-component 2-D model in Section \ref{sec:Estimation_one_comp_LSEs_method}:

\begin{description}
\item \namedlabel{itm:step1} {\textbf{Step 1:}}\ Compute $\hat{\alpha}_1$ and $\hat{\beta}_1$ by minimising the following function:
\begin{equation*}
R^{(1)}_{1, MN}(\alpha, \beta)  = \sum_{n_0 = 1}^{N} \textit{\textbf{Y}}^{\top}_{n_0}(\textbf{I} - \textbf{P}_{\textbf{Z}_M}(\alpha, \beta))\textit{\textbf{Y}}_{n_0}
\end{equation*} 
with respect to $\alpha$ and $\beta$ simultaneously.  
\item \namedlabel{itm:step2} {\textbf{Step 2:}}\ Compute $\hat{\gamma}_1$ and $\hat{\delta}_1$ by minimising the function:
\begin{equation*}
R^{(2)}_{1,MN}(\gamma, \delta) = \sum_{m_0 = 1}^{M} \textit{\textbf{Y}}^{\top}_{m_0}(\textbf{I} - \textbf{P}_{\textbf{Z}_N}(\alpha, \beta))\textit{\textbf{Y}}_{m_0}
\end{equation*} 
with respect to $\gamma$ and $\delta$ simultaneously. 
\item \namedlabel{itm:step3} {\textbf{Step 3:}}\ Once the nonlinear parameters of the first component of the model are estimated, estimate the linear parameters $A_1^0$ and $B_1^0$ by the usual least squares estimation technique:
\begin{equation*}
\begin{bmatrix}
\hat{A}_1 \\ \hat{B}_1 \end{bmatrix} = [\textit{\textbf{W}}(\hat{\alpha}_1, \hat{\beta}_1, \hat{\gamma}_1, \hat{\delta}_1)^{T} \textit{\textbf{W}}(\hat{\alpha}_1, \hat{\beta}_1, \hat{\gamma}_1, \hat{\delta}_1)]^{-1}\textit{\textbf{W}}(\hat{\alpha}_1, \hat{\beta}_1, \hat{\gamma}_1, \hat{\delta}_1)^{T}\textit{\textbf{Y}}.
\end{equation*}
Here, $\textit{\textbf{Y}}_{M N \times 1} =  \left[\begin{array}{ccccccc}y(1, 1) & \ldots & y(M, 1) & \ldots &  y(1, N) & \ldots & y(M, N)\end{array}\right]^{T}$ is the observed data vector, and the matrix $\textit{\textbf{W}}(\hat{\alpha}_1, \hat{\beta}_1, \hat{\gamma}_1, \hat{\delta}_1)$ can be obtained by replacing $\alpha$, $\beta$, $\gamma$ and $\delta$ by $\hat{\alpha}_1$, $\hat{\beta}_1$, $\hat{\gamma}_1$ and $\hat{\delta}_1$ respectively in \eqref{W_matrix}.
\item \namedlabel{itm:step4} {\textbf{Step 4:}}\ Eliminate the effect of the first component from the original data and construct new data as follows:
\begin{equation}\begin{split}\label{second_stage_data}
y_1(m,n) = y(m,n) - \hat{A}_1 \cos(\hat{\alpha}_1 m + \hat{\beta}_1 m^2 + \hat{\gamma}_1 n + \hat{\delta}_1 n^2) - \hat{B}_1 \sin(\hat{\alpha}_1 m + \hat{\beta}_1 m^2 + \hat{\gamma}_1 n + \hat{\delta}_1 n^2); \\
m = 1, \ldots, M;\ n = 1, \ldots, N.
\end{split}\end{equation} 
\item \namedlabel{itm:step5} {\textbf{Step 5:}}\ Using the new data, estimate the parameters of the second component by following the same procedure.
\item \namedlabel{itm:step6} {\textbf{Step 6:}}\  Continue this process until all the parameters are estimated. \\
\end{description}

\noindent In the following subsections, we examine the asymptotic properties of the proposed estimators under the assumptions \ref{assump:1}, \ref{assump:P4} and the following assumption on the parameters:

\begin{assumption}\label{assump:3} $\boldsymbol{\theta}_{k}^{0}$ is an interior point of $\boldsymbol{\Theta}_1$, for all $k = 1, \ldots, p$ and the frequencies $\alpha_{k}^0s$, $\gamma_{k}^0s$ and the frequency rates $\beta_{k}^0s$, $\delta_{k}^0s$ are such that $(\alpha_i^0, \beta_i^0) \neq (\alpha_j^0, \beta_j^0)$ and $(\gamma_i^0, \delta_i^0)$ $\neq$ $(\gamma_j^0, \delta_j^0)$ $\forall i \neq j$. 
\end{assumption}

\subsection{Consistency.}\label{sec:Estimation_multiple_comp_LSEs_consistency}
Through the following theorems, we proclaim the consistency of the proposed estimators when the number of components, $p$ is unknown.

\begin{theorem}\label{theorem:consistency_alphak_betak_multiple_comp_LSE}
If assumptions \ref{assump:1}, \ref{assump:3} and \ref{assump:P4} are satisfied, then the following results hold true for $ 1 \leqslant k \leqslant p$:
\begin{equation*}\begin{split}
& \hat{\alpha}_k \xrightarrow{a.s.} \alpha_k^0 \textmd{ as } M \rightarrow \infty, \\
& \hat{\beta}_k \xrightarrow{a.s.} \beta_k^0 \textmd{ as } M \rightarrow \infty.
\end{split}\end{equation*}
\end{theorem}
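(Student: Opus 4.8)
The plan is to argue by induction on the component index $k$. At stage $k$, minimizing $R^{(k)}_{1,MN}(\alpha,\beta)$ is equivalent to maximizing the periodogram-type function $Q^{(k)}_{MN}(\alpha,\beta)=\sum_{n_0=1}^N(\mathbf{Y}^{(k-1)}_{n_0})^\top\mathbf{P}_{\mathbf{Z}_M}(\alpha,\beta)\mathbf{Y}^{(k-1)}_{n_0}$, where $\mathbf{Y}^{(k-1)}_{n_0}$ is the $n_0$th column of the residual data left after the first $k-1$ estimated components have been subtracted (with $\mathbf{Y}^{(0)}_{n_0}=\mathbf{Y}_{n_0}$). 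The heart of the argument is the almost sure behaviour of $\frac{1}{MN}Q^{(k)}_{MN}$ as $M\to\infty$.

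For the base case $k=1$, I would use $\frac1M\mathbf{Z}_M(\alpha,\beta)^\top\mathbf{Z}_M(\alpha,\beta)\to\frac12\mathbf{I}_2$ to replace the projection, reducing each column term to a multiple of $\frac1{M^2}\|\mathbf{Z}_M(\alpha,\beta)^\top\mathbf{Y}_{n_0}\|^2$. Expanding column $n_0$ as the sum of $p$ one-dimensional chirps with amplitudes $A^0_j(n_0),B^0_j(n_0)$ (as in \eqref{model_n=n0_one_comp}) plus noise, two ingredients pin down the limit: the oscillatory-sum estimates $\frac1M\sum_{m=1}^M\cos(\psi m+\phi m^2)\to0$ for $(\psi,\phi)\neq(0,0)$, which annihilate every chirp-chirp and chirp-noise cross term as $M\to\infty$; and the rotation identity $A^0_j(n_0)^2+B^0_j(n_0)^2={A^0_j}^2+{B^0_j}^2$, which is free of $n_0$. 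Hence $\frac{1}{MN}Q^{(1)}_{MN}(\alpha,\beta)$ converges almost surely to a function equal to a positive multiple of ${A^0_j}^2+{B^0_j}^2$ at each $(\alpha,\beta)=(\alpha^0_j,\beta^0_j)$ and to $0$ at parameter values matching no component. Because this limit is spiky rather than continuous, I would establish consistency not through a continuous argmax but through a peak-separation estimate: for every $\epsilon>0$, almost surely $\limsup_M\sup\{\frac{1}{MN}Q^{(1)}_{MN}(\alpha,\beta):|(\alpha,\beta)-(\alpha^0_1,\beta^0_1)|\geq\epsilon\}$ stays strictly below the peak value at $(\alpha^0_1,\beta^0_1)$. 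Assumption \ref{assump:P4} supplies precisely this gap, since the next-highest spike has height proportional to ${A^0_2}^2+{B^0_2}^2<{A^0_1}^2+{B^0_1}^2$; therefore $(\hat\alpha_1,\hat\beta_1)\xrightarrow{a.s.}(\alpha^0_1,\beta^0_1)$.

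For the inductive step, assume $\hat\alpha_j\xrightarrow{a.s.}\alpha^0_j$, $\hat\beta_j\xrightarrow{a.s.}\beta^0_j$, and (via the companion results) $\hat\gamma_j,\hat\delta_j,\hat A_j,\hat B_j$ for all $j<k$. The residual $y_{k-1}(m,n)$ then equals the genuine sum of components $k,\ldots,p$ plus noise, plus a remainder built from the phase- and amplitude-estimation errors of the first $k-1$ stages. I would show that this remainder contributes $o(MN)$ to $Q^{(k)}_{MN}$ uniformly in $(\alpha,\beta)$, so that asymptotically the stage-$k$ objective coincides with the one obtained by subtracting the \emph{exact} first $k-1$ components. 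The base-case analysis then applies verbatim to the components $k,\ldots,p$, and since Assumption \ref{assump:P4} makes the $k$th amplitude the largest among the survivors, the limit of $\frac{1}{MN}Q^{(k)}_{MN}$ is uniquely peaked at $(\alpha^0_k,\beta^0_k)$, yielding $\hat\alpha_k\xrightarrow{a.s.}\alpha^0_k$ and $\hat\beta_k\xrightarrow{a.s.}\beta^0_k$.

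The main obstacle I anticipate is the error-propagation bookkeeping in the inductive step: even though $\hat\alpha_j\to\alpha^0_j$ and $\hat\beta_j\to\beta^0_j$, the subtracted chirp differs from the true $j$th component, and inside the phase this difference is multiplied by $m$ and $m^2$, so a crude bound need not be $o(MN)$. A Taylor expansion of the phase about the true values, combined with the uniform oscillatory-sum bounds, should control it, but the estimates must be made uniform over $(\alpha,\beta)$ in the compact parameter space. By contrast, the oscillatory-sum lemmas needed here are classical and, unlike the distributional statements in Theorem \ref{theorem:preliminary_result_3}, do not invoke the number-theoretic conjecture, so consistency is cleaner to obtain than the limiting normality.
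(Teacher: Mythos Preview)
Your base-case strategy differs from the paper's but is viable: the paper works directly with the error sum of squares $R^{(1)}_{1,MN}$ and invokes a Wu-type criterion (Lemma~\ref{lemma_consistency_LSEs_alpha1_beta1}), bounding the liminf of $\frac{1}{MN}[R^{(1)}_{1,MN}(\alpha,\beta)-R^{(1)}_{1,MN}(\alpha_1^0,\beta_1^0)]$ below by an average of the corresponding 1-D quantities and then citing Lahiri's 1-D multiple-component result. Your periodogram-peak argument is a legitimate alternative, and your use of $A_j^0(n_0)^2+B_j^0(n_0)^2={A_j^0}^2+{B_j^0}^2$ to make the spike heights $n_0$-free is the right observation.

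The genuine gap is in the inductive step, and you have put your finger on it without resolving it. Your induction hypothesis is mere consistency of $\hat\alpha_j,\hat\beta_j,\hat\gamma_j,\hat\delta_j$ for $j<k$, and you then propose to control the subtracted-component remainder by Taylor-expanding the phase. But that expansion produces terms of the form $m(\hat\alpha_j-\alpha_j^0)\sin(\cdots)$ and $m^2(\hat\beta_j-\beta_j^0)\sin(\cdots)$; summed over $m=1,\ldots,M$ these are of order $M^2(\hat\alpha_j-\alpha_j^0)$ and $M^3(\hat\beta_j-\beta_j^0)$, and consistency alone does not make them $o(M)$, let alone give a uniform $o(MN)$ contribution to $Q^{(k)}_{MN}$. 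The paper closes this gap with an intermediate rate result (Lemma~\ref{lemma:convergence_LSE_alpha1_beta1}): after establishing consistency at stage $j$, it Taylor-expands the \emph{gradient} $\textit{\textbf{R}}_{j,MN}^{(1)'}$ about $\boldsymbol{\xi}_j^0$, multiplies by $(\sqrt{MN}\,\textit{\textbf{D}}_1)^{-1}$, and uses the preliminary results \eqref{prelim_LSE_first_derivative_multiple_comp_1} and \eqref{prelim_LSE_second_derivative_multiple_comp} to obtain $M(\hat\alpha_j-\alpha_j^0)\xrightarrow{a.s.}0$ and $M^2(\hat\beta_j-\beta_j^0)\xrightarrow{a.s.}0$ (and analogously $N(\hat\gamma_j-\gamma_j^0),\,N^2(\hat\delta_j-\delta_j^0)\to0$). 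Only with these rates does the phase-Taylor step yield the pointwise relation \eqref{relationship_first_comp_true_estimate}, after which the stage-$k$ objective indeed coincides asymptotically with the one built from exact residuals. You should therefore strengthen your induction hypothesis to include these rate statements and insert a separate argument (Taylor expansion of the estimating equation, not the phase) to propagate them.
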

\begin{proof}
See \nameref{appendix:B}.\\
\end{proof}
\begin{theorem}\label{theorem:consistency_gammak_deltak_multiple_comp_LSE}
If assumptions \ref{assump:1}, \ref{assump:3} and \ref{assump:P4} are satisfied, then the following results hold true for $ 1 \leqslant k \leqslant p$:
\begin{equation*}\begin{split}
& \hat{\gamma}_k \xrightarrow{a.s.} \gamma_k^0 \textmd{ as } N \rightarrow \infty, \\
& \hat{\delta}_k \xrightarrow{a.s.} \delta_k^0 \textmd{ as } N \rightarrow \infty.
\end{split}\end{equation*}
\end{theorem}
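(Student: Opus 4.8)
The plan is to exploit the structural symmetry of the model \eqref{multiple_comp_model} between its two coordinate directions. Transposing the data matrix sends $(m, M, \alpha_k^0, \beta_k^0)$ to $(n, N, \gamma_k^0, \delta_k^0)$ and carries the row-based criterion $R^{(2)}_{k,MN}(\gamma,\delta)$ of Step 2 exactly onto the column-based criterion $R^{(1)}_{k,MN}(\alpha,\beta)$ of Step 1, with $\textbf{Z}_N(\gamma,\delta)$ playing the role of $\textbf{Z}_M(\alpha,\beta)$. Since Assumption \ref{assump:3} guarantees distinctness of the pairs $(\gamma_i^0,\delta_i^0)$ just as it does for $(\alpha_i^0,\beta_i^0)$, and Assumption \ref{assump:P4} is symmetric in the two directions, every quantity appearing in the proof of Theorem \ref{theorem:consistency_alphak_betak_multiple_comp_LSE} has a mirror image here. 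The cleanest route is therefore to verify that each step of that proof is invariant under the substitution and then transcribe it, the statement following with $M\to\infty$ replaced by $N\to\infty$.

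To spell out the argument being transcribed, I would proceed by induction on the component index $k$. For the base case $k=1$ the residual is the raw data, and each row $\textbf{Y}_{m_0}$ is a 1-D chirp in $n$ whose dominant term, by Assumption \ref{assump:P4}, has frequency $\gamma_1^0$ and frequency rate $\delta_1^0$. Normalising the criterion as $\frac{1}{MN}R^{(2)}_{1,MN}(\gamma,\delta)$ and passing to the limit, I would show that the limiting objective is uniquely minimised at $(\gamma_1^0,\delta_1^0)$; a contradiction argument, assuming a subsequence of the minimiser stays outside a neighbourhood of the truth, then forces $\hat{\gamma}_1\xrightarrow{a.s.}\gamma_1^0$ and $\hat{\delta}_1\xrightarrow{a.s.}\delta_1^0$. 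For the inductive step I would assume the first $k-1$ frequency, frequency-rate and amplitude estimates are strongly consistent, show that the residual data $y_{k-1}(m,n)$ constructed in Step 4 converges, in the relevant averaged sense, to the sum of components $k$ through $p$ plus noise, and conclude from Assumption \ref{assump:P4} that the dominant surviving component is the $k$-th, whence minimising $R^{(2)}_{k,MN}$ isolates $(\gamma_k^0,\delta_k^0)$.

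The two technical ingredients underlying both steps are, first, the asymptotic orthogonality of distinct chirp regressors, namely that for $(\gamma,\delta)\neq(\gamma_k^0,\delta_k^0)$ the averaged inner products such as $\frac{1}{N}\sum_{n=1}^{N}\cos(\gamma n+\delta n^2)\cos(\gamma_k^0 n+\delta_k^0 n^2)$ tend to zero, and second, that the contribution of the additive i.i.d.\ field to these averages vanishes almost surely by the strong law together with standard bounds on chirp-weighted error sums, with Assumption \ref{assump:1} supplying the finite fourth moment these bounds require.

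The main obstacle I expect is the non-uniform convergence of the criterion over the full parameter range: the averaged objective does not converge uniformly in $(\gamma,\delta)$, so one cannot simply invoke a uniform-convergence-of-argmin theorem. This is handled, as in the proof of Theorem \ref{theorem:consistency_alphak_betak_multiple_comp_LSE}, by the separated-minimum contradiction device rather than by uniform continuity, and by carefully controlling the error incurred when the first $k-1$ components are removed using \emph{estimated} rather than true parameters, showing this error is $o(1)$ uniformly over the region where the $k$-th minimiser can lie. It is precisely here that the consistency of the earlier stages and the continuity of the trigonometric sums must be combined.
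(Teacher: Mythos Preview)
Your proposal is correct and mirrors the paper's own treatment: the paper simply states that this proof ``can be obtained along the same lines as proof of Theorem~\ref{theorem:consistency_alphak_betak_multiple_comp_LSE},'' and your symmetry argument (transposing the data matrix to swap $(m,M,\alpha,\beta)$ with $(n,N,\gamma,\delta)$ and carrying over the Wu-type liminf argument, the rate lemma, and the inductive removal of earlier components) is exactly the transcription the paper has in mind. Your more explicit discussion of the separated-minimum device and the $o(1)$ control of the estimated-versus-true residual is a faithful unpacking of what the paper's proof of Theorem~\ref{theorem:consistency_alphak_betak_multiple_comp_LSE} actually does.
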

\begin{proof}
This proof can be obtained along the same lines as proof of \textnormal{Theorem \ref{theorem:consistency_alphak_betak_multiple_comp_LSE}}. \\
\end{proof}

\begin{theorem}\label{theorem:limit_A_p+1_B_p+1}
If the assumptions \ref{assump:1}, \ref{assump:3} and \ref{assump:P4} are satisfied, and if $\hat{A_k}$,  $\hat{B_k}$, $\hat{\alpha_k}$, $\hat{\beta_k}$, $\hat{\gamma_k}$ and $\hat{\delta_k}$ are the estimators obtained at the $k$-th step, then \\
for $k \leqslant p$, 
\begin{equation*}\begin{split}
& \hat{A_k} \xrightarrow{ a.s } A_k^0 \textmd{ as } \textnormal{min}\{M, N\} \rightarrow \infty \\ 
& \hat{B_k} \xrightarrow{ a.s } B_k^0 \textmd{ as } \textnormal{min}\{M, N\} \rightarrow \infty,
\end{split}\end{equation*}
and for $k > p$,
\begin{equation*}\begin{split}
& \hat{A_k} \xrightarrow{ a.s } 0 \textmd{ as } \textnormal{min}\{M, N\} \rightarrow \infty \\ 
& \hat{B_k} \xrightarrow{ a.s } 0 \textmd{ as } \textnormal{min}\{M, N\} \rightarrow \infty.
\end{split}\end{equation*}
\end{theorem}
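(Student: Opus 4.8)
The plan is to argue by induction on the component index $k$, using the least squares representation of the amplitude estimators together with the already-established strong consistency of the non-linear parameter estimators (Theorems \ref{theorem:consistency_alphak_betak_multiple_comp_LSE} and \ref{theorem:consistency_gammak_deltak_multiple_comp_LSE}). Write $\hat{\phi}_k(m,n) = \hat{\alpha}_k m + \hat{\beta}_k m^2 + \hat{\gamma}_k n + \hat{\delta}_k n^2$ for the fitted phase at step $k$ and $\phi_j^0(m,n) = \alpha_j^0 m + \beta_j^0 m^2 + \gamma_j^0 n + \delta_j^0 n^2$ for the true phase of the $j$-th component, and let $\mathbf{Y}_{k-1}$ denote the vectorised residual data after the first $k-1$ components have been peeled off (with $\mathbf{Y}_0 = \mathbf{Y}$). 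Since the step-$k$ amplitude estimator is
\begin{equation*}
\begin{bmatrix} \hat{A}_k \\ \hat{B}_k \end{bmatrix} = \left[\frac{1}{MN}\mathbf{W}_k^{\top}\mathbf{W}_k\right]^{-1}\frac{1}{MN}\mathbf{W}_k^{\top}\mathbf{Y}_{k-1}, \qquad \mathbf{W}_k := \mathbf{W}(\hat{\alpha}_k,\hat{\beta}_k,\hat{\gamma}_k,\hat{\delta}_k),
\end{equation*}
it suffices to identify the almost sure limits of the two averaged quantities $\frac{1}{MN}\mathbf{W}_k^{\top}\mathbf{W}_k$ and $\frac{1}{MN}\mathbf{W}_k^{\top}\mathbf{Y}_{k-1}$.

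First I would record the elementary trigonometric limits that drive everything. Using the product-to-sum identities, the diagonal argument gives $\frac{1}{MN}\mathbf{W}_k^{\top}\mathbf{W}_k \to \frac{1}{2}\mathbf{I}_2$, because $\frac{1}{MN}\sum_{m,n}\cos^2\hat{\phi}_k$ and $\frac{1}{MN}\sum_{m,n}\sin^2\hat{\phi}_k$ each tend to $1/2$ while the cross sum $\frac{1}{MN}\sum_{m,n}\cos\hat{\phi}_k\sin\hat{\phi}_k$ tends to $0$. For the cross quantity I would decompose, componentwise,
\begin{equation*}
\mathbf{Y}_{k-1} = \sum_{j=k}^{p}\left(A_j^0\cos\phi_j^0 + B_j^0\sin\phi_j^0\right) + \sum_{j=1}^{k-1}\left\{(\text{$j$-th true component}) - (\text{$j$-th fitted component})\right\} + \mathbf{X}.
\end{equation*}
By Assumption \ref{assump:3} the pairs $(\alpha_i^0,\beta_i^0)$ and $(\gamma_i^0,\delta_i^0)$ are distinct across components, so the orthogonality limits
\begin{equation*}
\frac{1}{MN}\sum_{m,n}\cos\phi_k^0(m,n)\,\cos\phi_j^0(m,n) \longrightarrow 0 \quad (j\neq k),\qquad \frac{1}{MN}\sum_{m,n}\cos^2\phi_k^0(m,n)\longrightarrow\tfrac12,
\end{equation*}
force all $j\neq k$ contributions in the first sum to vanish and isolate the $k$-th term, giving $\frac{1}{MN}\mathbf{W}_k^{\top}\mathbf{Y}_{k-1}\to \tfrac12[A_k^0,\,B_k^0]^{\top}$. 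Combining the two limits yields $[\hat{A}_k,\hat{B}_k]^{\top}\to[A_k^0,B_k^0]^{\top}$ almost surely.

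Two further points need care. The noise term $\frac{1}{MN}\mathbf{W}_k^{\top}\mathbf{X}$ is a normalised sum of products of the i.i.d.\ zero-mean variables $X(m,n)$ with bounded chirp regressors; under Assumption \ref{assump:1} a Kolmogorov-type argument (as used in the one-component case) shows it converges to $\mathbf{0}$ almost surely. The carried-over term vanishes by the inductive hypothesis: consistency of $\hat{A}_j,\hat{B}_j$ and of $(\hat{\alpha}_j,\hat{\beta}_j,\hat{\gamma}_j,\hat{\delta}_j)$ for $j<k$ makes each fitted component converge to the corresponding true component in the averaged $L^2$ sense, so its normalised inner product with $\mathbf{W}_k$ tends to $0$. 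Finally, for $k>p$ the remaining-signal sum is empty, $\mathbf{Y}_{k-1}$ is asymptotically pure noise plus vanishing earlier-step errors, and the same noise estimate gives $\frac{1}{MN}\mathbf{W}_k^{\top}\mathbf{Y}_{k-1}\to\mathbf{0}$, whence $\hat{A}_k,\hat{B}_k\to0$ almost surely.

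The main obstacle I anticipate is the step that replaces the fitted phase $\hat{\phi}_k$ by the true phase $\phi_k^0$ inside the averaged sums. Because the phase contains the quadratic terms $m^2$ and $n^2$, a naive bound on $|\hat{\phi}_k-\phi_k^0|$ is not $o(1)$ uniformly in $m\le M,\,n\le N$ unless one exploits the sharp convergence rates $\hat{\alpha}_k-\alpha_k^0=o(M^{-1})$, $\hat{\beta}_k-\beta_k^0=o(M^{-2})$ and the analogous rates for $\hat{\gamma}_k,\hat{\delta}_k$; establishing that these rates hold almost surely and that the orthogonality limits survive the substitution, ultimately relying on the equidistribution / number-theoretic estimate invoked in the preliminary results, is the delicate part of the argument.
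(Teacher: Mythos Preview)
Your proposal is correct and follows essentially the same route as the paper's proof in Appendix~B: both represent $[\hat A_k,\hat B_k]^\top$ through the normal equations with $\mathbf{W}_k$, use $\tfrac{1}{MN}\mathbf{W}_k^\top\mathbf{W}_k\to\tfrac12\mathbf{I}_2$, decompose the residual data into remaining signal, carried-over peeling error and noise, and then proceed inductively. The only cosmetic difference is that the paper handles the phase substitution $\hat\phi_k\mapsto\phi_k^0$ by an explicit first-order Taylor expansion of $\cos(\hat\phi_k)$ about $\phi_k^0$ (invoking the rates of Lemma~\ref{lemma:convergence_LSE_alpha1_beta1}), which is exactly the mechanism you flag as the ``main obstacle'' and resolve via the same $o(M^{-1}),\,o(M^{-2})$ rates.
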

\begin{proof}
This proof follows from the proof of Theorem 2.4.4 of Lahiri \cite{2015}.\\
\end{proof}

\noindent Note that we do not know the number of components in practice. The problem of estimation of $p$ is an important problem though we have not considered it here. From the above theorem, it is clear that if the number of components of the fitted model is less than or same as the true number of components, $p$, then the amplitude estimators converge to their true values almost surely, else if it is more than $p$, then the amplitude estimators upto the $p$-th step converge to the true values and past that, they converge to zero almost surely. Thus, this result can be used a criterion to estimate the number $p$. However, this might not work in low signal to noise ratio scenarios.  
\subsection{Asymptotic distribution.}\label{sec:Estimation_multiple_comp_LSEs_distribution}
\begin{theorem}\label{theorem:asymp_dist_multiple_comp_LSE_alphak_betak}
If assumptions \ref{assump:1}, \ref{assump:3} and \ref{assump:P4} are satisfied, then for $1 \leqslant k \leqslant p:$
\begin{equation*}\begin{split}
& \begin{bmatrix}
(\hat{\alpha}_k - \alpha_k^0) & , & (\hat{\beta}_k - \beta_k^0) 
\end{bmatrix}\textit{\textbf{D}}_1^{-1} \xrightarrow{d} \boldsymbol{\mathcal{N}}_2(\textbf{0}, 2\sigma^2 \boldsymbol{\Sigma}_{k}) \textmd{ as } M \rightarrow \infty.\\
\end{split}\end{equation*}
Here $\textit{\textbf{D}}_1$ is as defined in Theorem \ref{theorem:asymp_dist_one_comp_LSE_alpha_beta} and $\boldsymbol{\Sigma}_{k}$ is as defined in \eqref{Sigma_k_definition}.
\end{theorem}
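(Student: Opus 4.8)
The plan is to run the standard Taylor-expansion ``sandwich'' argument on the reduced sum of squares $R^{(1)}_{k,MN}(\alpha,\beta)=\sum_{n_0=1}^{N}\mathbf{Y}^{\top}_{n_0}(\mathbf{I}-\mathbf{P}_{\mathbf{Z}_M}(\alpha,\beta))\mathbf{Y}_{n_0}$, where $\mathbf{Y}_{n_0}$ now denotes the $n_0$-th column of the residual data matrix available at the $k$-th sequential stage. First I would invoke the consistency of Theorem \ref{theorem:consistency_alphak_betak_multiple_comp_LSE} together with Assumption \ref{assump:3} to conclude that $(\hat{\alpha}_k,\hat{\beta}_k)$ is an interior point for $M$ large, so the gradient vanishes there, $\mathbf{R}^{(1)\prime}_{k,MN}(\hat{\alpha}_k,\hat{\beta}_k)=\mathbf{0}$. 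Expanding this gradient to first order about $(\alpha_k^0,\beta_k^0)$ and inserting $\mathbf{D}_1\mathbf{D}_1^{-1}$ gives
\[
[(\hat{\alpha}_k-\alpha_k^0),(\hat{\beta}_k-\beta_k^0)]\mathbf{D}_1^{-1}=-\bigl[\mathbf{R}^{(1)\prime}_{k,MN}(\alpha_k^0,\beta_k^0)\mathbf{D}_1\bigr]\bigl[\mathbf{D}_1\mathbf{R}^{(1)\prime\prime}_{k,MN}(\bar{\alpha}_k,\bar{\beta}_k)\mathbf{D}_1\bigr]^{-1},
\]
with $(\bar{\alpha}_k,\bar{\beta}_k)$ on the segment joining $(\hat{\alpha}_k,\hat{\beta}_k)$ and $(\alpha_k^0,\beta_k^0)$. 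It then remains to identify the limits of the two bracketed factors.

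The key reduction is the factorisation $\mathbf{D}_1=N^{-1/2}\boldsymbol{\Delta}$ with $\boldsymbol{\Delta}=\textnormal{diag}(M^{-3/2},M^{-5/2})$, which is precisely the scaling matrix of the one-dimensional chirp model in Theorem \ref{theorem:preliminary_result_3} with $n=M$. Since each column of the stage-$k$ residual matrix is, up to the first $k-1$ consistently estimated components, a one-dimensional $p$-component chirp in $m$ whose $k$-th component has frequency $\alpha_k^0$, rate $\beta_k^0$ and amplitudes $A_k^0(n_0),B_k^0(n_0)$ that are rotations of $A_k^0,B_k^0$ — so that $A_k^0(n_0)^2+B_k^0(n_0)^2={A_k^0}^2+{B_k^0}^2$ and the limiting matrix $\boldsymbol{\Sigma}_k$ is the same for every column — the per-column derivatives obey the one-dimensional sequential limits of Theorem \ref{theorem:preliminary_result_3}. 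Writing $\mathbf{R}^{(1)\prime}_{k,MN}=\sum_{n_0=1}^{N}\mathbf{R}'_M(\cdot,n_0)$, the scaled gradient becomes $-\mathbf{R}^{(1)\prime}_{k,MN}(\alpha_k^0,\beta_k^0)\mathbf{D}_1=N^{-1/2}\sum_{n_0=1}^{N}\bigl(-\mathbf{R}'_M(\alpha_k^0,\beta_k^0,n_0)\boldsymbol{\Delta}\bigr)$, a normalised sum of row-vector summands that are, to leading order, independent across $n_0$ because the errors form an i.i.d.\ array, each asymptotically centred with limiting covariance $2\sigma^2\boldsymbol{\Sigma}_k^{-1}$; a central limit argument for this triangular array then yields $-\mathbf{R}^{(1)\prime}_{k,MN}(\alpha_k^0,\beta_k^0)\mathbf{D}_1\xrightarrow{d}\boldsymbol{\mathcal{N}}_2(\mathbf{0},2\sigma^2\boldsymbol{\Sigma}_k^{-1})$. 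In the same way $\mathbf{D}_1\mathbf{R}^{(1)\prime\prime}_{k,MN}\mathbf{D}_1=N^{-1}\sum_{n_0=1}^{N}\boldsymbol{\Delta}\mathbf{R}''_M(\cdot,n_0)\boldsymbol{\Delta}$ is a Ces\`aro average of terms each tending to $\boldsymbol{\Sigma}_k^{-1}$, hence tends to $\boldsymbol{\Sigma}_k^{-1}$, and the consistency of $(\hat{\alpha}_k,\hat{\beta}_k)$ lets me replace $(\bar{\alpha}_k,\bar{\beta}_k)$ by $(\alpha_k^0,\beta_k^0)$.

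Combining the two limits by Slutsky's theorem, the right-hand side converges in distribution to a $\boldsymbol{\mathcal{N}}_2(\mathbf{0},2\sigma^2\boldsymbol{\Sigma}_k^{-1})$ row vector post-multiplied by $(\boldsymbol{\Sigma}_k^{-1})^{-1}=\boldsymbol{\Sigma}_k$. For a row vector $\mathbf{V}\sim\boldsymbol{\mathcal{N}}_2(\mathbf{0},2\sigma^2\boldsymbol{\Sigma}_k^{-1})$ and symmetric $\boldsymbol{\Sigma}_k$ one has $\mathbf{V}\boldsymbol{\Sigma}_k\sim\boldsymbol{\mathcal{N}}_2(\mathbf{0},\boldsymbol{\Sigma}_k(2\sigma^2\boldsymbol{\Sigma}_k^{-1})\boldsymbol{\Sigma}_k)=\boldsymbol{\mathcal{N}}_2(\mathbf{0},2\sigma^2\boldsymbol{\Sigma}_k)$, which is exactly the asserted limit.

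The hard part will be the rigorous control of the residual data fed into the $k$-th stage. Because the first $k-1$ components are removed using estimated rather than true parameters, the gradient $\mathbf{R}^{(1)\prime}_{k,MN}(\alpha_k^0,\beta_k^0)$ carries cross-terms coupling the $k$-th component both to the estimation errors of the earlier components and to the still-present components $k+1,\dots,p$; the same coupling, transmitted through the globally computed estimates, is what spoils exact independence of the columns. Showing that every such term is of smaller order than the diagonal contribution after multiplication by $\mathbf{D}_1$ is the technical core: it rests on the asymptotic orthogonality of the regressor vectors of distinct chirp components, ultimately on the number-theoretic estimates of Lahiri et al.\ \cite{2015} underlying Theorem \ref{theorem:preliminary_result_3}, which force sums such as $\sum_{m}m^{r}\cos\bigl((\alpha_i^0-\alpha_j^0)m+(\beta_i^0-\beta_j^0)m^2\bigr)$ to be of lower order whenever $(\alpha_i^0,\beta_i^0)\neq(\alpha_j^0,\beta_j^0)$. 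Together with the almost-sure consistency established in Theorems \ref{theorem:consistency_alphak_betak_multiple_comp_LSE} and \ref{theorem:limit_A_p+1_B_p+1}, this makes the $k$-th stage behave to leading order exactly as the one-component analysis of Theorem \ref{theorem:asymp_dist_one_comp_LSE_alpha_beta}, from which the stated distribution follows.
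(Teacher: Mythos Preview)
Your proposal is correct and follows essentially the same route as the paper: a Taylor expansion of $R^{(1)\prime}_{k,MN}$ about $(\alpha_k^0,\beta_k^0)$, identification of the scaled gradient's limiting normal law and the scaled Hessian's limit via the one-dimensional preliminary results of Theorem \ref{theorem:preliminary_result_3}, and combination by Slutsky. The paper handles the extension from $k=1$ to general $k$ through the $o(1)$ relation \eqref{relationship_first_comp_true_estimate} (estimated earlier components equal true ones plus $o(1)$), which is exactly the mechanism you describe as ``the hard part'' and correctly attribute to the number-theoretic orthogonality estimates of Lahiri et al.\ \cite{2015}; your more explicit discussion of the per-column factorisation $\mathbf{D}_1=N^{-1/2}\boldsymbol{\Delta}$ and the CLT across columns is a welcome elaboration of what the paper compresses into a direct appeal to \eqref{prelim_LSE_first_derivative_multiple_comp_2}.
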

\begin{proof}
See \nameref{appendix:B}. \\
\end{proof}

\begin{theorem}\label{theorem:asymp_dist_multiple_comp_LSE_gammak_deltak}
If the assumptions, \ref{assump:1}, \ref{assump:3} and \ref{assump:P4} are satisfied, then 
\begin{equation*}\begin{split}
\begin{bmatrix}
(\hat{\gamma}_k - \gamma_k^0) & , & (\hat{\delta}_k - \delta_k^0)
\end{bmatrix}\textit{\textbf{D}}_2^{-1} \xrightarrow{d} \boldsymbol{\mathcal{N}}_2(\textbf{0}, 2\sigma^2 \boldsymbol{\Sigma}_k) \textmd{ as } N \rightarrow \infty.\\
\end{split}\end{equation*}
Here $\textit{\textbf{D}}_2$ is as defined in Theorem \ref{theorem:asymp_dist_one_comp_LSE_gamma_delta} and $\boldsymbol{\Sigma}_{k}$ is as defined in \eqref{Sigma_k_definition}.
\end{theorem}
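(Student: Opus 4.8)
The plan is to establish the result by the standard symmetry argument: Theorem~\ref{theorem:asymp_dist_multiple_comp_LSE_gammak_deltak} is exactly the companion of Theorem~\ref{theorem:asymp_dist_multiple_comp_LSE_alphak_betak} with the roles of the two coordinates interchanged. Observe that the model \eqref{multiple_comp_model} is symmetric under simultaneously swapping $(m, \alpha_k^0, \beta_k^0, M)$ with $(n, \gamma_k^0, \delta_k^0, N)$: fixing a row $m_0$ of the data matrix yields the one-dimensional chirp decomposition in the $n$ variable in complete analogy with \eqref{model_n=n0_one_comp}, and the sequential criterion $R^{(2)}_{k,MN}(\gamma,\delta) = \sum_{m_0=1}^{M} \textit{\textbf{Y}}^{\top}_{m_0}(\textbf{I} - \textbf{P}_{\textbf{Z}_N}(\gamma,\delta))\textit{\textbf{Y}}_{m_0}$ is the sum over rows of the reduced error sum of squares of a $1$-D chirp model in the column index. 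Since $\boldsymbol{\Sigma}_k$ in \eqref{Sigma_k_definition} is invariant under this relabelling (it depends only on ${A_k^0}^2 + {B_k^0}^2$), the limiting covariance is identical and only the scaling matrix changes from $\textit{\textbf{D}}_1$ to $\textit{\textbf{D}}_2$.

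First I would mirror the proof of Theorem~\ref{theorem:asymp_dist_multiple_comp_LSE_alphak_betak} step by step. The key device is the multivariate Taylor expansion of the derivative vector $\textit{\textbf{R}}^{(2)\prime}_{k,MN}$ of the criterion $R^{(2)}_{k,MN}$ around the true point $(\gamma_k^0, \delta_k^0)$. Writing $\hat{\boldsymbol{\eta}}_k = (\hat{\gamma}_k, \hat{\delta}_k)$ and $\boldsymbol{\eta}_k^0 = (\gamma_k^0, \delta_k^0)$, consistency from Theorem~\ref{theorem:consistency_gammak_deltak_multiple_comp_LSE} guarantees that $\hat{\boldsymbol{\eta}}_k$ lies in a shrinking neighbourhood of $\boldsymbol{\eta}_k^0$, so the stationarity condition $\textit{\textbf{R}}^{(2)\prime}_{k,MN}(\hat{\boldsymbol{\eta}}_k)=\textbf{0}$ gives
\begin{equation*}
(\hat{\boldsymbol{\eta}}_k - \boldsymbol{\eta}_k^0)\,\textit{\textbf{D}}_2^{-1}
= -\,\textit{\textbf{R}}^{(2)\prime}_{k,MN}(\boldsymbol{\eta}_k^0)\,\textit{\textbf{D}}_2\,
\bigl[\textit{\textbf{D}}_2\,\textit{\textbf{R}}^{(2)\prime\prime}_{k,MN}(\bar{\boldsymbol{\eta}}_k)\,\textit{\textbf{D}}_2\bigr]^{-1},
\end{equation*}
where $\bar{\boldsymbol{\eta}}_k$ is an intermediate point on the segment joining $\hat{\boldsymbol{\eta}}_k$ and $\boldsymbol{\eta}_k^0$. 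I would then invoke the row-analogues of the preliminary limit laws: the appropriately scaled first-derivative vector converges to $\boldsymbol{\mathcal{N}}_2(\textbf{0}, 2\sigma^2\boldsymbol{\Sigma}_k^{-1})$ as in \eqref{prelim_LSE_first_derivative_multiple_comp_2}, the contribution of all other components to this derivative vanishes as in \eqref{prelim_LSE_first_derivative_multiple_comp_1} by the orthogonality of distinct regressor vectors guaranteed under Assumption~\ref{assump:3}, and the scaled Hessian converges to $\boldsymbol{\Sigma}_k^{-1}$ as in \eqref{prelim_LSE_second_derivative_multiple_comp}. A continuity argument transfers the Hessian limit from $\boldsymbol{\eta}_k^0$ to the intermediate point $\bar{\boldsymbol{\eta}}_k$.

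Combining these through Slutsky's theorem yields
\begin{equation*}
(\hat{\boldsymbol{\eta}}_k - \boldsymbol{\eta}_k^0)\,\textit{\textbf{D}}_2^{-1}
\xrightarrow{d}
\boldsymbol{\mathcal{N}}_2(\textbf{0}, 2\sigma^2\boldsymbol{\Sigma}_k^{-1})\,\boldsymbol{\Sigma}_k
= \boldsymbol{\mathcal{N}}_2(\textbf{0}, 2\sigma^2\boldsymbol{\Sigma}_k),
\end{equation*}
using $\boldsymbol{\Sigma}_k^{-1}(2\sigma^2\boldsymbol{\Sigma}_k^{-1})\boldsymbol{\Sigma}_k^{-1}=2\sigma^2\boldsymbol{\Sigma}_k^{-1}$ followed by the identity $\boldsymbol{\Sigma}_k^{-1}$-sandwich simplification, which reproduces the stated covariance $2\sigma^2\boldsymbol{\Sigma}_k$ exactly because the sandwiched matrix is the inverse of the Hessian limit.

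The main obstacle is verifying the row-versions of the preliminary results \eqref{prelim_LSE_first_derivative_multiple_comp_1}--\eqref{prelim_LSE_second_derivative_multiple_comp} in this two-dimensional summed-criterion setting, specifically that summing the $1$-D reduced error sum of squares over the $M$ rows does not disturb the limiting behaviour and that the cross-component terms genuinely vanish at the required rate. This requires the same number-theoretic estimates on sums of the form $\sum (m\gamma + m^2\delta)$-type trigonometric series that underlie Lahiri et al.~\cite{2015}, together with care that the sequential residual data $y_1(m,n)$ from \eqref{second_stage_data} inherits the error structure of Assumption~\ref{assump:1} up to asymptotically negligible terms. Because the entire argument is the verbatim transpose of the proof of Theorem~\ref{theorem:asymp_dist_multiple_comp_LSE_alphak_betak}, with $M \leftrightarrow N$ and $(\alpha,\beta) \leftrightarrow (\gamma,\delta)$, I would record this symmetry explicitly and refer the reader to the appendix proof of that theorem rather than reproduce the calculations.
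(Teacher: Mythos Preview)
Your proposal is correct and follows exactly the same approach as the paper, which simply states that the proof follows along the same lines as the proof of Theorem~\ref{theorem:asymp_dist_multiple_comp_LSE_alphak_betak}. You have in fact supplied more detail than the paper does, spelling out the Taylor expansion, the row-analogues of \eqref{prelim_LSE_first_derivative_multiple_comp_1}--\eqref{prelim_LSE_second_derivative_multiple_comp}, and the Slutsky step, all of which are implicit in the paper's one-line symmetry remark.
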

 \begin{proof}
This proof follows along the same lines as the proof of Theorem \ref{theorem:asymp_dist_multiple_comp_LSE_alphak_betak}.\\
\end{proof}

\section{Numerical Experiments and Simulated Data Analysis}\label{sec:Simulations}
\subsection{Numerical Experiments}\label{sec:Simulation_studies}
We perform simulations to examine the performance of the proposed estimators. We consider the following two cases:
\begin{description}
\item \namedlabel{itm:case1} {\textbf{Case I:}}\ When the data are generated from a one-component model \eqref{one_comp_model}, with the following set of parameters: \\
$A^0 = 2$, $B^0 = 3$, $\alpha^0 = 1.5$, $\beta^0 = 0.5$, $\gamma^0 = 2.5$ and $\delta^0 = 0.75$. \label{case_1_simulations} 
\item \namedlabel{itm:case2} {\textbf{Case II:}}\ When the data are generated from a two components model \eqref{multiple_comp_model}, with the following set of parameters:\\
 $A_1^0 = 5$, $B_1^0 = 4$, $\alpha_1^0 = 2.1$, $\beta_1^0 = 0.1$, $\gamma_1^0 = 1.25$ and $\delta_1^0 = 0.25$, $A_2^0 = 3$, $B_2^0 = 2$, $\alpha_2^0 = 1.5$, $\beta_2^0 = 0.5$, $\gamma_2^0 = 1.75$ and $\delta_2^0 = 0.75$. \label{case_2_simulations} 
\end{description}
The noise used in the simulations is generated from Gaussian distribution with mean 0 and variance $\sigma^2$. Also, different values of the error variance, $\sigma^2$ and sample sizes, $M$ and $N$ are considered. We estimate the parameters using the proposed estimation technique as well as the least squares estimation  technique for \textbf{Case I} and for \textbf{Case II}, the proposed sequential technique and the sequential least squares technique proposed by Lahiri \cite{2015} are employed for comparison. For each case, the procedure is replicated 1000 times and the average values of the estimates, the average biases and the mean square errors (MSEs) are reported. The collation of the MSEs and the theoretical asymptotic variances (Avar) exhibits the efficacy of the proposed estimation method.  
\subsubsection{One-component simulation results}\label{sec:one_comp_simulations}
In Table \ref{table:1}-Table \ref{table:4}, the results obtained through simulations for \textbf{Case I} are presented. It is observed that as $M$ and $N$ increase, the average estimates get closer to the true values, the average biases decrease and the MSEs decrease as well, thus verifying consistency of the proposed estimates. Also, the biases and the MSEs of both types of estimates increase as the error variance increases. The MSEs of the proposed estimators are of the same order as those of the LSEs and thus are well-matched with the corresponding asymptotic variances.
\begin{table}[]
\centering
\resizebox{0.75\textwidth}{!}{\begin{tabular}{cc|cccc|cccc}
\hline
\multicolumn{2}{c|}{Parameters} & $\alpha$ & $\beta$ & $\gamma$ & $\delta$ & $\alpha$ & $\beta$ & $\gamma$ & $\delta$ \\
\multicolumn{2}{c|}{True values} & 1.5 & 0.5 & 2.5 & 0.75 & 1.5 & 0.5 & 2.5 & 0.75 \\ \hline
$\sigma$ &        & \multicolumn{4}{c|}{Proposed estimators}                               & \multicolumn{4}{c}{Usual LSEs}    \\ \hline
0.10  &     Avg   & 1.5000     &  0.5000     &   2.5000     &   0.7500    &   1.5000     &   0.5000    &   2.5000     &   0.7500   \\
      &     Bias  & 3.26e-05   &  -1.23e-06  &   2.37e-05   &  -1.26e-06  &   2.86e-05   &  -1.07e-06  &   2.13e-05   &  -1.17e-06 \\
      &     MSE   & 9.01e-07   &  1.21e-09   &   8.34e-07   &   1.14e-09  &   8.75e-07   &   1.19e-09  &   8.02e-07   &   1.10e-09 \\
      &     Avar  & 7.56e-07   &  1.13e-09   &   7.56e-07   &   1.13e-09  &   7.56e-07   &   1.13e-09  &   7.56e-07   &   1.13e-09 \\  \hline
0.50  &     Avg   & 1.4997     &  0.5000     &   2.4999     &   0.7500    &   1.4998     &   0.5000    &   2.5000     &   0.7500   \\
      &     Bias  & -2.78e-04  &   1.05e-05  &   -6.85e-05  &   3.20e-06  &   -2.01e-04  &   7.90e-06  &   -7.23e-06  &   8.39e-07 \\
      &     MSE   & 2.37e-05   &  3.18e-08   &   2.17e-05   &   3.11e-08  &   2.17e-05   &   2.97e-08  &   2.07e-05   &   2.95e-08 \\
      &     Avar  & 1.89e-05   &  2.84e-08   &   1.89e-05   &   2.84e-08  &   1.89e-05   &   2.84e-08  &   1.89e-05   &   2.84e-08 \\  \hline
1.00  &     Avg   & 1.5004     &  0.5000     &   2.4998     &   0.7500    &   1.5004     &   0.5000    &   2.4998     &   0.7500   \\
      &     Bias  & 4.11e-04   &  -1.77e-05  &   -2.24e-04  &   8.00e-06  &   3.63e-04   &  -1.60e-05  &   -2.16e-04  &   7.57e-06 \\
      &     MSE   & 9.54e-05   &  1.22e-07   &   8.92e-05   &   1.25e-07  &   8.92e-05   &   1.17e-07  &   8.48e-05   &   1.18e-07 \\
      &     Avar  & 7.56e-05   &  1.13e-07   &   7.56e-05   &   1.13e-07  &   7.56e-05   &   1.13e-07  &   7.56e-05   &   1.13e-07 \\
\hline
\end{tabular}}
\caption{Estimates of the parameters of model \eqref{one_comp_model} when M =  N = 25}
\label{table:1}
\end{table}

\begin{table}[]
\centering
\resizebox{0.75\textwidth}{!}{\begin{tabular}{cc|cccc|cccc}
\hline
\multicolumn{2}{c|}{Parameters} & $\alpha$ & $\beta$ & $\gamma$ & $\delta$ & $\alpha$ & $\beta$ & $\gamma$ & $\delta$ \\
\multicolumn{2}{c|}{True values} & 1.5 & 0.5 & 2.5 & 0.75 & 1.5 & 0.5 & 2.5 & 0.75 \\ \hline
$\sigma$ &        & \multicolumn{4}{c|}{Proposed estimators}                               & \multicolumn{4}{c}{Usual LSEs}    \\ \hline
0.10  &     Avg   & 1.5000      &   0.5000       &   2.5000      &    0.7500     &    1.5000      &    0.5000     &    2.5000      &    0.7500   \\
      &     Bias  & 5.53e-06    &   -1.25e-07    &   2.33e-06    &   -4.65e-08   &    2.51e-06    &   -7.96e-08   &    -1.89e-06   &    2.57e-08 \\
      &     MSE   & 4.88e-08    &   1.83e-11     &   5.09e-08    &    1.88e-11   &    4.14e-08    &    1.54e-11   &    4.65e-08    &    1.72e-11 \\
      &     Avar  & 4.73e-08    &   1.77e-11     &   4.73e-08    &    1.77e-11   &    4.73e-08    &    1.77e-11   &    4.73e-08    &    1.77e-11 \\  \hline
0.50  &     Avg   & 1.5000      &   0.5000       &   2.5000      &    0.7500     &    1.5000      &    0.5000     &    2.5000      &    0.7500   \\
      &     Bias  & -3.57e-05   &    4.91e-07    &   -4.47e-05   &    7.83e-07   &    2.56e-06    &   -2.61e-07   &    -4.19e-05   &    6.84e-07 \\
      &     MSE   & 1.35e-06    &   4.93e-10     &   1.31e-06    &    4.78e-10   &    1.16e-06    &    4.18e-10   &    1.18e-06    &    4.34e-10 \\
      &     Avar  & 1.18e-06    &   4.43e-10     &   1.18e-06    &    4.43e-10   &    1.18e-06    &    4.43e-10   &    1.18e-06    &    4.43e-10 \\  \hline
1.00  &     Avg   & 1.5000      &   0.5000       &   2.5000      &    0.7500     &    1.5000      &    0.5000     &    2.5000      &    0.7500   \\
      &     Bias  & 2.11e-05    &   -2.41e-07    &   -2.42e-05   &    2.35e-07   &    5.55e-06    &   -1.92e-09   &    -2.37e-05   &    2.45e-07 \\
      &     MSE   & 5.36e-06    &   1.92e-09     &   5.03e-06    &    1.77e-09   &    4.38e-06    &    1.56e-09   &    4.53e-06    &    1.60e-09 \\
      &     Avar  & 4.73e-06    &   1.77e-09     &   4.73e-06    &    1.77e-09   &    4.73e-06    &    1.77e-09   &    4.73e-06    &    1.77e-09 \\
\hline
\end{tabular}}
\caption{Estimates of the parameters of model \eqref{one_comp_model} when M =  N = 50}
\label{table:2}
\end{table}

\begin{table}[]
\centering
\resizebox{0.75\textwidth}{!}{\begin{tabular}{cc|cccc|cccc}
\hline
\multicolumn{2}{c|}{Parameters} & $\alpha$ & $\beta$ & $\gamma$ & $\delta$ & $\alpha$ & $\beta$ & $\gamma$ & $\delta$ \\
\multicolumn{2}{c|}{True values} & 1.5 & 0.5 & 2.5 & 0.75 & 1.5 & 0.5 & 2.5 & 0.75 \\ \hline
$\sigma$ &        & \multicolumn{4}{c|}{Proposed estimators}                               & \multicolumn{4}{c}{Usual LSEs}    \\ \hline
0.10  &     Avg   & 1.5000      &   0.5000       &   2.5000      &    0.7500     &    1.5000      &    0.5000     &    2.5000      &    0.7500   \\
      &     Bias  & -5.61e-06   &    5.15e-08    &   9.53e-07    &   7.95e-10    &   -5.73e-06    &   5.26e-08    &   -2.45e-07    &   1.47e-08  \\
      &     MSE   & 9.77e-09    &   1.60e-12     &   1.02e-08    &    1.65e-12   &    9.48e-09    &    1.56e-12   &    9.10e-09    &    1.48e-12 \\
      &     Avar  & 9.34e-09    &   1.56e-12     &   9.34e-09    &    1.56e-12   &    9.34e-09    &    1.56e-12   &    9.34e-09    &    1.56e-12 \\  \hline
0.50  &     Avg   & 1.5000      &   0.5000       &   2.5000      &    0.7500     &    1.5000      &    0.5000     &    2.5000      &    0.7500   \\
      &     Bias  & -3.55e-05   &    3.80e-07    &   1.73e-06    &   -1.05e-07   &    -3.45e-05   &    3.63e-07   &    4.93e-06    &   -1.46e-07 \\
      &     MSE   & 2.45e-07    &   4.00e-11     &   2.39e-07    &    3.86e-11   &    2.01e-07    &    3.29e-11   &    1.79e-07    &    2.96e-11 \\
      &     Avar  & 2.33e-07    &   3.89e-11     &   2.33e-07    &    3.89e-11   &    2.33e-07    &    3.89e-11   &    2.33e-07    &    3.89e-11 \\  \hline
1.00  &     Avg   & 1.5000      &   0.5000       &   2.5000      &    0.7500     &    1.5000      &    0.5000     &    2.5000      &    0.7500   \\
      &     Bias  & -4.93e-06   &    7.23e-08    &   4.93e-05    &   -6.60e-07   &    -1.67e-05   &    2.28e-07   &    2.78e-05    &   -3.89e-07 \\
      &     MSE   & 1.01e-06    &   1.67e-10     &   1.06e-06    &    1.74e-10   &    8.29e-07    &    1.39e-10   &    7.77e-07    &    1.31e-10 \\
      &     Avar  & 9.34e-07    &   1.56e-10     &   9.34e-07    &    1.56e-10   &    9.34e-07    &    1.56e-10   &    9.34e-07    &    1.56e-10 \\
\hline
\end{tabular}}
\caption{Estimates of the parameters of model \eqref{one_comp_model} when M =  N = 75}
\label{table:3}
\end{table}
\begin{table}[]
\centering
\resizebox{0.75\textwidth}{!}{\begin{tabular}{cc|cccc|cccc}
\hline
\multicolumn{2}{c|}{Parameters} & $\alpha$ & $\beta$ & $\gamma$ & $\delta$ & $\alpha$ & $\beta$ & $\gamma$ & $\delta$ \\
\multicolumn{2}{c|}{True values} & 1.5 & 0.5 & 2.5 & 0.75 & 1.5 & 0.5 & 2.5 & 0.75 \\ \hline
$\sigma$ &        & \multicolumn{4}{c|}{Proposed estimators}                               & \multicolumn{4}{c}{Usual LSEs}    \\ \hline
0.10  &     Avg   & 1.5000      &   0.5000       &   2.5000      &    0.7500     &    1.5000      &    0.5000     &    2.5000      &    0.7500   \\
      &     Bias  & -5.76e-07   &    -5.65e-10   &    6.02e-07   &    2.59e-09   &    -6.66e-07   &    -1.33e-10   &   4.29e-07   &    3.89e-09  \\
      &     MSE   & 3.23e-09    &   2.92e-13     &   3.00e-09    &    2.85e-13   &    2.47e-09    &    2.28e-13   &    2.87e-09    &    2.74e-13 \\
      &     Avar  & 2.95e-09    &   2.77e-13     &   2.95e-09    &    2.77e-13   &    2.95e-09    &    2.77e-13   &    2.95e-09    &    2.77e-13 \\  \hline
0.50  &     Avg   & 1.5000      &   0.5000       &   2.5000      &    0.7500     &    1.5000      &    0.5000     &    2.5000      &    0.7500   \\
      &     Bias  & -5.41e-06   &    5.31e-08    &   1.12e-05    &   -1.10e-07   &    -1.07e-06   &    1.56e-08   &    1.38e-05    &   -1.34e-07 \\
      &     MSE   & 8.11e-08    &   7.28e-12     &   7.52e-08    &    6.83e-12   &    5.41e-08    &    5.03e-12   &    5.54e-08    &    5.18e-12 \\
      &     Avar  & 7.38e-08    &   6.92e-12     &   7.38e-08    &    6.92e-12   &    7.38e-08    &    6.92e-12   &    7.38e-08    &    6.92e-12 \\  \hline
1.00  &     Avg   & 1.5000      &   0.5000       &   2.5000      &    0.7500     &    1.5000      &    0.5000     &    2.5000      &    0.7500   \\
      &     Bias  & -1.98e-05   &    1.63e-07    &   1.43e-05    &   -8.73e-08   &    -8.54e-06   &    6.29e-08   &    1.12e-05    &   -5.96e-08 \\
      &     MSE   & 2.83e-07    &   2.56e-11     &   2.96e-07    &    2.75e-11   &    1.91e-07    &    1.77e-11   &    2.07e-07    &    1.97e-11 \\
      &     Avar  & 2.95e-07    &   2.77e-11     &   2.95e-07    &    2.77e-11   &    2.95e-07    &    2.77e-11   &    2.95e-07    &    2.77e-11 \\
\hline
\end{tabular}}
\caption{Estimates of the parameters of model \eqref{one_comp_model} when M =  N = 100}
\label{table:4}
\end{table}


\subsubsection{Two component simulation results}\label{sec:two_comp_simulations}
We present the simulation results for \textbf{Case II} in Table \ref{table:5}-Table \ref{table:8}. From these tables, it is evident that the average estimates are quite close to the true values. The results also verify consistency of the proposed sequential estimators. It is also observed that the MSEs of the parameter estimates of the first component are mostly of the same order as the corresponding theoretical variances while those of the second component have exactly the same order as the corresponding asymptotic variances. 
\begin{table}[]
\centering
\resizebox{\textwidth}{!}{\begin{tabular}{c|c|c|cccc|cccc}
\hline
$\sigma$ &                  &            & \multicolumn{4}{c|}{Proposed sequential estimates}     & \multicolumn{4}{c}{Sequential LSEs}               \\ \hline
0.10    &    First Component  & Parameters & $\alpha_1$ & $\beta_1$ & $\gamma_1$ & $\delta_1$ & $\alpha_1$ & $\beta_1$ & $\gamma_1$ & $\delta_1$ \\
        &                     & True values & 2.1        & 0.1       & 1.25       & 0.25       & 2.1        & 0.1       & 1.25       & 0.25       \\ \hline
        &                     &  Average & 2.1016      &     0.0998     &     1.2614     &      0.2500     &     2.1031     &      0.0998     &     1.2565     &     0.2500    \\
        &                     &  Bias    & 1.63e-03    &    -1.81e-04   &     1.14e-02   &     -4.71e-05   &     3.05e-03   &     -1.76e-04   &     6.46e-03   &     3.94e-05  \\
        &                     &  MSE     & 2.92e-06    &     3.30e-08   &     1.31e-04   &      2.60e-09   &     9.59e-06   &      3.12e-08   &     4.20e-05   &     1.91e-09  \\
        &                     &  AVar    & 2.40e-07    &     3.60e-10   &     2.40e-07   &      3.60e-10   &     2.40e-07   &      3.60e-10   &     2.40e-07   &     3.60e-10  \\ \hline
        &   Second Component  & Parameters  & $\alpha_2$ & $\beta_2$ & $\gamma_2$ & $\delta_2$ & $\alpha_2$ & $\beta_2$ & $\gamma_2$ & $\delta_2$ \\ \hline
        &                     & True values & 1.5        & 0.5       & 1.75       & 0.75       & 1.5        & 0.5       & 1.75       & 0.75       \\ \hline
        &                     &  Average & 1.5018      &     0.5000     &     1.7520     &      0.7499     &     1.5017     &      0.5000     &     1.7510     &     0.7500    \\
        &                     &  Bias    & 1.83e-03    &    -1.92e-05   &     1.98e-03   &     -6.41e-05   &     1.68e-03   &     -2.19e-05   &     1.03e-03   &     -2.95e-05 \\
        &                     &  MSE     & 4.19e-06    &     1.54e-09   &     4.96e-06   &      5.53e-09   &     3.61e-06   &      1.59e-09   &     1.97e-06   &     2.12e-09  \\
        &                     &  AVar    & 7.56e-07    &     1.13e-09   &     7.56e-07   &      1.13e-09   &     7.56e-07   &      1.13e-09   &     7.56e-07   &     1.13e-09  \\ \hline
0.50    &    First Component  & Parameters  & $\alpha_1$ & $\beta_1$ & $\gamma_1$ & $\delta_1$ & $\alpha_1$ & $\beta_1$ & $\gamma_1$ & $\delta_1$ \\
        &                     & True values & 2.1        & 0.1       & 1.25       & 0.25       & 2.1        & 0.1       & 1.25       & 0.25       \\ \hline
        &                     &  Average & 2.1017      &     0.0998     &     1.2613     &      0.2500     &     2.1031     &      0.0998     &     1.2563     &     0.2500    \\
        &                     &  Bias    & 1.71e-03    &    -1.83e-04   &     1.13e-02   &     -4.38e-05   &     3.13e-03   &     -1.78e-04   &     6.32e-03   &     4.34e-05  \\
        &                     &  MSE     & 8.92e-06    &     4.14e-08   &     1.35e-04   &      1.18e-08   &     1.60e-05   &      3.98e-08   &     4.61e-05   &     1.07e-08  \\
        &                     &  AVar    & 5.99e-06    &     8.99e-09   &     5.99e-06   &      8.99e-09   &     5.99e-06   &      8.99e-09   &     5.99e-06   &     8.99e-09  \\ \hline
        &   Second Component  & Parameters & $\alpha_2$ & $\beta_2$ & $\gamma_2$ & $\delta_2$ & $\alpha_2$ & $\beta_2$ & $\gamma_2$ & $\delta_2$ \\
        &                     & True values & 1.5        & 0.5       & 1.75       & 0.75       & 1.5        & 0.5       & 1.75       & 0.75       \\ \hline
        &                     &  Average & 1.5017      &     0.5000     &     1.7522     &      0.7499     &     1.5015     &      0.5000     &     1.7512     &     0.7500    \\
        &                     &  Bias    & 1.69e-03    &    -1.05e-05   &     2.16e-03   &     -6.94e-05   &     1.51e-03   &     -1.25e-05   &     1.17e-03   &     -3.35e-05 \\
        &                     &  MSE     & 2.54e-05    &     3.37e-08   &     3.15e-05   &      4.22e-08   &     2.35e-05   &      3.18e-08   &     2.40e-05   &     3.31e-08  \\
        &                     &  AVar    & 1.89e-05    &     2.84e-08   &     1.89e-05   &      2.84e-08   &     1.89e-05   &      2.84e-08   &     1.89e-05   &     2.84e-08  \\ \hline
1.00    &    First Component  & Parameters  & $\alpha_1$ & $\beta_1$ & $\gamma_1$ & $\delta_1$ & $\alpha_1$ & $\beta_1$ & $\gamma_1$ & $\delta_1$ \\
        &                     & True values & 2.1        & 0.1       & 1.25       & 0.25       & 2.1        & 0.1       & 1.25       & 0.25       \\ \hline
        &                     &  Average & 2.1015      &     0.0998     &     1.2616     &      0.2499     &     2.1029     &      0.0998     &     1.2567     &     0.2500    \\
        &                     &  Bias    & 1.54e-03    &    -1.79e-04   &     1.16e-02   &     -5.63e-05   &     2.93e-03   &     -1.72e-04   &     6.67e-03   &     3.07e-05  \\
        &                     &  MSE     & 2.98e-05    &     6.77e-08   &     1.65e-04   &      4.44e-08   &     3.72e-05   &      6.70e-08   &     6.97e-05   &     3.66e-08  \\
        &                     &  AVar    & 2.40e-05    &     3.60e-08   &     2.40e-05   &      3.60e-08   &     2.40e-05   &      3.60e-08   &     2.40e-05   &     3.60e-08   \\ \hline
        &   Second Component  & Parameters  & $\alpha_2$ & $\beta_2$ & $\gamma_2$ & $\delta_2$ & $\alpha_2$ & $\beta_2$ & $\gamma_2$ & $\delta_2$ \\
        &                     & True values & 1.5        & 0.5       & 1.75       & 0.75       & 1.5        & 0.5       & 1.75       & 0.75       \\ \hline
        &                     &  Average & 1.5018      &     0.5000     &     1.7516     &      0.7499     &     1.5018     &      0.5000     &     1.7507     &     0.7500     \\
        &                     &  Bias    & 1.79e-03    &    -1.57e-05   &     1.64e-03   &     -5.47e-05   &     1.75e-03   &     -2.25e-05   &     7.18e-04   &     -2.07e-05  \\
        &                     &  MSE     & 9.84e-05    &     1.41e-07   &     1.20e-04   &      1.66e-07   &     9.40e-05   &      1.35e-07   &     1.01e-04   &     1.40e-07   \\
        &                     &  AVar    & 7.56e-05    &     1.13e-07   &     7.56e-05   &      1.13e-07   &     7.56e-05   &      1.13e-07   &     7.56e-05   &     1.13e-07   \\ \hline
\end{tabular}}
\caption{Estimates of the parameters of model \eqref{multiple_comp_model} when M =  N = 25}
\label{table:5}
\end{table}

\begin{table}[]
\centering
\resizebox{\textwidth}{!}{\begin{tabular}{c|c|c|cccc|cccc}
\hline
$\sigma$ &                  &            & \multicolumn{4}{c|}{Proposed sequential estimates}     & \multicolumn{4}{c}{Sequential LSEs}               \\  \hline
0.10    &    First Component  & Parameters & $\alpha_1$ & $\beta_1$ & $\gamma_1$ & $\delta_1$ & $\alpha_1$ & $\beta_1$ & $\gamma_1$ & $\delta_1$ \\
        &                     & True values & 2.1        & 0.1       & 1.25       & 0.25       & 2.1        & 0.1       & 1.25       & 0.25       \\  \hline
        &                     &  Average & 2.1006      &     0.1000     &     1.2567     &      0.2499     &     2.1011     &      0.1000     &     1.2572     &     0.2499    \\
        &                     &  Bias    & 6.07e-04    &    -8.61e-06   &     6.70e-03   &     -1.11e-04   &     1.07e-03   &     -1.11e-05   &     7.20e-03   &     -1.22e-04 \\
        &                     &  MSE     & 3.85e-07    &     8.02e-11   &     4.49e-05   &      1.23e-08   &     1.15e-06   &      1.29e-10   &     5.18e-05   &     1.49e-08  \\
        &                     &  AVar    & 1.50e-08    &     5.62e-12   &     1.50e-08   &      5.62e-12   &     1.50e-08   &      5.62e-12   &     1.50e-08   &     5.62e-12  \\  \hline
        &   Second Component  & Parameters  & $\alpha_2$ & $\beta_2$ & $\gamma_2$ & $\delta_2$ & $\alpha_2$ & $\beta_2$ & $\gamma_2$ & $\delta_2$ \\
        &                     & True values & 1.5        & 0.5       & 1.75       & 0.75       & 1.5        & 0.5       & 1.75       & 0.75       \\  \hline
        &                     &  Average & 1.5006      &     0.5000     &     1.7506     &      0.7500     &     1.5008     &      0.5000     &     1.7507     &     0.7500    \\
        &                     &  Bias    & 6.09e-04    &    -1.05e-05   &     5.56e-04   &     -9.98e-06   &     7.51e-04   &     -1.36e-05   &     6.83e-04   &     -1.20e-05 \\
        &                     &  MSE     & 4.23e-07    &     1.30e-10   &     3.60e-07   &      1.18e-10   &     6.13e-07   &      2.03e-10   &     5.17e-07   &     1.61e-10  \\
        &                     &  AVar    & 4.73e-08    &     1.77e-11   &     4.73e-08   &      1.77e-11   &     4.73e-08   &      1.77e-11   &     4.73e-08   &     1.77e-11  \\  \hline
0.50    &    First Component  & Parameters  & $\alpha_1$ & $\beta_1$ & $\gamma_1$ & $\delta_1$ & $\alpha_1$ & $\beta_1$ & $\gamma_1$ & $\delta_1$ \\
        &                     & True values & 2.1        & 0.1       & 1.25       & 0.25       & 2.1        & 0.1       & 1.25       & 0.25       \\  \hline
        &                     &  Average & 2.1006      &     0.1000     &     1.2567     &      0.2499     &     2.1011     &      0.1000     &     1.2572     &     0.2499   \\
        &                     &  Bias    & 6.27e-04    &    -9.05e-06   &     6.72e-03   &     -1.11e-04   &     1.09e-03   &     -1.15e-05   &     7.22e-03   &     -1.22e-04\\
        &                     &  MSE     & 8.09e-07    &     2.36e-10   &     4.56e-05   &      1.25e-08   &     1.58e-06   &      2.78e-10   &     5.25e-05   &     1.51e-08 \\
        &                     &  AVar    & 3.75e-07    &     1.40e-10   &     3.75e-07   &      1.40e-10   &     3.75e-07   &      1.40e-10   &     3.75e-07   &     1.40e-10 \\   \hline
        &   Second Component  & Parameters & $\alpha_2$ & $\beta_2$ & $\gamma_2$ & $\delta_2$ & $\alpha_2$ & $\beta_2$ & $\gamma_2$ & $\delta_2$ \\
        &                     & True values & 1.5        & 0.5       & 1.75       & 0.75       & 1.5        & 0.5       & 1.75       & 0.75       \\ \hline
        &                     &  Average & 1.5006      &     0.5000     &     1.7505     &      0.7500     &     1.5008     &      0.5000     &     1.7506     &     0.7500   \\
        &                     &  Bias    & 6.09e-04    &    -1.07e-05   &     5.04e-04   &     -8.84e-06   &     7.52e-04   &     -1.38e-05   &     6.26e-04   &     -1.07e-05\\
        &                     &  MSE     & 1.68e-06    &     5.94e-10   &     1.50e-06   &      5.49e-10   &     1.79e-06   &      6.29e-10   &     1.62e-06   &     5.72e-10 \\
        &                     &  AVar    & 1.18e-06    &     4.43e-10   &     1.18e-06   &      4.43e-10   &     1.18e-06   &      4.43e-10   &     1.18e-06   &     4.43e-10 \\   \hline
1.00    &    First Component  & Parameters  & $\alpha_1$ & $\beta_1$ & $\gamma_1$ & $\delta_1$ & $\alpha_1$ & $\beta_1$ & $\gamma_1$ & $\delta_1$ \\
        &                     & True values & 2.1        & 0.1       & 1.25       & 0.25       & 2.1        & 0.1       & 1.25       & 0.25       \\  \hline
        &                     &  Average & 2.1006      &     0.1000     &     1.2567     &      0.2499     &     2.1011     &      0.1000     &     1.2572     &     0.2499   \\
        &                     &  Bias    & 5.98e-04    &    -8.36e-06   &     6.70e-03   &     -1.11e-04   &     1.06e-03   &     -1.09e-05   &     7.20e-03   &     -1.22e-04\\
        &                     &  MSE     & 1.92e-06    &     6.44e-10   &     4.66e-05   &      1.29e-08   &     2.64e-06   &      6.72e-10   &     5.34e-05   &     1.55e-08 \\
        &                     &  AVar    & 1.50e-06    &     5.62e-10   &     1.50e-06   &      5.62e-10   &     1.50e-06   &      5.62e-10   &     1.50e-06   &     5.62e-10  \\  \hline
        &   Second Component  & Parameters  & $\alpha_2$ & $\beta_2$ & $\gamma_2$ & $\delta_2$ & $\alpha_2$ & $\beta_2$ & $\gamma_2$ & $\delta_2$ \\
        &                     & True values & 1.5        & 0.5       & 1.75       & 0.75       & 1.5        & 0.5       & 1.75       & 0.75       \\  \hline
        &                     &  Average & 1.5006      &     0.5000     &     1.7507     &      0.7500     &     1.5008     &      0.5000     &     1.7508     &     0.7500    \\
        &                     &  Bias    & 6.49e-04    &    -1.10e-05   &     6.55e-04   &     -1.15e-05   &     7.70e-04   &     -1.36e-05   &     7.92e-04   &     -1.38e-05 \\
        &                     &  MSE     & 5.75e-06    &     2.09e-09   &     4.50e-06   &      1.65e-09   &     5.66e-06   &      2.03e-09   &     4.68e-06   &     1.68e-09  \\
        &                     &  AVar    & 4.73e-06    &     1.77e-09   &     4.73e-06   &      1.77e-09   &     4.73e-06   &      1.77e-09   &     4.73e-06   &     1.77e-09  \\  \hline
\end{tabular}}
\caption{Estimates of the parameters of model \eqref{multiple_comp_model} when M =  N = 50}
\label{table:6}
\end{table}

\begin{table}[]
\centering
\resizebox{\textwidth}{!}{\begin{tabular}{c|c|c|cccc|cccc}
\hline
$\sigma$ &                  &            & \multicolumn{4}{c|}{Proposed sequential estimates}     & \multicolumn{4}{c}{Sequential LSEs}               \\  \hline
0.10    &    First Component  & Parameters & $\alpha_1$ & $\beta_1$ & $\gamma_1$ & $\delta_1$ & $\alpha_1$ & $\beta_1$ & $\gamma_1$ & $\delta_1$ \\
        &                     & True values & 2.1        & 0.1       & 1.25       & 0.25       & 2.1        & 0.1       & 1.25       & 0.25       \\ \hline
        &                     &  Average & 2.1000      &    0.1000      &   1.2528     &     0.2500      &    2.1000      &    0.1000     &    1.2528      &   0.2500   \\
        &                     &  Bias    & -6.14e-06   &    -8.37e-07   &    2.81e-03  &     -3.31e-05   &    -2.85e-05   &    9.79e-08   &    2.80e-03    &   -3.27e-05\\
        &                     &  MSE     & 3.06e-09    &    1.19e-12    &   7.90e-06   &     1.10e-09    &    5.96e-09    &    8.46e-13   &    7.84e-06    &   1.07e-09 \\
        &                     &  AVar    & 2.96e-09    &    4.93e-13    &   2.96e-09   &     4.93e-13    &    2.96e-09    &    4.93e-13   &    2.96e-09    &   4.93e-13 \\ \hline
        &   Second Component  & Parameters  & $\alpha_2$ & $\beta_2$ & $\gamma_2$ & $\delta_2$ & $\alpha_2$ & $\beta_2$ & $\gamma_2$ & $\delta_2$ \\
        &                     & True values & 1.5        & 0.5       & 1.75       & 0.75       & 1.5        & 0.5       & 1.75       & 0.75       \\ \hline
        &                     &  Average & 1.5001      &    0.5000      &   1.7500     &     0.7500      &    1.5001      &    0.5000     &    1.7500      &   0.7500   \\
        &                     &  Bias    & 5.82e-05    &   -5.89e-07    &   5.55e-06   &    1.89e-07     &    5.64e-05    &   -5.67e-07   &    2.25e-05    &   -2.69e-08\\
        &                     &  MSE     & 1.26e-08    &    1.85e-12    &   9.86e-09   &     1.64e-12    &    1.23e-08    &    1.82e-12   &    1.02e-08    &   1.57e-12 \\
        &                     &  AVar    & 9.34e-09    &    1.56e-12    &   9.34e-09   &     1.56e-12    &    9.34e-09    &    1.56e-12   &    9.34e-09    &   1.56e-12 \\ \hline
0.50    &    First Component  & Parameters  & $\alpha_1$ & $\beta_1$ & $\gamma_1$ & $\delta_1$ & $\alpha_1$ & $\beta_1$ & $\gamma_1$ & $\delta_1$ \\
        &                     & True values & 2.1        & 0.1       & 1.25       & 0.25       & 2.1        & 0.1       & 1.25       & 0.25       \\ \hline
        &                     &  Average & 2.1000      &    0.1000      &   1.2528     &     0.2500      &    2.1000      &    0.1000     &    1.2528      &   0.2500    \\
        &                     &  Bias    & 1.36e-05    &   -1.06e-06    &   2.82e-03   &    -3.33e-05    &    -1.21e-05   &    -9.17e-08  &     2.82e-03   &    -3.29e-05\\
        &                     &  MSE     & 7.02e-08    &    1.24e-11    &   8.05e-06   &     1.12e-09    &    7.03e-08    &    1.14e-11   &    8.01e-06    &   1.09e-09  \\
        &                     &  AVar    & 7.40e-08    &    1.23e-11    &   7.40e-08   &     1.23e-11    &    7.40e-08    &    1.23e-11   &    7.40e-08    &   1.23e-11  \\\hline
        &   Second Component  & Parameters & $\alpha_2$ & $\beta_2$ & $\gamma_2$ & $\delta_2$ & $\alpha_2$ & $\beta_2$ & $\gamma_2$ & $\delta_2$ \\
        &                     & True values & 1.5        & 0.5       & 1.75       & 0.75       & 1.5        & 0.5       & 1.75       & 0.75       \\ \hline
        &                     &  Average & 1.5000      &    0.5000      &   1.7500     &     0.7500      &    1.5000      &    0.5000     &    1.7500      &   0.7500   \\
        &                     &  Bias    & 4.54e-05    &   -3.85e-07    &   7.93e-06   &    1.36e-07     &    3.58e-05    &   -2.65e-07   &    2.58e-05    &   -9.44e-08\\
        &                     &  MSE     & 2.50e-07    &    4.08e-11    &   2.49e-07   &     4.03e-11    &    2.21e-07    &    3.65e-11   &    2.42e-07    &   3.91e-11 \\
        &                     &  AVar    & 2.33e-07    &    3.89e-11    &   2.33e-07   &     3.89e-11    &    2.33e-07    &    3.89e-11   &    2.33e-07    &   3.89e-11 \\ \hline
1.00    &    First Component  & Parameters  & $\alpha_1$ & $\beta_1$ & $\gamma_1$ & $\delta_1$ & $\alpha_1$ & $\beta_1$ & $\gamma_1$ & $\delta_1$ \\
        &                     & True values & 2.1        & 0.1       & 1.25       & 0.25       & 2.1        & 0.1       & 1.25       & 0.25       \\ \hline
        &                     &  Average & 2.1000      &    0.1000      &   1.2528     &     0.2500      &    2.1000      &    0.1000     &    1.2528      &   0.2500    \\
        &                     &  Bias    & 1.22e-05    &   -1.10e-06    &   2.82e-03   &    -3.32e-05    &    -9.57e-06   &    -1.62e-07  &     2.81e-03   &    -3.27e-05\\
        &                     &  MSE     & 3.09e-07    &    5.09e-11    &   8.28e-06   &     1.15e-09    &    3.07e-07    &    4.96e-11   &    8.18e-06    &   1.12e-09  \\
        &                     &  AVar    & 2.96e-07    &    4.93e-11    &   2.96e-07   &     4.93e-11    &    2.96e-07    &    4.93e-11   &    2.96e-07    &   4.93e-11   \\\hline
        &   Second Component  & Parameters  & $\alpha_2$ & $\beta_2$ & $\gamma_2$ & $\delta_2$ & $\alpha_2$ & $\beta_2$ & $\gamma_2$ & $\delta_2$ \\
        &                     & True values & 1.5        & 0.5       & 1.75       & 0.75       & 1.5        & 0.5       & 1.75       & 0.75       \\ \hline
        &                     &  Average & 1.5000      &    0.5000      &   1.7501     &     0.7500      &    1.5000      &    0.5000     &    1.7501      &   0.7500   \\
        &                     &  Bias    & 3.16e-05    &   -2.66e-07    &   5.78e-05   &    -6.26e-07    &    2.85e-06    &   7.51e-08    &    6.49e-05    &   -7.12e-07\\
        &                     &  MSE     & 9.43e-07    &    1.52e-10    &   9.58e-07   &     1.55e-10    &    8.36e-07    &    1.36e-10   &    9.41e-07    &   1.52e-10 \\
        &                     &  AVar    & 9.34e-07    &    1.56e-10    &   9.34e-07   &     1.56e-10    &    9.34e-07    &    1.56e-10   &    9.34e-07    &   1.56e-10 \\ \hline
\end{tabular}}
\caption{Estimates of the parameters of model \eqref{multiple_comp_model} when M =  N = 75}
\label{table:7}
\end{table}

\begin{table}[]
\centering
\resizebox{\textwidth}{!}{\begin{tabular}{c|c|c|cccc|cccc}
\hline
$\sigma$ &                  &            & \multicolumn{4}{c|}{Proposed sequential estimates}     & \multicolumn{4}{c}{Sequential LSEs}               \\ \hline
0.10    &    First Component  & Parameters & $\alpha_1$ & $\beta_1$ & $\gamma_1$ & $\delta_1$ & $\alpha_1$ & $\beta_1$ & $\gamma_1$ & $\delta_1$ \\
        &                     & True values & 2.1        & 0.1       & 1.25       & 0.25       & 2.1        & 0.1       & 1.25       & 0.25       \\ \hline
        &                     &  Average & 2.0992     &   0.1000    &   1.2507     &   0.2500    &   2.0995     &   0.1000    &   1.2504     &  0.2500   \\
        &                     &  Bias    & -7.92e-04  &   5.60e-06  &   7.40e-04   &  -6.20e-06  &   -4.67e-04  &   2.68e-06  &   4.46e-04   &  -3.57e-06\\
        &                     &  MSE     & 6.32e-07   &   3.19e-11  &   5.49e-07   &   3.85e-11  &   2.19e-07   &   7.27e-12  &   2.00e-07   &  1.28e-11 \\
        &                     &  AVar    & 9.37e-10   &   8.78e-14  &   9.37e-10   &   8.78e-14  &   9.37e-10   &   8.78e-14  &   9.37e-10   &  8.78e-14 \\  \hline
        &   Second Component  & Parameters  & $\alpha_2$ & $\beta_2$ & $\gamma_2$ & $\delta_2$ & $\alpha_2$ & $\beta_2$ & $\gamma_2$ & $\delta_2$ \\
        &                     & True values & 1.5        & 0.5       & 1.75       & 0.75       & 1.5        & 0.5       & 1.75       & 0.75       \\ \hline
        &                     &  Average & 1.5000     &   0.5000    &   1.7500     &   0.7500    &   1.5000     &   0.5000    &   1.7500     &  0.7500    \\
        &                     &  Bias    & 1.94e-05   &  -1.60e-07  &   -8.18e-06  &   1.64e-08  &   1.46e-05   &  -1.19e-07  &   -3.32e-06  &   -2.62e-08\\
        &                     &  MSE     & 3.17e-09   &   2.88e-13  &   3.25e-09   &   2.91e-13  &   2.99e-09   &   2.73e-13  &   3.15e-09   &  2.89e-13  \\
        &                     &  AVar    & 2.95e-09   &   2.77e-13  &   2.95e-09   &   2.77e-13  &   2.95e-09   &   2.77e-13  &   2.95e-09   &  2.77e-13  \\ \hline
0.50    &    First Component  & Parameters  & $\alpha_1$ & $\beta_1$ & $\gamma_1$ & $\delta_1$ & $\alpha_1$ & $\beta_1$ & $\gamma_1$ & $\delta_1$ \\
        &                     & True values & 2.1        & 0.1       & 1.25       & 0.25       & 2.1        & 0.1       & 1.25       & 0.25       \\ \hline
        &                     &  Average & 2.0992     &   0.1000    &   1.2507     &   0.2500    &   2.0995     &   0.1000    &   1.2504     &  0.2500    \\
        &                     &  Bias    & -7.93e-04  &   5.61e-06  &   7.35e-04   &  -6.15e-06  &   -4.61e-04  &   2.63e-06  &   4.41e-04   &  -3.52e-06 \\
        &                     &  MSE     & 6.55e-07   &   3.38e-11  &   5.65e-07   &   4.01e-11  &   2.37e-07   &   9.07e-12  &   2.19e-07   &  1.47e-11  \\
        &                     &  AVar    & 2.34e-08   &   2.20e-12  &   2.34e-08   &   2.20e-12  &   2.34e-08   &   2.20e-12  &   2.34e-08   &  2.20e-12  \\ \hline
        &   Second Component  & Parameters & $\alpha_2$ & $\beta_2$ & $\gamma_2$ & $\delta_2$ & $\alpha_2$ & $\beta_2$ & $\gamma_2$ & $\delta_2$ \\
        &                     & True values & 1.5        & 0.5       & 1.75       & 0.75       & 1.5        & 0.5       & 1.75       & 0.75       \\ \hline
        &                     &  Average & 1.5000     &   0.5000    &   1.7500     &   0.7500    &   1.5000     &   0.5000    &   1.7500     &  0.7500    \\
        &                     &  Bias    & 1.84e-05   &  -1.28e-07  &   -1.52e-05  &   1.01e-07  &   1.36e-05   &  -8.67e-08  &   -1.06e-05  &   6.10e-08 \\
        &                     &  MSE     & 8.03e-08   &   7.24e-12  &   7.48e-08   &   6.90e-12  &   7.96e-08   &   7.16e-12  &   7.42e-08   &  6.86e-12  \\
        &                     &  AVar    & 7.38e-08   &   6.92e-12  &   7.38e-08   &   6.92e-12  &   7.38e-08   &   6.92e-12  &   7.38e-08   &  6.92e-12  \\ \hline
1.00    &    First Component  & Parameters  & $\alpha_1$ & $\beta_1$ & $\gamma_1$ & $\delta_1$ & $\alpha_1$ & $\beta_1$ & $\gamma_1$ & $\delta_1$ \\
        &                     & True values & 2.1        & 0.1       & 1.25       & 0.25       & 2.1        & 0.1       & 1.25       & 0.25       \\ \hline
        &                     &  Average & 2.0992     &   0.1000    &   1.2507     &   0.2500    &   2.0995     &   0.1000    &   1.2505     &  0.2500    \\
        &                     &  Bias    & -7.86e-04  &   5.52e-06  &   7.46e-04   &  -6.24e-06  &   -4.53e-04  &   2.53e-06  &   4.52e-04   &  -3.61e-06 \\
        &                     &  MSE     & 7.16e-07   &   3.95e-11  &   6.55e-07   &   4.79e-11  &   2.99e-07   &   1.49e-11  &   3.00e-07   &  2.17e-11  \\
        &                     &  AVar    & 9.37e-08   &   8.78e-12  &   9.37e-08   &   8.78e-12  &   9.37e-08   &   8.78e-12  &   9.37e-08   &  8.78e-12  \\ \hline
        &   Second Component  & Parameters  & $\alpha_2$ & $\beta_2$ & $\gamma_2$ & $\delta_2$ & $\alpha_2$ & $\beta_2$ & $\gamma_2$ & $\delta_2$ \\
        &                     & True values & 1.5        & 0.5       & 1.75       & 0.75       & 1.5        & 0.5       & 1.75       & 0.75       \\ \hline
        &                     &  Average & 1.5000     &   0.5000    &   1.7500     &   0.7500    &   1.5000     &   0.5000    &   1.7500     &  0.7500    \\
        &                     &  Bias    & -1.78e-05  &   1.53e-07  &   1.71e-05   &  -2.75e-07  &   -2.13e-05  &   1.78e-07  &   1.87e-05   &  -2.92e-07 \\
        &                     &  MSE     & 3.05e-07   &   2.86e-11  &   3.02e-07   &   2.78e-11  &   3.03e-07   &   2.83e-11  &   2.98e-07   &  2.76e-11  \\
        &                     &  AVar    & 2.95e-07   &   2.77e-11  &   2.95e-07   &   2.77e-11  &   2.95e-07   &   2.77e-11  &   2.95e-07   &  2.77e-11  \\ \hline
\end{tabular}}
\caption{Estimates of the parameters of model \eqref{multiple_comp_model} when M =  N = 100}
\label{table:8}
\end{table}

\subsection{Simulated Data Analysis}\label{sec:data_analysis}
We analyse a synthetic texture data using model \eqref{multiple_comp_model} to demonstrate  how the proposed parameter estimation methods work. The synthetic data is generated using the following model structure and parameters:
\begin{equation}
y(m,n) = \sum_{k=1}^{5}\{A_k^0 \cos(\alpha_k^0 m + \beta_k^0 m^2 + \gamma_k^0 n + \delta_k^0 n^2) + B_k^0 \sin(\alpha_k^0 m + \beta_k^0 m^2 + \gamma_k^0 n + \delta_k^0 n^2)\}
\end{equation}

\noindent The true parameter values are provided in Table \ref{table9}. The errors $X(m,n)$s are i.i.d.\ random variables with mean 0 and variance 100. Figure \ref{fig:true_signal} represents the original texture without any contamination and Figure \ref{fig:noisy_signal} represents the noisy texture. Our purpose is to extract the original gray-scale texture from the one which is contaminated.
  \begin{table}[]
 \centering
 \caption{True parameters values of the synthetic data.}
 \resizebox{0.9\textwidth}{!}{\begin{tabular}{cc|cc|cc|cc|cc|cc}
 \hline
$A_1^0$ & 6     &  $B_1^0$  &  6    &   $\alpha_1^0$  &  2.75   &  $\beta_1^0$ & 0.05    & $\gamma_1^0$  & 2.5  &   $\delta_1^0$ &    0.075  \\
$A_2^0$ &2      &  $B_2^0$  &    2    &   $\alpha_2^0$    &      1.75    & $\beta_2^0$ &  0.01   &  $\gamma_2^0$  &  1.5    &   $\delta_2^0$ &  0.025   \\

$A_3^0$ &1      &   $B_3^0$  &   1   &   $\alpha_3^0$   &      1.5     &  $\beta_3^0$ &   0.15     &  $\gamma_3^0$  &  2      &   $\delta_3^0$ &  0. 25  \\

$A_4^0$ &0.5     &   $B_4^0$  & 0.5   &   $\alpha_4^0$   &      1.75  & $\beta_4^0$ &  0.75    &  $\gamma_4^0$  &  2.75    &    $\delta_4^0$ & 0.275   \\

$A_5^0$ &0.1  &    $B_5^0$  &  0.1   &   $\alpha_5^0$     &    1.95 &  $\beta_5^0$  &   0.95      & $\gamma_5^0$  &   2.95      &   $\delta_5^0$ &  0.295    \\ \hline
\end{tabular}}
\label{table9}
\end{table}

\noindent We model the data using the proposed sequential procedure and the parameter estimates obtained using the sequential estimators are presented in Table \ref{table10}. From the obtained estimates, it can be inferred that the first four components are estimated satisfactorily but the last component is hardly detected. This makes sense as the amplitudes corresponding to the last component are very small.
The estimated texture is plotted in Figure \ref{fig:estimated_signal_LSE} and it is evident that the estimated texture and the original texture look extremely well matched. 
  \begin{table}[]
 \centering
 \caption{Estimates obtained using the sequential procedure for the synthetic data.}
 \resizebox{\textwidth}{!}{\begin{tabular}{cc|cc|cc|cc|cc|cc}
 \hline
$\hat{A}_1$ &  5.8773     &  $\hat{B}_1$  &  6.1384   &   $\hat{\alpha}_1$     &    2.7499    &  $\hat{\beta}_1$ &  0.0500     &  $\hat{\gamma}_1$  &   2.5004    &    $\hat{\delta}_1$ &    0.0749  \\
$\hat{A}_2$ &  2.2789     &  $\hat{B}_2$  &  1.7718   &   $\hat{\alpha}_2$     &    1.7492    &  $\hat{\beta}_2$ &  0.0100     &  $\hat{\gamma}_2$  &   1.4988    &    $\hat{\delta}_2$ &    0.0250   \\
$\hat{A}_3$ &  1.0856     &  $\hat{B}_3$  &  0.9090   &   $\hat{\alpha}_3$     &    1.4999    &  $\hat{\beta}_3$ &  0.1499     &  $\hat{\gamma}_3$  &   1.9979    &    $\hat{\delta}_3$ &    0.2500 \\
$\hat{A}_4$ &  0.4828     &  $\hat{B}_4$  &  0.5418   &   $\hat{\alpha}_4$     &    1.7482    &  $\hat{\beta}_4$ &  0.7500     &  $\hat{\gamma}_4$  &   2.7547    &    $\hat{\delta}_4$ &    0.2749   \\
$\hat{A}_5$ &  0.0251     &  $\hat{B}_5$  & -0.0106   &   $\hat{\alpha}_5$     &    2.2254    &  $\hat{\beta}_5$ &  1.1450     &  $\hat{\gamma}_5$  &   3.2173    &    $\hat{\delta}_5$ &    0.5945   \\
 \hline
\end{tabular}}
\label{table10}
\end{table}
\begin{figure}[H]
\centering
\includegraphics[scale=0.5]{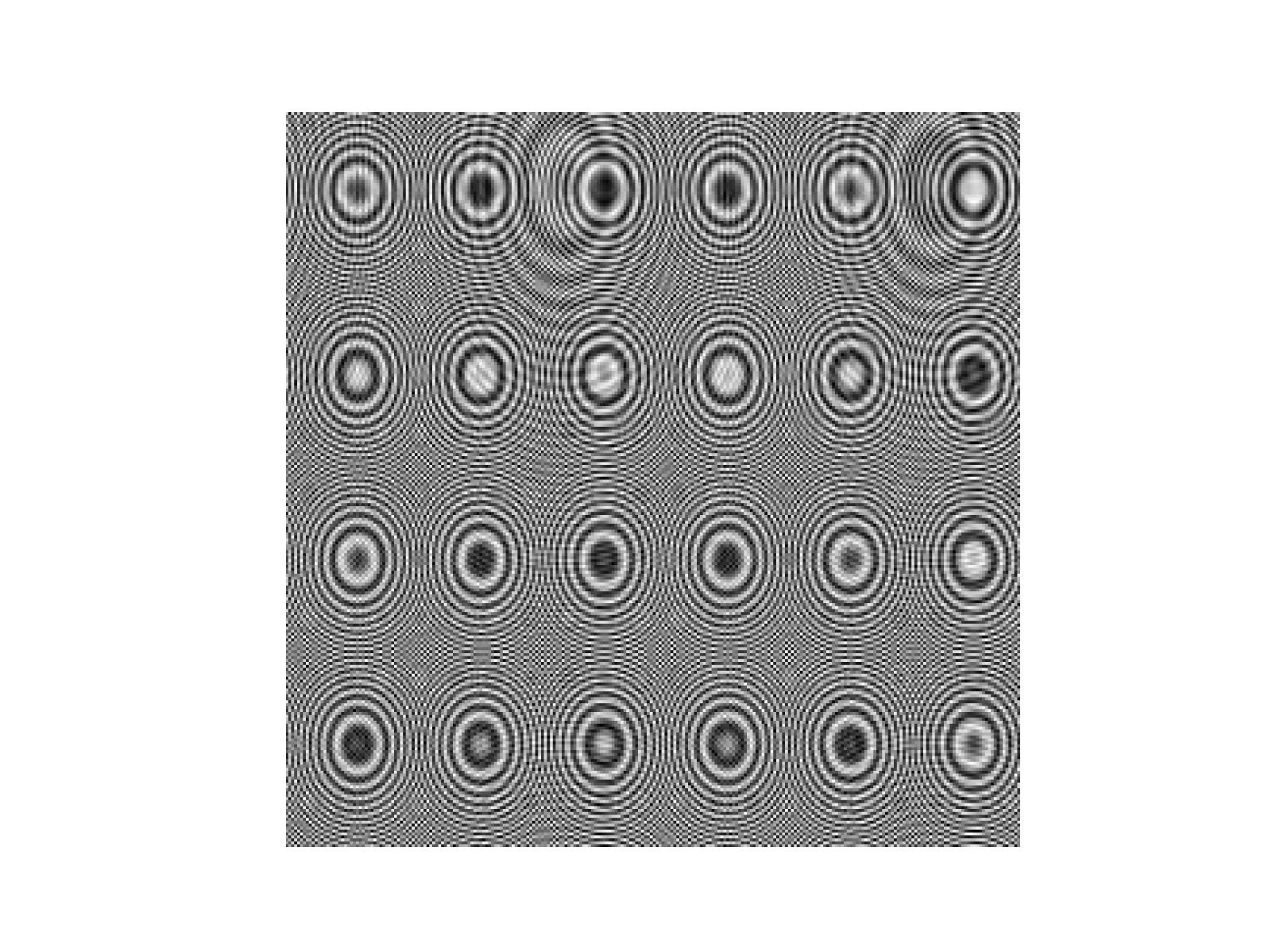}
\caption{Estimated texture for the synthetic data.}
\label{fig:estimated_signal_LSE}
\end{figure}

\section{Concluding Remarks}\label{sec:Conclusion}
In this paper, we have considered the estimation of unknown parameters of a 2-D chirp model under the assumption of i.i.d.\ additive errors. The main idea is to reduce the computational complexity involved in finding the LSEs of these parameters. The proposed estimators minimise the computations to a great extent and are observed to be strongly consistent and asymptotically equivalent to the LSEs. For a 2-D chirp model with $p$ number of components, we have proposed a sequential procedure which reduces the problem of estimation of the parameters to solving $p$ number of 2-D optimisation problems. Moreover, the propounded sequential estimators are observed to be strongly consistent and asymptotically equivalent to the usual LSEs.\\

An alternative method to estimate the non-linear parameters of a one-component 2-D chirp model is to maximize the following periodogram-type functions:
\begin{equation}\begin{split}\label{periodogram-type_function_alpha_beta}
I^{(1)}_{MN}(\alpha, \beta) & = \frac{2}{MN}\sum\limits_{{n_0} = 1}^{N}\textit{\textbf{Y}}^{\top}_{n_0}\textbf{Z}_M(\alpha, \beta)\textbf{Z}_M(\alpha, \beta)^{\top}\textit{\textbf{Y}}_{n_0}
\end{split}\end{equation}
\begin{equation}\begin{split}\label{periodogram-type_function_gamma_delta}
I^{(2)}_{MN}(\gamma, \delta) & = \frac{2}{MN}\sum\limits_{{m_0} = 1}^{M}\textit{\textbf{Y}}^{\top}_{m_0}\textbf{Z}_N(\gamma, \delta)\textbf{Z}_N(\gamma, \delta)^{\top}\textit{\textbf{Y}}_{m_0}
\end{split}\end{equation} 
with respect to $\alpha$, $\beta$ and $\gamma$, $\delta$ respectively. 
These periodogram-type functions are constructed in the same way as the reduced error sum of squares functions defined in \eqref{reduced_ess_alpha_beta} and \eqref{reduced_ess_gamma_delta} with the same idea to reduce 2-D chirp model to a number of 1-D chirp models with same frequency and frequency rate parameters as the original 2-D model but with different amplitudes. \\

Using some number theoretic results established by Lahiri \cite{2015}, it is easy to show that the following relationship exists between the reduced error sum of squares and the periodogram-type functions:

\begin{equation}\label{R_I_relationship}
\frac{1}{N}R^{(1)}_{MN}(\alpha, \beta) = \frac{1}{N}\sum_{{n_0} = 1}^{N} \textit{\textbf{Y}}^{\top}_{n_0}\textit{\textbf{Y}}_{n_0} - I^{(1)}_{MN}(\alpha, \beta)  + o(1)
\end{equation}
Here, a function $f$ is $o(1)$, if $f \to 0$ almost surely as $M \to \infty$. A similar relation can be seen between $R^{(2)}_{MN}(\gamma, \delta)$ and $I^{(2)}_{MN}(\gamma, \delta)$. Thus, replacing the functions $R^{(1)}_{MN}(\alpha, \beta)$ and $R^{(2)}_{MN}(\gamma, \delta)$ by $I^{(1)}_{MN}(\alpha, \beta)$ and $I^{(2)}_{MN}(\gamma, \delta)$ respectively is plausible as its effect on the estimators will be inconsequential. However, this replacement simplifies the estimation process to a great extent as the evaluation of periodogram-type functions does not involve matrix inversion. \\

This relationship is analogous to the one that was first proposed by Walker \cite{1971} for the sinusoidal model and later Grover et al. \cite{2018, 2018_2} extended the same for 1-D and 2-D chirp models. The estimators obtained by maximising a periodogram function \cite{1971} or a periodogram-type function \cite{2018, 2018_2} are called the approximate least squares estimators (ALSEs). In fact, Grover et al. \cite{2018, 2018_2} showed that the ALSEs are strongly consistent and asymptotically equivalent to the corresponding LSEs. Furthermore, they showed that the ALSEs have two distinctive and noteworthy aspects$-$ \textit{(a)} their consistency is obtained under slightly less restrictive assumptions on the linear parameters than those required for the LSEs and \textit{(b)} their computation is faster as compared to the LSEs due to absence of a matrix inversion in the former case. Therefore, it will be interesting to investigate the behaviour of the estimators obtained by maximising functions \eqref{periodogram-type_function_alpha_beta} and \eqref{periodogram-type_function_gamma_delta} and to assess their computational performance as compared to the estimators we proposed in this paper. \\

The numerical experiments$-$ the simulations and the data analysis, show that the proposed estimation technique provides as accurate results as the least squares estimation method with the additional advantage of being computationally more efficient. Thus to summarise, the proposed estimators seem to be the method of choice as their performance is satisfactory and as efficient as the LSEs, both numerically and analytically.

\section*{Appendix A}\label{appendix:A}
\noindent Henceforth, we will denote $\boldsymbol{\theta}(n_0) = (A(n_0), B(n_0), \alpha, \beta)$ as the parameter vector and $\boldsymbol{\theta}^0(n_0) = (A^0(n_0), B^0(n_0), \alpha^0, \beta^0)$ as the true parameter vector of the 1-D chirp model \eqref{model_n=n0_one_comp}. \\
To prove Theorem \ref{theorem:consistency_alpha_beta_one_comp_LSE}, we need the following lemma:
\begin{lemma}\label{lemma_consistency_LSEs_alpha_beta}
Consider the set $S_c$ = $\{(\alpha, \beta) : |\alpha - \alpha^0| \geqslant c \textmd{ or } |\beta - \beta^0| \geqslant c\}$. If for any c $>$ 0, \\
\begin{equation}\label{condition_for_consistency_LSE_alpha_beta}
\liminf \inf\limits_{(\alpha, \beta) \in S_c} \frac{1}{M N}\bigg[R^{(1)}_{MN}(\alpha, \beta) - R^{(1)}_{MN}(\alpha^0, \beta^0)\bigg] > 0 \ a.s.
\end{equation}
then, $\hat{\alpha}$ $\rightarrow$ $\alpha^0$  and $\hat{\beta}$ $\rightarrow$ $\beta^0$ almost surely as $M \rightarrow \infty$.
\end{lemma}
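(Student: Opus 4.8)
The plan is to prove this by a standard contradiction argument in the spirit of Wu's consistency theorem, exploiting the fact that $(\hat{\alpha}, \hat{\beta})$ is by construction the minimiser of $R^{(1)}_{MN}$. The single structural fact I would lean on is the defining inequality of the estimator: since $(\hat{\alpha}, \hat{\beta})$ minimises $R^{(1)}_{MN}(\alpha, \beta)$ over the parameter space, we have $R^{(1)}_{MN}(\hat{\alpha}, \hat{\beta}) \leqslant R^{(1)}_{MN}(\alpha^0, \beta^0)$, and hence
\begin{equation*}
\frac{1}{MN}\big[R^{(1)}_{MN}(\hat{\alpha}, \hat{\beta}) - R^{(1)}_{MN}(\alpha^0, \beta^0)\big] \leqslant 0
\end{equation*}
for every $M, N$. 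This upper bound of $0$ will be placed in direct conflict with the positivity hypothesis \eqref{condition_for_consistency_LSE_alpha_beta}.

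The argument proceeds as follows. Suppose, for the sake of contradiction, that $(\hat{\alpha}, \hat{\beta})$ does not converge to $(\alpha^0, \beta^0)$ almost surely. Then there is a set of positive probability on which, for some $c > 0$, the estimator $(\hat{\alpha}, \hat{\beta})$ lies in $S_c$ for infinitely many values of $M$; that is, along a subsequence we have $|\hat{\alpha} - \alpha^0| \geqslant c$ or $|\hat{\beta} - \beta^0| \geqslant c$. Restricting attention to this subsequence and using that $(\hat{\alpha}, \hat{\beta}) \in S_c$, I would bound the normalised increment from below by the infimum over the whole set $S_c$:
\begin{equation*}
\frac{1}{MN}\big[R^{(1)}_{MN}(\hat{\alpha}, \hat{\beta}) - R^{(1)}_{MN}(\alpha^0, \beta^0)\big] \geqslant \inf_{(\alpha, \beta) \in S_c} \frac{1}{MN}\big[R^{(1)}_{MN}(\alpha, \beta) - R^{(1)}_{MN}(\alpha^0, \beta^0)\big].
\end{equation*}
Taking the $\liminf$ along the subsequence and invoking hypothesis \eqref{condition_for_consistency_LSE_alpha_beta}, the right-hand side is strictly positive almost surely, forcing the left-hand side to be strictly positive on the subsequence. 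This contradicts the defining inequality of the minimiser displayed above, which forces the same quantity to be non-positive. The contradiction establishes that $(\hat{\alpha}, \hat{\beta}) \xrightarrow{a.s.} (\alpha^0, \beta^0)$, i.e.\ $\hat{\alpha} \xrightarrow{a.s.} \alpha^0$ and $\hat{\beta} \xrightarrow{a.s.} \beta^0$.

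The main delicacy, rather than any heavy computation, lies in handling the almost-sure quantifiers cleanly: the failure of convergence must be phrased so that the value of $c$ and the offending subsequence may depend on the sample path, and the positivity in \eqref{condition_for_consistency_LSE_alpha_beta} must be read as holding simultaneously for all rational $c > 0$ on an event of full probability. Once the measure-theoretic bookkeeping is set up so that the same full-probability event carries both the minimiser inequality and the uniform positivity, the contradiction is immediate. I would note that the genuinely substantive work—verifying that the hypothesis \eqref{condition_for_consistency_LSE_alpha_beta} actually holds for this model—is deferred to the proof of Theorem \ref{theorem:consistency_alpha_beta_one_comp_LSE} itself, where the separability of $R^{(1)}_{MN}$ into a sum over columns $n_0$ and the one-dimensional chirp consistency results are brought to bear; this lemma merely packages the reduction from convergence of the estimator to the uniform positivity condition.
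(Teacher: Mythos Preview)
Your argument is correct and is precisely the Wu-type contradiction that the paper has in mind; the paper's own proof simply states that it ``follows along the same lines as that of Lemma 1 of Wu \cite{1981}'' without spelling out the details you have provided.
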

\noindent \begin{proof}
This proof follows along the same lines as that of Lemma 1 of Wu \cite{1981}.\\
\end{proof}

\noindent \textbf{Proof of Theorem \ref{theorem:consistency_alpha_beta_one_comp_LSE}:} Let us consider the following:
\begin{flalign*}
& \liminf \inf\limits_{(\alpha, \beta) \in S_c} \frac{1}{M N}\bigg[R^{(1)}_{MN}(\alpha, \beta) - R^{(1)}_{MN}(\alpha^0, \beta^0)\bigg]& \\
& = \liminf \inf\limits_{(\alpha, \beta) \in S_c} \frac{1}{M N}\bigg[\sum_{{n_0} = 1}^{N}R_{M}(\alpha, \beta, n_0) - \sum_{{n_0} = 1}^{N} R_{M}(\alpha^0, \beta^0, n_0)\bigg]& \\
& = \liminf \inf\limits_{(\alpha, \beta) \in S_c} \frac{1}{M N}\bigg[\sum_{{n_0} = 1}^{N}Q_{M}(\hat{A}(n_0), \hat{B}(n_0), \alpha, \beta) - \sum_{{n_0} = 1}^{N} Q_{M}(\hat{A}(n_0), \hat{B}(n_0), \alpha^0, \beta^0)\bigg] & \\
& \geqslant \liminf \inf\limits_{(\alpha, \beta) \in S_c} \frac{1}{M N}\bigg[\sum_{{n_0} = 1}^{N}Q_{M}(\hat{A}(n_0), \hat{B}(n_0), \alpha, \beta) - \sum_{{n_0} = 1}^{N} Q_{M}(A^0(n_0) , B^0(n_0), \alpha^0, \beta^0)\bigg] & \\
& \geqslant \liminf \inf\limits_{ \boldsymbol{\theta}(n_0) \in M_c^{n_0}} \frac{1}{M N}\bigg[\sum_{{n_0} = 1}^{N}Q_{M}(A(n_0), B(n_0), \alpha, \beta) - \sum_{{n_0} = 1}^{N} Q_{M}(A^0(n_0) , B^0(n_0), \alpha^0, \beta^0)\bigg] & \\
&\geqslant \frac{1}{N}\sum_{n_0 = 1}^{N} \liminf \inf\limits_{\boldsymbol{\theta}(n_0) \in M_c^{n_0}} \frac{1}{M}\bigg[Q_M(\boldsymbol{\theta}(n_0)) - Q_M(\boldsymbol{\theta}^0(n_0))\bigg] > 0. &
\end{flalign*}
This follows from the proof of Theorem 1 of Kundu and Nandi \cite{2008}. Here,  $Q_{M}(A(n_0), B(n_0), \alpha, \beta)  =  \textit{\textbf{Y}}^{\top}_{n_0}(\textit{\textbf{I}} - \textbf{Z}_M(\alpha, \beta)(\textbf{Z}_M(\alpha, \beta)^{\top}\textbf{Z}_M(\alpha, \beta))^{-1}\textbf{Z}_M(\alpha, \beta)^{\top})\textit{\textbf{Y}}_{n_0}$. Also note that the set $M_c^{n_0}$ = $\{\boldsymbol{\theta}(n_0) : |A(n_0) - A^0(n_0)| \geqslant c \textmd{ or } |B(n_0) - B^0(n_0)| \geqslant c \textmd{ or } |\alpha - \alpha^0| \geqslant c \textmd{ or } |\beta - \beta^0| \geqslant c\}$ which implies $S_c \subset M_c^{n_0}$, for all $n_0 \in \{1, \ldots, N\}$.
Thus, using Lemma \ref{lemma_consistency_LSEs_alpha_beta}, $\hat{\alpha} \xrightarrow{a.s.} \alpha^0$ and $\hat{\beta} \xrightarrow{a.s.} \beta^0$. \\ \qed

\noindent \textbf{Proof of Theorem \ref{theorem:asymp_dist_one_comp_LSE_alpha_beta}:} Let us denote $\boldsymbol{\xi} = (\alpha, \beta)$ and $\hat{\boldsymbol{\xi}} = (\hat{\alpha}, \hat{\beta})$, the estimator of $\boldsymbol{\xi}^0 = (\alpha^0, \beta^0)$ obtained by minimising the function $R_{MN}^{(1)}(\boldsymbol{\xi}) = R_{MN}^{(1)}(\alpha, \beta)$ defined in \eqref{reduced_ess_alpha_beta}. \\  
\noindent Using multivariate Taylor series, we expand the $1 \times 2$ first derivative vector $\textit{\textbf{R}}_{MN}^{(1)'}(\hat{\boldsymbol{\xi}})$ of the function $R_{MN}^{(1)}(\boldsymbol{\xi})$, around the point $\boldsymbol{\xi}^0$ as follows:
\begin{equation*}
\textit{\textbf{R}}_{MN}^{(1)'}(\hat{\boldsymbol{\xi}}) - \textit{\textbf{R}}_{MN}^{(1)'}(\xi^0) = (\hat{\boldsymbol{\xi}} - \boldsymbol{\xi}^0)\textit{\textbf{R}}_{MN}^{(1)''}(\bar{\boldsymbol{\xi}}),
\end{equation*}
where $\bar{\boldsymbol{\xi}}$ is a point between $\hat{\boldsymbol{\xi}}$ and $\boldsymbol{\xi}^0$ and $\textit{\textbf{R}}_{MN}^{(1)''}(\bar{\boldsymbol{\xi}})$ is the $2 \times 2$ second derivative matrix of the function $R_{MN}^{(1)}(\boldsymbol{\xi})$ at the point $\bar{\boldsymbol{\xi}}$. Since $\hat{\boldsymbol{\xi}}$ minimises the function $R_{MN}^{(1)}(\boldsymbol{\xi})$, $\textit{\textbf{R}}_{MN}^{(1)'}(\hat{\boldsymbol{\xi}}) = 0$. Thus, we have
\begin{equation*}
(\hat{\boldsymbol{\xi}} - \boldsymbol{\xi}^0) = - \textit{\textbf{R}}_{MN}^{(1)'}(\boldsymbol{\xi}^0)[\textit{\textbf{R}}_{MN}^{(1)''}(\bar{\boldsymbol{\xi}})]^{-1}. 
\end{equation*}
Multiplying both sides by the diagonal matrix $\textit{\textbf{D}}_1^{-1} = \textnormal{diag}(M^{\frac{-3}{2}}N^{\frac{-1}{2}}, M^{\frac{-5}{2}}N^{\frac{-1}{2}})$, we get:
\begin{equation}\label{taylor_series_R_MN'}
(\hat{\boldsymbol{\xi}} - \boldsymbol{\xi}^0)\textit{\textbf{D}}_1^{-1} = - \textit{\textbf{R}}_{MN}^{(1)'}(\boldsymbol{\xi}^0)\textit{\textbf{D}}_1[\textit{\textbf{D}}_1R_{MN}^{(1)''}(\bar{\boldsymbol{\xi}})\textit{\textbf{D}}_1]^{-1}. 
\end{equation}
Consider the vector, $$\textit{\textbf{R}}_{MN}^{(1)'}(\boldsymbol{\xi}^0)\textit{\textbf{D}}_1 = \begin{bmatrix}
\frac{1}{M^{3/2}N^{1/2}}\frac{\partial R_{MN}^{(1)}(\boldsymbol{\xi}^0)}{\partial \alpha} & \frac{1}{M^{3/2}N^{1/2}}\frac{\partial R_{MN}^{(1)}(\boldsymbol{\xi}^0)}{\partial \beta}
\end{bmatrix}.$$ 
On computing the elements of this vector and using preliminary result \eqref{prelim_LSE_first_derivative_one_comp} (see Section \ref{sec:one_comp_1D}) and the definition of the function:
$$R_{MN}^{(1)}(\alpha, \beta) = \sum\limits_{{n_0} = 1}^{N}R_M(\alpha, \beta, n_0)$$
we obtain the following result:
\begin{equation}\label{R_MN'_distribution}
-\textit{\textbf{R}}_{MN}^{(1)'}(\boldsymbol{\xi}^0)\textit{\textbf{D}}_1 \xrightarrow{d} \boldsymbol{\mathcal{N}}_2(\textbf{0}, 2\sigma^2\boldsymbol{\Sigma}) \textmd{ as } M \rightarrow \infty.
\end{equation}
Since $\hat{\boldsymbol{\xi}} \xrightarrow{a.s.} \boldsymbol{\xi}^0$, and as each element of the matrix $\textit{\textbf{R}}_{MN}^{(1)''}(\boldsymbol{\xi})$  is a continuous function of $\boldsymbol{\xi}$, we have
\begin{equation*}
\lim_{M \rightarrow \infty}\textit{\textbf{D}}_1\textbf{\textit{R}}_{MN}^{(1)''}(\bar{\boldsymbol{\xi}})\textit{\textbf{D}}_1 = \lim_{M \rightarrow \infty}\textit{\textbf{D}}_1\textit{\textbf{R}}_{MN}^{(1)''}(\boldsymbol{\xi}^0)\textit{\textbf{D}}_1.
\end{equation*} 
Now using preliminary result \eqref{prelim_LSE_second_derivative_one_comp} (see Section \ref{sec:one_comp_1D}), it can be seen that:
\begin{equation}\label{R_MN''_limit}
\lim_{M \rightarrow \infty}\textit{\textbf{D}}_1\textbf{\textit{R}}_{MN}^{(1)''}(\boldsymbol{\xi}^0)\textit{\textbf{D}}_1 \rightarrow \boldsymbol{\Sigma}^{-1}.
\end{equation} 
On combining \eqref{taylor_series_R_MN'}, \eqref{R_MN'_distribution} and \eqref{R_MN''_limit}, we have the desired result.

\section*{Appendix B}\label{appendix:B}
To prove Theorem \ref{theorem:consistency_alphak_betak_multiple_comp_LSE}, we need the following lemmas:
\begin{lemma}\label{lemma_consistency_LSEs_alpha1_beta1}
Consider the set $S_c^{1}$ = $\{(\alpha, \beta) : |\alpha - \alpha_1^0| \geqslant c \textmd{ or } |\beta - \beta_1^0| \geqslant c\}$.If for any c $>$ 0, \\
\begin{equation}\label{condition_for_consistency_LSE_alpha1_beta1}
\liminf \inf\limits_{(\alpha, \beta) \in S_c^{1}} \frac{1}{M N}[R^{(1)}_{1,MN}(\alpha, \beta) - R^{(1)}_{1,MN}(\alpha_1^0, \beta_1^0)] > 0 \ a.s.
\end{equation}
then, $\hat{\alpha}_1$ $\rightarrow$ $\alpha_1^0$ and $\hat{\beta}_1$ $\rightarrow$ $\beta_1^0$ almost surely as $M \rightarrow \infty$.
\end{lemma}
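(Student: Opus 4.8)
The plan is to establish this lemma by the standard contradiction argument of Wu \cite{1981}, precisely mirroring the route used for the analogous Lemma \ref{lemma_consistency_LSEs_alpha_beta} in Appendix A. The single lever driving the proof is that $(\hat{\alpha}_1, \hat{\beta}_1)$ is, by construction in \ref{itm:step1}, a minimiser of $R^{(1)}_{1,MN}(\alpha, \beta)$ over the parameter space; this minimising property is to be played off against the strict-positivity hypothesis \eqref{condition_for_consistency_LSE_alpha1_beta1}. Everything is deterministic once the hypothesis is granted on a fixed almost-sure event, so the argument is essentially a separation-of-the-criterion-function argument rather than a probabilistic limit computation.

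First I would suppose, for contradiction, that the conclusion fails, i.e.\ that on a set of positive probability $(\hat{\alpha}_1, \hat{\beta}_1)$ does not converge to $(\alpha_1^0, \beta_1^0)$. Then there is some $c > 0$ for which $(\hat{\alpha}_1, \hat{\beta}_1) \in S_c^1$ for infinitely many values of $M$. Since $(\hat{\alpha}_1, \hat{\beta}_1)$ minimises $R^{(1)}_{1,MN}$, we have $R^{(1)}_{1,MN}(\hat{\alpha}_1, \hat{\beta}_1) \leqslant R^{(1)}_{1,MN}(\alpha_1^0, \beta_1^0)$, so for each such $M$,
\[
\inf_{(\alpha,\beta) \in S_c^1} \frac{1}{MN}\big[R^{(1)}_{1,MN}(\alpha, \beta) - R^{(1)}_{1,MN}(\alpha_1^0, \beta_1^0)\big] \leqslant \frac{1}{MN}\big[R^{(1)}_{1,MN}(\hat{\alpha}_1, \hat{\beta}_1) - R^{(1)}_{1,MN}(\alpha_1^0, \beta_1^0)\big] \leqslant 0.
\]
As this inequality holds along infinitely many $M$, taking the $\liminf$ yields a nonpositive quantity on a set of positive probability, which directly contradicts \eqref{condition_for_consistency_LSE_alpha1_beta1}. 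Hence $\hat{\alpha}_1 \xrightarrow{a.s.} \alpha_1^0$ and $\hat{\beta}_1 \xrightarrow{a.s.} \beta_1^0$ as $M \rightarrow \infty$.

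I expect no genuine obstacle here, since this is a conditional statement: the entire analytic burden is deferred to verifying \eqref{condition_for_consistency_LSE_alpha1_beta1} in the subsequent proof of Theorem \ref{theorem:consistency_alphak_betak_multiple_comp_LSE}, not to the present lemma. The only point requiring a little care is the measure-theoretic bookkeeping in the negation step: one must produce, on a common positive-probability event, a single separation constant $c$ that works along a subsequence of $M$. This is handled exactly as in Wu \cite{1981} by exhausting over rational $c$ and discarding a countable union of null sets, after which the $\liminf$ comparison above applies verbatim.
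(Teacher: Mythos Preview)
Your proposal is correct and follows exactly the approach the paper intends: the paper simply states that the proof follows along the same lines as that of Lemma~\ref{lemma_consistency_LSEs_alpha_beta}, which in turn is referred to Lemma~1 of Wu~\cite{1981}, and you have spelled out precisely that contradiction argument. There is nothing to add; the measure-theoretic caveat you mention is the only subtlety, and it is handled exactly as in Wu's original.
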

\noindent \begin{proof}
This proof follows along the same lines as proof of Lemma \ref{lemma_consistency_LSEs_alpha_beta}. \\
\end{proof}

\begin{lemma}\label{lemma:convergence_LSE_alpha1_beta1}
If assumptions \ref{assump:1}, \ref{assump:3} and \ref{assump:P4} are satisfied then:
\begin{equation*}\begin{split}
& M(\hat{\alpha}_1 - \alpha_1^0) \xrightarrow{a.s.} 0,\\
& M^2(\hat{\beta}_1 - \beta_1^0) \xrightarrow{a.s.} 0.
\end{split}\end{equation*}
\end{lemma}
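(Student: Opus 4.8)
The plan is to run the Taylor-expansion argument of Theorem~\ref{theorem:asymp_dist_one_comp_LSE_alpha_beta}, but to read off an almost-sure rate instead of a distributional limit. Write $\boldsymbol{\xi}=(\alpha,\beta)$, $\hat{\boldsymbol{\xi}}_1=(\hat\alpha_1,\hat\beta_1)$ and $\boldsymbol{\xi}_1^0=(\alpha_1^0,\beta_1^0)$. Since $\hat{\boldsymbol{\xi}}_1$ minimises $R^{(1)}_{1,MN}$, its gradient vanishes there, and a two-term Taylor expansion about $\boldsymbol{\xi}_1^0$ gives, exactly as in \eqref{taylor_series_R_MN'},
\[
(\hat{\boldsymbol{\xi}}_1-\boldsymbol{\xi}_1^0)\textit{\textbf{D}}_1^{-1}=-\textit{\textbf{R}}^{(1)'}_{1,MN}(\boldsymbol{\xi}_1^0)\textit{\textbf{D}}_1\,[\textit{\textbf{D}}_1\textit{\textbf{R}}^{(1)''}_{1,MN}(\bar{\boldsymbol{\xi}})\textit{\textbf{D}}_1]^{-1},
\]
with $\bar{\boldsymbol{\xi}}$ between $\hat{\boldsymbol{\xi}}_1$ and $\boldsymbol{\xi}_1^0$. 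Both target quantities are a coordinate of the left-hand vector scaled by $M^{-1/2}N^{-1/2}$: indeed $M(\hat\alpha_1-\alpha_1^0)=M^{-1/2}N^{-1/2}\,[(\hat{\boldsymbol{\xi}}_1-\boldsymbol{\xi}_1^0)\textit{\textbf{D}}_1^{-1}]_1$ and $M^2(\hat\beta_1-\beta_1^0)=M^{-1/2}N^{-1/2}\,[(\hat{\boldsymbol{\xi}}_1-\boldsymbol{\xi}_1^0)\textit{\textbf{D}}_1^{-1}]_2$. It therefore suffices to show that the right-hand side of the display is $o(\sqrt M)$ almost surely, since then each target is $N^{-1/2}o(1)\to 0$.

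Next I would decompose along columns. Because $\textit{\textbf{D}}_1=N^{-1/2}\boldsymbol{\Delta}$ with $\boldsymbol{\Delta}=\textnormal{diag}(M^{-3/2},M^{-5/2})$, and $R^{(1)}_{1,MN}=\sum_{n_0=1}^N R_M(\cdot,n_0)$ is a sum of per-column reduced error sums of squares, we may write $\textit{\textbf{R}}^{(1)'}_{1,MN}(\boldsymbol{\xi}_1^0)\textit{\textbf{D}}_1=N^{-1/2}\sum_{n_0=1}^N\textit{\textbf{R}}'_{M}(\boldsymbol{\xi}_1^0,n_0)\boldsymbol{\Delta}$ and $\textit{\textbf{D}}_1\textit{\textbf{R}}^{(1)''}_{1,MN}(\bar{\boldsymbol{\xi}})\textit{\textbf{D}}_1=N^{-1}\sum_{n_0=1}^N\boldsymbol{\Delta}\,\textit{\textbf{R}}''_{M}(\bar{\boldsymbol{\xi}},n_0)\boldsymbol{\Delta}$. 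Each column is itself a multiple-component 1-D chirp, and the identity $A_k^0(n_0)^2+B_k^0(n_0)^2={A_k^0}^2+{B_k^0}^2$ shows that Assumptions~\ref{assump:P3} and \ref{assump:P4} hold column-wise. Hence the preliminary result \eqref{prelim_LSE_first_derivative_multiple_comp_1} (with $k=1$, $n=M$) gives $\textit{\textbf{R}}'_{M}(\boldsymbol{\xi}_1^0,n_0)\boldsymbol{\Delta}=o(\sqrt M)$ a.s.\ for every column, while \eqref{prelim_LSE_second_derivative_multiple_comp} gives $\boldsymbol{\Delta}\,\textit{\textbf{R}}''_{M}(\boldsymbol{\xi}_1^0,n_0)\boldsymbol{\Delta}\to\boldsymbol{\Sigma}_1^{-1}$. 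Averaging over the $N$ columns, the scaled gradient is $o(\sqrt M)$ a.s.\ and the scaled Hessian tends to the invertible matrix $\boldsymbol{\Sigma}_1^{-1}$, so its inverse is $O(1)$ a.s.; substituting back yields $(\hat{\boldsymbol{\xi}}_1-\boldsymbol{\xi}_1^0)\textit{\textbf{D}}_1^{-1}=o(\sqrt M)$ a.s., which is the bound sought.

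Two steps need care. The expansion uses the Hessian at the intermediate point $\bar{\boldsymbol{\xi}}$, not at $\boldsymbol{\xi}_1^0$; here I would invoke the already-established consistency $\hat{\boldsymbol{\xi}}_1\xrightarrow{a.s.}\boldsymbol{\xi}_1^0$ (Theorem~\ref{theorem:consistency_alphak_betak_multiple_comp_LSE}), which forces $\bar{\boldsymbol{\xi}}\to\boldsymbol{\xi}_1^0$, together with continuity of $\boldsymbol{\Delta}\,\textit{\textbf{R}}''_{M}(\cdot,n_0)\boldsymbol{\Delta}$ near $\boldsymbol{\xi}_1^0$ to replace $\bar{\boldsymbol{\xi}}$ by $\boldsymbol{\xi}_1^0$ in the limit, exactly as in the passage leading to \eqref{R_MN''_limit}. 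The genuine obstacle, however, is the almost-sure control of the scaled gradient $N^{-1/2}\sum_{n_0=1}^N\textit{\textbf{R}}'_{M}(\boldsymbol{\xi}_1^0,n_0)\boldsymbol{\Delta}$: the per-column bound suffices immediately only for fixed $N$, whereas when $M$ and $N$ diverge together one must aggregate the column contributions into a bona fide two-dimensional estimate. This is where the orthogonality between regressor vectors of distinct components and the number-theoretic sums of Lahiri~\cite{2015} enter, to show uniformly (over the finitely many components and over the intermediate point) that the cross-component and noise parts of the gradient stay $o(\sqrt M)$; establishing that uniform bound is the crux, and the rates $M^{-1}$ and $M^{-2}$ it delivers are precisely what is later needed to show that subtracting the estimated first component perturbs the data negligibly.
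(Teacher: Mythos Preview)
Your approach is essentially the paper's: Taylor-expand the gradient of $R^{(1)}_{1,MN}$ at $\boldsymbol{\xi}_1^0$, decompose into the column sum $\sum_{n_0}R_{1,M}(\cdot,n_0)$, and invoke the 1-D preliminary results \eqref{prelim_LSE_first_derivative_multiple_comp_1} and \eqref{prelim_LSE_second_derivative_multiple_comp} together with consistency to handle $\bar{\boldsymbol{\xi}}$. The only cosmetic difference is that the paper multiplies directly by $(\sqrt{MN}\,\textit{\textbf{D}}_1)^{-1}=\textnormal{diag}(M,M^2)$, so the left side is exactly $(M(\hat\alpha_1-\alpha_1^0),\,M^2(\hat\beta_1-\beta_1^0))$ and the scaled gradient is $\frac{1}{N}\sum_{n_0}\frac{1}{\sqrt M}\textit{\textbf{R}}'_{1,M}(\boldsymbol{\xi}_1^0,n_0)\boldsymbol{\Delta}$, which makes the passage to zero via \eqref{prelim_LSE_first_derivative_multiple_comp_1} more transparent; your ``genuine obstacle'' about aggregating the column bounds when $N$ also grows is a fair scruple, but the paper's proof elides it in the same way, simply citing the 1-D result and asserting \eqref{limit_R_1_MN'}.
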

\noindent \begin{proof}
 Let us denote $\textit{\textbf{R}}_{1, MN}^{(1)'}(\boldsymbol{\xi})$ as the $1 \times 2$ first derivative vector and $\textit{\textbf{R}}_{1, MN}^{(1)''}(\boldsymbol{\xi})$ as the $2 \times 2$ second derivative matrix of the function $R_{1, MN}^{(1)}(\boldsymbol{\xi})$. Using multivariate Taylor series expansion, we expand the function $\textit{\textbf{R}}_{1, MN}^{(1)'}(\hat{\boldsymbol{\xi}}_1)$ around the point $\boldsymbol{\xi}_1^0$ as follows:
\begin{equation*}
\textit{\textbf{R}}_{1, MN}^{(1)'}(\hat{\boldsymbol{\xi}}_1) - \textit{\textbf{R}}_{1, MN}^{(1)'}(\boldsymbol{\xi}_1^0) = (\hat{\boldsymbol{\xi}}_1 - \boldsymbol{\xi}_1^0)\textit{\textbf{R}}_{1, MN}^{(1)''}(\bar{\boldsymbol{\xi}}_1)
\end{equation*}
where $\bar{\boldsymbol{\xi}}_1$ is a point between $\hat{\boldsymbol{\xi}}_1$ and $\boldsymbol{\xi}_1^0$. Note that $\textit{\textbf{R}}_{1, MN}^{(1)'}(\hat{\boldsymbol{\xi}}_1) = 0$. Thus, we have:
\begin{equation}\label{taylor_series_R_1_MN'_original}
(\hat{\boldsymbol{\xi}}_1 - \boldsymbol{\xi}_1^0) = - \textit{\textbf{R}}_{1, MN}^{(1)'}(\boldsymbol{\xi}_1^0)[\textit{\textbf{R}}_{1, MN}^{(1)''}(\bar{\boldsymbol{\xi}}_1)]^{-1}.
\end{equation}
Multiplying both sides by $\frac{1}{\sqrt{MN}}\textit{\textbf{D}}_1^{-1}$, we get:
\begin{equation}\label{taylor_series_R_1_MN'}
(\hat{\boldsymbol{\xi}}_1 - \boldsymbol{\xi}_1^0)(\sqrt{MN}\textit{\textbf{D}}_1)^{-1} = - \frac{1}{\sqrt{MN}}\textit{\textbf{R}}_{1, MN}^{(1)'}(\boldsymbol{\xi}_1^0)\textit{\textbf{D}}_1[\textit{\textbf{D}}_1\textit{\textbf{R}}_{1, MN}^{(1)''}(\bar{\boldsymbol{\xi}}_1)\textit{\textbf{D}}_1]^{-1}.
\end{equation}
Since each of the elements of the matrix $\textit{\textbf{R}}_{1, MN}^{(1)''}(\boldsymbol{\xi})$ is a continuous function of $\boldsymbol{\xi},$
$$ \lim_{M \rightarrow \infty} \textit{\textbf{D}}_1\textit{\textbf{R}}_{1, MN}^{(1)''}(\bar{\boldsymbol{\xi}}_1)\textit{\textbf{D}}_1 = \lim_{M \rightarrow \infty} \textit{\textbf{D}}_1\textit{\textbf{R}}_{1, MN}^{(1)''}(\boldsymbol{\xi}_1^0)\textit{\textbf{D}}_1. $$
By definition, 
\begin{equation}\label{definition_R_1_MN}
R_{1, MN}^{(1)}(\boldsymbol{\xi}) = \sum_{{n_0} = 1}^{N} R_{1,M}(\boldsymbol{\xi}, n_0).
\end{equation}
Using this and the preliminary result \eqref{prelim_LSE_first_derivative_multiple_comp_1} and \eqref{prelim_LSE_second_derivative_multiple_comp} (see Section \ref{sec:multiple_comp_1D}), it can be seen that:
\begin{align}
- \frac{1}{\sqrt{MN}}\textit{\textbf{R}}_{1, MN}^{(1)'}(\boldsymbol{\xi}_1^0)\textit{\textbf{D}}_1 \xrightarrow{a.s.} 0 \textmd{ as } M \rightarrow \infty. \label{limit_R_1_MN'}\\
\textit{\textbf{D}}_1\textit{\textbf{R}}_{1, MN}^{(1)''}(\boldsymbol{\xi}_1^0)\textit{\textbf{D}}_1 \xrightarrow{a.s.} \boldsymbol{\Sigma}_1^{-1} \textmd{ as } M \rightarrow \infty.\label{limit_R_1_MN''}
\end{align}
On combining \eqref{taylor_series_R_1_MN'}, \eqref{limit_R_1_MN'} and \eqref{limit_R_1_MN''}, we have the desired result.\\
\end{proof}

\noindent \textbf{Proof of Theorem \ref{theorem:consistency_alphak_betak_multiple_comp_LSE}:} Consider the left hand side of \eqref{condition_for_consistency_LSE_alpha1_beta1}, that is,
\begin{flalign*}
& \liminf \inf\limits_{(\alpha, \beta) \in S_c^{1}} \frac{1}{M N}\bigg[R^{(1)}_{1,MN}(\alpha, \beta) - R^{(1)}_{1,MN}(\alpha_1^0, \beta_1^0)\bigg]& \\
& = \liminf \inf\limits_{(\alpha, \beta) \in S_c^{1}} \frac{1}{M N}\bigg[\sum_{{n_0} = 1}^{N}Q_{1,M}(\hat{A}_1(n_0), \hat{B}_1(n_0), \alpha, \beta) - \sum_{{n_0} = 1}^{N} Q_{1,M}(\hat{A}_1(n_0), \hat{B}_1(n_0), \alpha_1^0, \beta_1^0)\bigg] & \\
& \geqslant \liminf \inf\limits_{(\alpha, \beta) \in S_c^{1}} \frac{1}{M N}\bigg[\sum_{{n_0} = 1}^{N}Q_{1,M}(\hat{A}_1(n_0), \hat{B}_1(n_0), \alpha, \beta) - \sum_{{n_0} = 1}^{N} Q_{1,M}(A_1^0(n_0) , B_1^0(n_0), \alpha_1^0, \beta_1^0)\bigg] & \\
& \geqslant \liminf \inf\limits_{ \boldsymbol{\theta}_1(n_0) \in M_c^{1,n_0}} \frac{1}{M N}\bigg[\sum_{{n_0} = 1}^{N}Q_{1,M}(A_1(n_0), B_1(n_0), \alpha, \beta) - \sum_{{n_0} = 1}^{N} Q_{1,M}(A_1^0(n_0) , B_1^0(n_0), \alpha_1^0, \beta_1^0)\bigg] & \\
&\geqslant \frac{1}{N}\sum_{n_0 = 1}^{N} \liminf \inf\limits_{\boldsymbol{\theta}_1(n_0) \in M_c^{1,n_0}} \frac{1}{M}\bigg[Q_{1,M}(\boldsymbol{\theta}_1(n_0)) - Q_{1,M}(\boldsymbol{\theta}_1^0(n_0))\bigg] > 0. &
\end{flalign*}
Here,  $Q_{1,M}(A(n_0), B(n_0), \alpha, \beta)  =  \textit{\textbf{Y}}^{\top}_{n_0}(\textit{\textbf{I}} - \textbf{Z}_M(\alpha, \beta)(\textbf{Z}_M(\alpha, \beta)^{\top}\textbf{Z}_M(\alpha, \beta))^{-1}\textbf{Z}_M(\alpha, \beta)^{\top})\textit{\textbf{Y}}_{n_0}$ and $M_c^{1,n_0}$ can be obtained by replacing $\alpha^0$ and $\beta^0$ by $\alpha_1^0$ and $\beta_1^0$ respectively, in the set $M_c^{n_0}$ defined in Lemma \ref{lemma_consistency_LSEs_alpha_beta}.  The last step follows from the proof of Theorem 2.4.1 of Lahiri \cite{2015}.
Thus, using Lemma \ref{lemma_consistency_LSEs_alpha1_beta1}, $\hat{\alpha}_1 \xrightarrow{a.s.} \alpha_1^0$ and $\hat{\beta}_1 \xrightarrow{a.s.} \beta_1^0$ as $M \rightarrow\infty$. \\
Following similar arguments, one can obtain the consistency of $\hat{\gamma}_1$ and $\hat{\delta}_1$ as $N \rightarrow \infty$. 
Also,
\begin{equation*}\begin{split}
& N(\hat{\gamma}_1 - \gamma_1^0) \xrightarrow{a.s.} 0,\\
& N^2(\hat{\delta}_1 - \delta_1^0) \xrightarrow{a.s.} 0.
\end{split}\end{equation*}
The proof of the above equations follows along the same lines as the proof of Lemma \ref{lemma:convergence_LSE_alpha1_beta1}. 
From Theorem \ref{theorem:limit_A_p+1_B_p+1}, it follows that as min$\{M, N\} \rightarrow \infty$:
\begin{equation*}\begin{split}
& (\hat{A}_1 - A_1^0) \xrightarrow{a.s.} 0,\\
& (\hat{B}_1 - B_1^0) \xrightarrow{a.s.} 0.
\end{split}\end{equation*}
Thus, we have the following relationship between the first component of model \eqref{multiple_comp_model} and its estimate:
\begin{equation}\begin{split}\label{relationship_first_comp_true_estimate}
& \hat{A}_1 \cos(\hat{\alpha}_1 m + \hat{\beta}_1 m^2 + \hat{\gamma}_1 n + \hat{\delta}_1 n^2) + \hat{B}_1 \sin(\hat{\alpha}_1 m + \hat{\beta}_1 m^2 + \hat{\gamma}_1 n + \hat{\delta}_1 n^2) = \\
& \qquad \qquad \qquad \qquad \qquad \qquad A_1^0 \cos(\alpha_1^0 m + \beta_1^0 m^2 + \gamma_1^0 n + \delta_1^0 n^2) + B_1^0 \sin(\alpha_1^0 m + \beta_1^0 m^2 + \gamma_1^0 n + \delta_1^0 n^2) + o(1).
\end{split}\end{equation}
Here  a function $g$ is $o(1)$, if $g \to 0$ almost surely as min$\{M, N\} \to \infty$.\\

\noindent Using \eqref{relationship_first_comp_true_estimate} and following the same arguments as above for the consistency of $\hat{\alpha}_1$, $\hat{\beta}_1$, $\hat{\gamma}_1$ and $\hat{\delta}_1$, we can show that, $\hat{\alpha}_2$, $\hat{\beta}_2$, $\hat{\gamma}_2$ and $\hat{\delta}_2$ are strongly consistent estimators of  $\alpha_2^0$, $\beta_2^0$, $\gamma_2^0$ and $\delta_2^0$ respectively. And the same can be extended for $k \leqslant p.$ Hence, the result.\\ \qed

\noindent  \textbf{Proof of Theorem \ref{theorem:limit_A_p+1_B_p+1}:} We will consider the following two cases that will cover both the scenarios$-$ underestimation as well as overestimation of the number of components:
\begin{itemize}
\item \underline{Case 1:} When $k = 1$:
\begin{equation}\begin{split}\label{linear_estimates_1}
\begin{bmatrix}
\hat{A}_1 \\ \hat{B}_1
\end{bmatrix} = [\textit{\textbf{W}}(\hat{\alpha}_1, \hat{\beta}_1, \hat{\gamma}_1, \hat{\delta}_1)^{\top}\textit{\textbf{W}}(\hat{\alpha}_1, \hat{\beta}_1, \hat{\gamma}_1, \hat{\delta}_1)]^{-1}\textit{\textbf{W}}(\hat{\alpha}_1, \hat{\beta}_1, \hat{\gamma}_1, \hat{\delta}_1)^{\top} \textit{\textbf{Y}}
\end{split}\end{equation}
Using Lemma 1 of Lahiri et al. \cite{2015}, it can be seen that:
\begin{equation*}
\frac{1}{MN}[\textit{\textbf{W}}(\hat{\alpha}_1, \hat{\beta}_1, \hat{\gamma}_1, \hat{\delta}_1)^{\top}\textit{\textbf{W}}(\hat{\alpha}_1, \hat{\beta}_1, \hat{\gamma}_1, \hat{\delta}_1)] \rightarrow \frac{1}{2}\textit{\textbf{I}}_{2 \times 2} \textmd{ as } \min\{M, N\} \rightarrow \infty.
\end{equation*}
Substituting this result in \eqref{linear_estimates_1}, we get:
\begin{equation*}\begin{split}
\begin{bmatrix}
\hat{A}_1 \\ \hat{B}_1
\end{bmatrix} & = \frac{2}{MN}\textit{\textbf{W}}(\hat{\alpha}_1, \hat{\beta}_1, \hat{\gamma}_1, \hat{\delta}_1)^{\top} \textit{\textbf{Y}} + o(1)\\
& = \begin{bmatrix}
\frac{2}{MN}\sum\limits_{n=1}^{N}\sum\limits_{m=1}^{M}y(m,n)\cos(\hat{\alpha}_1 m + \hat{\beta}_1 m^2 + \hat{\gamma}_1 n + \hat{\delta}_1 n^2) + o(1) \\
\frac{2}{MN}\sum\limits_{n=1}^{N}\sum\limits_{m=1}^{M}y(m,n)\sin(\hat{\alpha}_1 m + \hat{\beta}_1 m^2 + \hat{\gamma}_1 n + \hat{\delta}_1 n^2) + o(1)
\end{bmatrix}.
\end{split}\end{equation*}
Now consider the estimate $\hat{A}_1$. Using multivariate Taylor series, we expand the function $\cos(\hat{\alpha}_1 m + \hat{\beta}_1 m^2 + \hat{\gamma}_1 n + \hat{\delta}_1 n^2)$ around the point $(\alpha_1^0, \beta_1^0, \gamma_1^0, \delta_1^0)$ and we obtain:
\begin{equation*}\begin{split}
\hat{A}_1 & = \frac{2}{MN}y(m,n)\bigg\{\cos(\alpha_1^0 m + \beta_1^0 m^2 + \gamma_1^0 n + \delta_1^0 n^2) - m (\hat{\alpha}_1 - \alpha_1^0)\sin(\alpha_1^0 m + \beta_1^0 m^2 + \gamma_1^0 n + \delta_1^0 n^2) \\
& - m^2(\hat{\beta}_1 - \beta_1^0)\sin(\alpha_1^0 m + \beta_1^0 m^2 + \gamma_1^0 n + \delta_1^0 n^2) - n (\hat{\gamma}_1 - \gamma_1^0)\sin(\alpha_1^0 m + \beta_1^0 m^2 + \gamma_1^0 n + \delta_1^0 n^2) \\
& - n^2(\hat{\delta}_1 - \delta_1^0)\sin(\alpha_1^0 m + \beta_1^0 m^2 + \gamma_1^0 n + \delta_1^0 n^2) \bigg\} \\
& \rightarrow 2 \times \frac{A_1^0}{2} = A_1^0 \textmd{ almost surely as } \min\{M, N\} \rightarrow \infty,
\end{split}\end{equation*}
using \eqref{multiple_comp_model} and Lemma 1 and Lemma 2 of Lahiri et al. \cite{2015}. Similarly, it can be shown that $\hat{B}_1 \rightarrow B_1^0$ almost surely as $\min\{M, N\} \rightarrow \infty$. \\ \\
For the second component linear parameter estimates, consider:
\begin{equation*}\begin{split}
\begin{bmatrix}
\hat{A}_2 \\ \hat{B}_2
\end{bmatrix} = \begin{bmatrix}
\frac{2}{MN}\sum\limits_{n=1}^{N}\sum\limits_{m=1}^{M}y_1(m,n)\cos(\hat{\alpha}_2 m + \hat{\beta}_2 m^2 + \hat{\gamma}_2 n + \hat{\delta}_2 n^2) + o(1)\\
\frac{2}{MN}\sum\limits_{n=1}^{N}\sum\limits_{m=1}^{M}y_1(m,n)\sin(\hat{\alpha}_2 m + \hat{\beta}_2 m^2 + \hat{\gamma}_2 n + \hat{\delta}_2 n^2) + o(1)
\end{bmatrix}.
\end{split}\end{equation*}
Here, $y_1(m,n)$ is the data obtained at the second stage after eliminating the effect of the first component from the original data as defined in \eqref{second_stage_data}. Using the relationship \eqref{relationship_first_comp_true_estimate} and following the same procedure as for the consistency of $\hat{A}_1$, it can be seen that:
\begin{equation}
\hat{A}_2 \xrightarrow{a.s.} A_2^0 \quad \textmd{and} \quad \hat{B}_2 \xrightarrow{a.s.} B_2^0 \textmd{ as } \min\{M, N\} \rightarrow \infty.
\end{equation} 
It is evident that the result can be easily extended for any $2 \leqslant k \leqslant p$.
\item  \underline{Case 2:} When $k = p+1$:
\begin{equation}\begin{split}\label{A_p+1_B_p+1_estimates}
\begin{bmatrix}
\hat{A}_{p+1} \\ \hat{B}_{p+1}
\end{bmatrix} = \begin{bmatrix}
\frac{2}{MN}\sum\limits_{n=1}^{N}\sum\limits_{m=1}^{M}y_p(m,n)\cos(\hat{\alpha}_{p+1} m + \hat{\beta}_{p+1} m^2 + \hat{\gamma}_{p+1} n + \hat{\delta}_{p+1} n^2) + o(1)\\
\frac{2}{MN}\sum\limits_{n=1}^{N}\sum\limits_{m=1}^{M}y_p(m,n)\sin(\hat{\alpha}_{p+1} m + \hat{\beta}_{p+1} m^2 + \hat{\gamma}_{p+1} n + \hat{\delta}_{p+1} n^2) + o(1)
\end{bmatrix},
\end{split}\end{equation}
where 
\begin{equation*}\begin{split}
y_p(m,n) & = y(m,n) - \sum\limits_{j=1}^{p}\bigg\{\hat{A}_j \cos(\hat{\alpha}_j m + \hat{\beta}_j m^2 + \hat{\gamma}_j n + \hat{\delta}_j n^2) + \hat{B}_j \sin(\hat{\alpha}_j m + \hat{\beta}_j m^2 + \hat{\gamma}_j n + \hat{\delta}_j n^2) \bigg\} \\
& = X(m,n) + o(1), \textmd{ using \eqref{relationship_first_comp_true_estimate} and case 1 results.}
\end{split}\end{equation*}
From here, it is not difficult to see that \eqref{A_p+1_B_p+1_estimates} implies the following result:
\begin{equation*}
\hat{A}_{p+1} \xrightarrow{a.s.} 0 \quad \textmd{and} \quad \hat{B}_{p+1} \xrightarrow{a.s.} 0.
\end{equation*}
This is obtained using Lemma 2 of Lahiri et al. \cite{2015}. It is apparent that the result holds true for any $k > p.$
\end{itemize}\qed \\

\noindent  \textbf{Proof of Theorem \ref{theorem:asymp_dist_multiple_comp_LSE_alphak_betak}:} Consider  \eqref{taylor_series_R_1_MN'_original} and multiply both sides of the equation with the diagonal matrix, $\textit{\textbf{D}}_1^{-1}$:
\begin{equation}\label{taylor_series_R_1_MN'_D)}
(\hat{\boldsymbol{\xi}}_1 - \boldsymbol{\xi}_1^0)\textit{\textbf{D}}_1^{-1} = - \textit{\textbf{R}}_{1, MN}^{(1)'}(\boldsymbol{\xi}_1^0)\textit{\textbf{D}}_1[\textit{\textbf{D}}_1\textit{\textbf{R}}_{1, MN}^{(1)''}(\bar{\boldsymbol{\xi}}_1)\textit{\textbf{D}}_1]^{-1}.
\end{equation}
Computing the elements of the first derivative vector $- \textit{\textbf{R}}_{1, MN}^{(1)'}(\boldsymbol{\xi}_1^0)\textit{\textbf{D}}_1$ and using definition \eqref{definition_R_1_MN} and the preliminary result \eqref{prelim_LSE_first_derivative_multiple_comp_2} (Section \ref{sec:multiple_comp_1D}), we obtain the following result:
\begin{equation}
- \textit{\textbf{R}}_{1, MN}^{(1)'}(\boldsymbol{\xi}_1^0)\textit{\textbf{D}}_1 \xrightarrow{d} \boldsymbol{\mathcal{N}}_2(\textbf{0}, 2\sigma^2 \boldsymbol{\Sigma}_1^{-1}) \textmd{ as } M \rightarrow \infty. \label{distribution_R_1_MN'}
\end{equation}
On combining \eqref{taylor_series_R_1_MN'_D)}, \eqref{distribution_R_1_MN'} and \eqref{limit_R_1_MN''}, we have:
\begin{equation*}
(\hat{\boldsymbol{\xi}}_1 - \boldsymbol{\xi}_1^0)\textit{\textbf{D}}_1^{-1} \xrightarrow{d} \boldsymbol{\mathcal{N}}_2(\textbf{0}, 2\sigma^2 \boldsymbol{\Sigma}_1)
\end{equation*}
This result can be extended for $k = 2$ using the relation \eqref{relationship_first_comp_true_estimate} and following the same argument as above. Similarly, we can continue to extend the result for any $k \leqslant p$.

\end{document}